\documentclass[12pt]{article}
\usepackage[utf8]{inputenc}

\usepackage{amsmath, amssymb, amsfonts}
\usepackage{mathtools}
\usepackage{color}
\usepackage{bbm}
\usepackage{subcaption}

\usepackage{graphicx}
\usepackage{geometry}
\usepackage{booktabs}

\usepackage{amsthm}
\usepackage{aliascnt}

\usepackage{algorithm}
\usepackage{algpseudocode}

\usepackage[hidelinks]{hyperref}
\usepackage[nameinlink]{cleveref}

\usepackage{footmisc}
\usepackage[round,authoryear]{natbib}
\usepackage{authblk}

\theoremstyle{plain}
\newtheorem{theorem}{Theorem}

\newaliascnt{lemma}{theorem}
\newtheorem{lemma}[lemma]{Lemma}
\aliascntresetthe{lemma}

\newaliascnt{proposition}{theorem}
\newtheorem{proposition}[proposition]{Proposition}
\aliascntresetthe{proposition}

\newaliascnt{corollary}{theorem}
\newtheorem{corollary}[corollary]{Corollary}
\aliascntresetthe{corollary}

\theoremstyle{definition}
\newaliascnt{definition}{theorem}
\newtheorem{definition}[definition]{Definition}
\aliascntresetthe{definition}

\theoremstyle{remark}
\newaliascnt{remark}{theorem}
\newtheorem{remark}[remark]{Remark}
\aliascntresetthe{remark}

\crefname{theorem}{theorem}{theorems}
\Crefname{theorem}{Theorem}{Theorems}

\crefname{lemma}{lemma}{lemmas}
\Crefname{lemma}{Lemma}{Lemmas}

\crefname{proposition}{proposition}{propositions}
\Crefname{proposition}{Proposition}{Propositions}

\crefname{corollary}{corollary}{corollaries}
\Crefname{corollary}{Corollary}{Corollaries}

\crefname{definition}{definition}{definitions}
\Crefname{definition}{Definition}{Definitions}

\crefname{remark}{remark}{remarks}
\Crefname{remark}{Remark}{Remarks}

\newcommand{\R}{\mathbb{R}}
\newcommand{\E}{\mathbb{E}}

\newcommand{\Mcal}{\mathcal{M}}

\title{From Arbitrage Removal to Density Extraction: A Model-Free Framework for Short-Dated Options}

\author{
Aaron Wizman\thanks{
CEREMADE, Universit\'e Paris Dauphine--PSL, Pl. du Mar\'echal de Lattre de Tassigny, 75016 Paris, France.
Email: \texttt{aaron.wizman@dauphine.psl.eu}.
}
\qquad
Gabriel Turinici\thanks{
CEREMADE, Universit\'e Paris Dauphine--PSL, Pl. du Mar\'echal de Lattre de Tassigny, 75016 Paris, France. 
Email: \texttt{gabriel.turinici@dauphine.fr}.
}
\qquad
Gregory Merran\thanks{
DRM--Finance, Universit\'e Paris Dauphine--PSL, Pl. du Mar\'echal de Lattre de Tassigny, 75016 Paris, France.
Email: \texttt{gregory.merran@dauphine.eu}.
}
}

\date{May 2026}

\begin{document}

\maketitle
\begin{abstract}
We study risk-neutral density extraction from short-dated option
chains. As expiry approaches, option premia decline and bid--ask
spreads can be large relative to prices, making mid quotes
particularly uninformative. Stale or asynchronous quotes may also
generate potential static arbitrages, rendering standard procedures infeasible or unstable. We develop a model-free pipeline that treats
bid--ask quotes as the primitive market constraint. The pipeline consists of two steps. First, a procedure called ``Arbitrage Removal Iterative Executable Strategy'' (ARIES) filters
executable static arbitrage at quoted bid and ask prices under
market-depth constraints. Second, the ``Smooth Entropic Density EXtraction'' (SEDEx) then recovers the density through a criterion leveraging
smoothness and entropy under bid--ask constraints. We test the pipeline on synthetic Heston panels and short-dated SPX option data, sampled from a few hours to one week before expiry. Computation is fast and returns robust densities across various market conditions, including
scheduled macroeconomic announcements. As an empirical application, we use the recovered densities to construct short dated implied-volatility
smiles.
\end{abstract}

\noindent\textbf{Keywords:} 
risk-neutral density;
short-dated options;
0DTE options;
static arbitrage;
bid--ask quotes;
implied volatility smile.

\vspace{0.5em}

\noindent\textbf{JEL Classification:} C61, C63, G12, G13.

\section{Introduction}
\label{sec:introduction}
The risk-neutral density (hereafter RND) is the market-implied
distribution of an asset's future price. It is a central object in derivatives research, with applications to option pricing, implied-volatility surface construction, tail-risk management \citep{ait2000nonparametric}, and the study of market beliefs and preferences \citep{beber2006effect}, among others. Recovering the RND from observed option quotes is a
nontrivial inverse problem. Market data are discrete and finite in
strike coverage, may be noisy or affected by arbitrage violations,
and are observed through bid--ask quotes rather than exact prices.

The difficulty is most acute for short-dated options, defined here
as contracts with at most one week to expiry; we use $n$DTE to
denote contracts expiring exactly $n$ days after the trading date.
According to market statistics provided by Cboe Global
Markets,\footnote{We thank Cboe Global Markets for providing the
market statistics reported in this paragraph.} U.S. listed options
averaged approximately $68.6$ million contracts per day in the first
quarter of $2026$, with 0DTE volume alone reaching $18.55$ million.
Contracts expiring within one week now exceed those with longer
maturities. The phenomenon is most visible in S\&P~500 index
options: SPX averaged $4.9$ million contracts per day, with
roughly $58\%$ in 0DTE. Originally confined to a handful of index
products, short-dated cycles have since spread across asset classes,
including single names, rates, metals, crypto-linked products, and
selected commodities. They are used by both retail and institutional
investors, with applications ranging from short-horizon volatility
exposure and intraday hedging to low-premium directional views and
event-risk trading.

The recent literature on short-dated options highlights the
specificity of the very short tenor and motivates studying RND
extraction in this regime as a problem of its own.
\citet{andersen2017short} show that weekly S\&P~500 options carry
distinctive information about short-horizon volatility and jump risk,
and develop a semi-nonparametric procedure tailored to that setting.
\citet{todorov2024intraday} exploit the ratio of risk-neutral
variance measures from 0DTE and 1DTE S\&P~500 options to estimate
intraday volatility patterns nonparametrically. Their identification
rests on a short-time asymptotic regime in which the stochastic
component of volatility cancels. \citet{bandi20230dte} develop
local-in-time pricing expansions designed for 0DTE options, with
closed-form skewness and kurtosis adjustments, and document gains in
pricing and hedging.

Short-dated options are also where the standard treatment of option
data becomes most fragile. As expiry approaches, premia decline, and
for deep out-of-the-money contracts the bid--ask spread is of the
same order of magnitude as the premium itself. Distinct strikes may
then share identical mid quotes, making the corresponding
observations uninformative. At very short maturities, the bid--ask
interval is therefore the natural primitive market observation. Quoted
panels may also contain potential static arbitrage opportunities arising from
asynchronous or stale quotes, which can make the subsequent inverse
problem infeasible or numerically unstable.

These observations motivate our approach whose main goal is to  recover
risk-neutral densities from short-dated option chains under bid--ask
pricing constraints, after filtering executable static arbitrage. Our method
has two autonomous components. First, the ``Arbitrage Removal Iterative Executable Strategy'' (ARIES) is a filtering procedure that detects and removes static arbitrage opportunities from
bid--ask quotes. It operates directly on executable quotes subject to
market-depth constraints. When multiple quotes are simultaneously
candidates for removal, ARIES selects the one with the smallest
available size, a criterion that retains as much market information
as possible. Second, the ``Smooth
Entropic Density EXtraction'' (SEDEx) operates on the cleaned panel. It
recovers the RND by maximizing entropy and enforcing smoothness
under the bid--ask constraints. The two procedures are conceptually independent
and can be used separately. When combined, they yield a natural
pipeline for short-dated option data. Both operate on a
single-maturity option cross-section; the joint treatment of the
option surface across maturities lies beyond the scope of this
paper.

The proposed methodology connects to two parts of the literature.

The first concerns no-arbitrage constraints on option panels and the
operational treatment of arbitrage in observed data. In this context, \citet{carr2005note} provide sufficient conditions for the absence of
static arbitrage on regular call price grids, expressed through
adjacent call-spread, butterfly, and calendar inequalities.
\citet{davis2007range} extend the analysis to finite option panels.
They clarify the distinction between consistency with an
arbitrage-free martingale model, model-independent arbitrage, and
weak arbitrage. \citet{Cousot2006} is the closest theoretical
reference for our setting. He derives necessary inequalities for the
absence of arbitrage formulated directly on bid and ask
quotes, and shows that under these inequalities there exists a
martingale measure whose call prices lie inside the observed
intervals. \citet{cohen2020detecting} formulate arbitrage repair as a linear program. No-arbitrage
relations are imposed as hard constraints on a single repaired price
vector, while bid--ask spreads enter as soft bounds in the objective.
The repaired vector is arbitrage-free by construction, but may lie
outside the original bid--ask intervals. ARIES departs from this repair paradigm by filtering executable violations at bid and ask prices, subject to market-depth constraints.

The second part is the literature on RND extraction. This literature
splits into two routes. Price-first and volatility-first methods fit a
call price function or implied volatility surface before applying the
seminal identity of \citet{BreedenLitzenberger1978}. Examples include
parametric volatility-smile smoothing \citep{shimko1993} and
nonparametric kernel estimation of the call price function
\citep{ait1998nonparametric}. The latter highlights the slower
convergence of derivatives relative to the estimated function itself,
the so-called curse of differentiation. By contrast, density-first
methods target the terminal distribution directly.

Among density-first methods, parametric and semi-parametric approaches
impose structure directly on the terminal distribution.
\citet{jarrow1982approximate} approximate the risk-neutral
distribution by an expansion around a reference lognormal, with
corrections in higher moments. \citet{bahra1997implied}
represents the RND as a mixture of lognormal components fitted to the
observed cross-section. The flexibility of these specifications is
controlled by a finite set of parameters rather than by an
unrestricted functional class. Parametric models such as
\citet{BlackScholes1973} and \citet{heston1993closed} provide a
related benchmark in which the RND is induced by a specified
risk-neutral dynamic.

SEDEx belongs to the class of fully nonparametric density-first
methods, which relax these restrictions on the terminal distribution.
Among such approaches, \citet{jackwerth1996recovering} select
terminal probabilities by minimizing a curvature criterion.
SEDEx also regularizes the recovered density, but rather includes a
first-order smoothness penalty: the $L^{2}$ norm of its first
derivative. This penalty controls local variation in the density and
acts as a first-order Tikhonov regularization of the discretized
inverse problem. Such regularization is necessary because the pricing operator that maps a high-dimensional
vector of terminal probabilities into a smaller set of option-price
constraints is rank-deficient by construction \citep{hansen1998rank}.

The criterion also includes an entropic component, in the spirit of
the maximum-entropy approach of \citet{buchen1996maximum}. Entropy is
therefore used as a complementary selection principle, rather than as
the sole objective. \citet{bondarenko2003estimation} provides another
nonparametric route by recovering the RND within a positive
convolution class.

A final distinction concerns the treatment of quotes. In most
applications, bid--ask intervals are collapsed into mid prices.
\citet{rubinstein1994implied} is a notable exception: he recovers
terminal probabilities under bid--ask constraints on a recombining
binomial tree and around a lognormal prior. SEDEx retains this
bid--ask-aware formulation, but recovers a continuous density without
a tree construction or a prior density.

As an empirical application of the recovered densities, we construct
implied-volatility smiles and compare them to SVI \citep{Gatheral2014}. The SVI parametrization has become a standard market reference for its
parsimony and flexibility, and works well in most standard
settings, except for ultra short-dated
expirations. This motivates a method specifically designed for
the short-dated regime.

Against this background, the paper makes two methodological
contributions. First, it introduces ARIES, an executable
static-arbitrage filter that treats bid--ask quotes as the primitive
market constraint. Second, it introduces SEDEx, a model-free RND recovery
method based on a hybrid criterion encouraging smoothness and maximizing entropy under bid--ask
constraints. We evaluate the resulting pipeline on synthetic and SPX
option data, showing that it is fast and stable across contrasting
market conditions, including scheduled events, and across short-dated
horizons from a few hours to one week before expiry. As an empirical
application, we use the recovered densities to construct
implied-volatility smiles in this short-dated setting.

The remainder of the paper is organized as follows.
Section~\ref{sec:Notation and Market Framework} introduces the market
framework and the two methodologies. Section~\ref{sec:theoretical_results}
establishes the theoretical guarantees for ARIES and SEDEx.
Section~\ref{sec:numerical_results} presents the numerical results on
synthetic and market data, including the implied-volatility smile
application. Section~\ref{sec:conclusion} concludes. Detailed proofs and additional information are deferred to the
appendices and supplementary material. 

\section{Market Framework and Methodology}
\label{sec:Notation and Market Framework}
\subsection{Market Setup}
We consider a market $\Mcal$  with no transaction costs that contains $N$ European call options with strikes $0<K_1 < \dots < K_N$, all expiring at time $T$. Their best bid and ask prices at $t=0$ are denoted by $C_i^{\text{bid}}$ and $C_i^{\text{ask}}$ satisfying $C_i^{\text{bid}} \leq C_i^{\text{ask}}$, with corresponding finite bid and ask sizes $Q_i^{\text{bid}}$ and $Q_i^{\text{ask}}$. The continuously compounded risk-free rate and dividend yield are $r$ and $q$, both assumed constant and deterministic.

The underlying asset is denoted by $S=(S_t)_{t \in [0,T]}$. In particular its price at $t=0$ is $S_0$; to account for possible dividends distributed before $T$ we discount its price by $q$ and thus we can assume that one unit of $S$ costs $S_0 e^{-qT}$ at time $t=0$. Note that we can see $S$ as a call with strike $K_0=0$.\\

We formalize the class of portfolios constructible from instruments in $\mathcal{M}$.
A portfolio is defined by a position vector
$w=(w_{K_0},\dots,w_{K_N})\in\mathbb{R}^{N+1}$, where $w_{K_i}$ denotes the number of call options held at strike $K_i$ for $i=0,\dots,N$, with $K_0=0$ corresponding to the underlying asset. Portfolios are assumed to be \emph{static}, with positions chosen at time~0 and held until maturity~$T$; short positions are permitted. The resulting terminal payoff is
\begin{equation}
V_T(w)
:=\sum_{i=0}^N w_{K_i}\,(S_T-K_i)_{+}.
\label{eq:VT-def}
\end{equation}

Because options are traded at bid and ask quotes, the time--$0$ cost of a given net position $w$
depends on its execution.

\begin{definition}[Execution cost at time $0$]
\label{def:pi0}
For any execution quantities $q^{\mathrm{ask}},q^{\mathrm{bid}}\in\mathbb{R}_+^{N}$ and an underlying position
$u\in\mathbb{R}$ satisfying the depth constraints
\begin{equation}
0 \le q_i^{\mathrm{bid}} \le Q_i^{\mathrm{bid}},
\qquad
0 \le q_i^{\mathrm{ask}} \le Q_i^{\mathrm{ask}},
\qquad i=1,\dots,N,
\label{eq:size-constraints-q-market}
\end{equation}
we define the associated time--$0$ execution cost by
\begin{equation}
\pi_0(q^{\mathrm{ask}}, q^{\mathrm{bid}}, u)
:=
\sum_{i=1}^{N} \bigl( q_i^{\mathrm{ask}} C_i^{\mathrm{ask}} - q_i^{\mathrm{bid}} C_i^{\mathrm{bid}} \bigr)
+ u\,S_0 e^{-qT}.
\label{eq:pi0-market}
\end{equation}
\end{definition}

\begin{remark}[Induced net weights]
\label{rem:w-from-q}
Any triple $(q^{\mathrm{ask}},q^{\mathrm{bid}},u)$ induces net weights $w=(w_{K_0},\dots,w_{K_N})$ via
\begin{equation}
w_{K_0}:=u,
\qquad
w_{K_i}:=q_i^{\mathrm{ask}}-q_i^{\mathrm{bid}},
\qquad i=1,\dots,N.
\label{eq:w-from-q-market}
\end{equation}
\end{remark}

\begin{definition}[Time--$0$ value]
\label{def:V0}
Assume that the net weights satisfy the depth constraints
\begin{equation}
-\,Q_i^{\mathrm{bid}} \le w_{K_i} \le Q_i^{\mathrm{ask}},
\qquad i=1,\dots,N.
\label{eq:w-box}
\end{equation}
We define the time--$0$ value of the static portfolio $w=(w_{K_0},\dots,w_{K_N})$ by
\begin{equation}
V_0(w)
:=
w_{K_0}\,S_0 e^{-qT}
+\sum_{i=1}^{N}\Bigl( w_{K_i}^{+}\,C_i^{\mathrm{ask}} - w_{K_i}^{-}\,C_i^{\mathrm{bid}} \Bigr),
\label{eq:V0-def}
\end{equation}
where $x^{+}=\max(x,0)$ and $x^{-}=\max(-x,0)$.
\end{definition}
\begin{remark}[Minimal-cost implementation]
\label{rem:V0-execution}
For a given net position $w$, the pair of execution quantities $(q^{\mathrm{ask}},q^{\mathrm{bid}})$ satisfying
\eqref{eq:w-from-q-market} and \eqref{eq:size-constraints-q-market} is not unique. The value $V_0(w)$ in
\Cref{def:V0} corresponds to the least costly implementation in the sense of \Cref{def:pi0}.
Equivalently, $V_0(w)$ is attained by the canonical
choice $q_i^{\mathrm{ask}}=w_{K_i}^{+}$ and $q_i^{\mathrm{bid}}=w_{K_i}^{-}$. In particular, any trade 
at the same strike
with $q_i^{\mathrm{ask}}>0$, $q_i^{\mathrm{bid}}>0$,
$q_i^{\mathrm{ask}}-q_i^{\mathrm{bid}}= w_{K_i}$,
will leave $w_{K_i}$ unchanged while increasing the cost by
$\min(q_i^{\mathrm{ask}},q_i^{\mathrm{bid}})\,(C_i^{\mathrm{ask}}-C_i^{\mathrm{bid}})\ge 0$.\\
\end{remark}
\noindent 
In the following sections, we employ two option strategies defined as follows.
\begin{definition}[Vertical spread and butterfly on a non-uniform strike grid]
\ \vspace{-0.5cm}\\
\begin{itemize}
\item \textbf{Vertical spread.}
For any indices \(0\le i<j\le N\),
the vertical spread with strikes \((K_i,K_j)\) is the portfolio
 with weights $\mathrm{VS}_{i,j}$:
\begin{align}
(\mathrm{VS}_{i,j})_{K_i} = 1, (\mathrm{VS}_{i,j})_{K_j} = -1,\qquad (\mathrm{VS}_{i,j})_{K_\ell}=0\ \text{for } \ell\notin\{i,j\}.
\end{align}
Note that the terminal payoff  of such a spread is:
\begin{align}
V_T(\mathrm{VS}_{i,j})=(S_T-K_i)_+-(S_T-K_j)_+ \ge 0.  
\end{align}

\item \textbf{Butterfly spread.}
For $a,b \in \{0, ..., N\}$, $a\neq b$ denote
\begin{align}
\alpha_{a,b}:=\frac{1}{K_b-K_a}.
\label{eq:definition_alphaij}
\end{align}
Consider indices \(0\le i<j<k\le N\). The butterfly spread supported on \((K_i,K_j,K_k)\) is the portfolio with weights 
$\mathrm{BF}_{i,j,k}$:
\begin{equation}
\begin{aligned}
(\mathrm{BF}_{i,j,k})_{K_i} &= \alpha_{i,j}, \\
(\mathrm{BF}_{i,j,k})_{K_j} &= -(\alpha_{i,j} + \alpha_{j,k}), \\
(\mathrm{BF}_{i,j,k})_{K_k} &= \alpha_{j,k},
\end{aligned}
\qquad
(\mathrm{BF}_{i,j,k})_{K_\ell} = 0 \ \text{for } \ell \notin \{i,j,k\}.
\end{equation}
Its terminal payoff is
\begin{equation}
\begin{aligned}
V_T(\mathrm{BF}_{i,j,k})
&= \alpha_{i,j}(S_T - K_i)_+
- (\alpha_{i,j} + \alpha_{j,k})(S_T - K_j)_+ \\
&\quad + \alpha_{j,k}(S_T - K_k)_+ \ge 0.
\end{aligned}
\end{equation}
\end{itemize}
\label{def:vs-bf}
\end{definition}
\subsection{Arbitrage Definitions}
We next define the notion of arbitrage for static portfolios built from the instruments available in the market $\Mcal$, see comments in \Cref{subsec:cousot_cond} for further motivation.
\begin{definition}[Weak Arbitrage, Strong Arbitrage, No Arbitrage]
\label{def:arbitrage}
\ \\ \vspace{-0.5cm}
\begin{enumerate}
    \item Let $(\Omega,\mathcal{F},\mathbb{P})$ be a probability space. We say that a static portfolio $V(w)$ together with a cash position $c$ generates a \emph{weak arbitrage} (with respect to $\mathbb{P}$) if
\begin{equation}
\begin{cases}
& V_{0}(w) + c = 0, \\ 
&  \mathbb{P}\left( \left\{  \boldsymbol{\omega} \in \Omega : V_T(w)(\boldsymbol{\omega})+ c e^{rT} \geq 0\right\} \right)=1,\\ 
& \mathbb{P}\left( \left\{  \boldsymbol{\omega} \in \Omega : V_T(w)(\boldsymbol{\omega})+ c e^{rT}>0\right\} \right) >0 .
\end{cases}
\label{eq:weak arbitrage}
\end{equation}

This notion depends on the choice of the measure $\mathbb{P}$.
\item 
A static portfolio with a cash position $c$ generates a \emph{strong arbitrage} if
\begin{equation} \label{eq:strong-arbitrage}V_{0}(w) + c=0,\ \  \text{ for all } \boldsymbol{\omega}\in\Omega: V_{T}(w)(\boldsymbol{\omega})+c e^{rT} > 0.\end{equation}
This yields a sure profit and is therefore an arbitrage under every probability measure on $(\Omega,\mathcal{F})$.
\item We say that no-arbitrage holds for the market $\mathcal M$
if for any probability measure $\mathbb P$ on
$(\mathbb R_+,\mathcal B(\mathbb R_+))$ that is absolutely continuous with
respect to the Lebesgue measure, no weak arbitrage exists.
\label{item:non_arb}
\end{enumerate}
\end{definition}
\noindent
We also accept the following assumption.
\begin{align}{\bf \textbf{(Hyp-Spot)}}\quad &
\parbox[t]{0.7\linewidth}{
The underlying asset can be traded frictionlessly, with zero bid--ask spread and no position limits.
Its spot price is consistent with the absence of static arbitrage relative to the option market.}
\end{align}

\begin{remark}
This standard assumption reflects the fact that spot markets typically display substantially higher liquidity and more efficient price formation than derivatives markets. 
As a result, potential violations of no-arbitrage conditions are attributed exclusively to the option quotes, which may incorporate genuine inconsistencies or residual data imperfections such as asynchronicity.   
\label{rem:arb_consistency}
\end{remark}
We formalize this by introducing a fictitious call option with strike $K_0=0$ whose bid and ask prices coincide with the discounted forward price:
\begin{equation}
    C_0^{\mathrm{bid}} = C_0^{\mathrm{ask}} = e^{-rT} F_0^T,
    \label{eq:call-strike-zero}
\end{equation}
where $F_0^T := S_0 e^{(r-q)T}$ denotes the forward price of the underlying asset for maturity $T$.

\subsection{Methodology}
\subsubsection{Optimization Framework for Arbitrage Detection}
\label{subsubsec:ARIES}
This subsection introduces the mathematical framework used to detect arbitrage opportunities in observed option quotes, excluding those related to the term structure. 
We formulate the problem as an optimization program over static portfolios. The admissible portfolios must satisfy bid–ask constraints as well as market depth restrictions. The resulting optimization problem allows us to identify violations of the arbitrage conditions introduced in \Cref{def:arbitrage}.

The theoretical analysis is presented in \Cref{subsec:arbitrage_removal}, while the numerical implementation is discussed in \Cref{sec:numerical_results}.\\

We consider static portfolios formed from the underlying, the quoted call options, and a cash account accruing at the
risk-free rate $r$. For each strike $K_i$, let $q_i^{\mathrm{ask}}$ (respectively $q_i^{\mathrm{bid}}$) denote the number of
contracts bought at the ask (respectively sold at the bid), subject to the depth constraints \eqref{eq:size-constraints-q-market}.
Let $u,c\in\mathbb{R}$ denote the positions in the underlying and the cash account, respectively.

The time--$0$ execution cost is $\pi_0(q^{\mathrm{ask}},q^{\mathrm{bid}},u)$ as defined in \Cref{def:pi0}, and we impose the
zero-cost condition
\begin{equation}
\pi_0(q^{\mathrm{ask}}, q^{\mathrm{bid}}, u) + c = 0.
\label{eq:C0zero}
\end{equation}
The induced net weights are $w_{K_0}=u$ and, for $i=1,\dots,N$,
\begin{equation}
w_{K_i}:=q_i^{\mathrm{ask}}-q_i^{\mathrm{bid}}.
\label{eq:w-from-q-P1}
\end{equation}
For a realized terminal spot level $s\ge 0$, the total terminal payoff (including cash) is
\begin{equation}
\Pi(q^{\mathrm{ask}},q^{\mathrm{bid}},u,c,s)
:= V_T(w)(s) + c e^{rT}
= \sum_{i=1}^{N} (q_i^{\mathrm{ask}} - q_i^{\mathrm{bid}}) (s - K_i)_{+}
    + u s + c e^{rT}.
\label{eq:Pi}
\end{equation}

We therefore consider the following max–min optimization problem:
\begin{equation}
\label{eq:P1}
\begin{aligned}
\text{(P1)}\qquad
&
\max_{\substack{
    q^{\mathrm{ask}},\, q^{\mathrm{bid}},\, u,\, c \\[4pt]
    \pi_0(q^{\mathrm{ask}}, q^{\mathrm{bid}}, u) \,+\, c \;=\; 0 \\[2pt]
    0 \;\le\; q^{\mathrm{bid}} \;\le\; Q^{\mathrm{bid}} \,,\;\; 0 \;\le\; q^{\mathrm{ask}} \;\le\; Q^{\mathrm{ask}}
}}
\min_{s \ge 0}\;\Pi(q^{\mathrm{ask}},q^{\mathrm{bid}},u,c,s).
\end{aligned}
\end{equation}

Although the Problem \eqref{eq:P1} is formulated in execution variables $(q^{\mathrm{ask}},q^{\mathrm{bid}},u,c)$, any feasible
solution induces a static portfolio $w$ through \eqref{eq:w-from-q-P1}. By \Cref{rem:V0-execution}, an optimal solution necessarily satisfies the canonical implementation $q_i^{\mathrm{ask}}=w_{K_i}^{+}$ and $q_i^{\mathrm{bid}}=w_{K_i}^{-}$, so that the budget constraint $\pi_0(q^{\mathrm{ask}},q^{\mathrm{bid}},u)+c=0$ is equivalent to $V_0(w)+c=0$.\\
The value of Problem \eqref{eq:P1} determines whether an arbitrage opportunity exists among admissible portfolios initiated at zero cost. Depending on the payoff $\Pi$ across all possible future states $\mathbf{s}$, two distinct cases arise. If the optimal function $\Pi$ is not identically zero, the strategy constitutes a \emph{strong arbitrage} when its minimum is strictly positive, and a \emph{weak arbitrage} when it is null. Conversely, if $\Pi$ is identically zero, the \emph{absence of any arbitrage opportunity} is guaranteed. \\

In line with the no-arbitrage concept in \Cref{def:arbitrage} \Cref{item:non_arb}, we will eliminate all weak arbitrages upfront via the ARIES procedure detailed in Algorithm~\ref{alg:arb_filter}.

ARIES identifies static arbitrage portfolios supported by available market depth by repeatedly solving the executable arbitrage detection problem \eqref{eq:P3prime}. Each detected arbitrage triggers the removal of the option associated with the smallest saturated bid or ask size, iterating until all arbitrages are eliminated.

This will stabilize the subsequent inverse problem and allow us to extract a smoother, more robust risk-neutral density from market data.

\subsubsection{Hybrid Criterion Framework for Risk-Neutral Density Extraction}
\label{subsub:methodo_rnd}
Following the application of ARIES (Algorithm~\ref{alg:arb_filter}), we assume that the market $\Mcal$ is free of arbitrage opportunities. To extract a risk-neutral density, we formulate our problem on a bounded support $\mathcal{X} := (0,b)$ and enforce bid--ask intervals as inequality constraints. The upper boundary $b$ has to be sufficiently large relative to the observed strike range; its empirical choice is specified in \Cref{eq:upper-bound-support}.

\begin{remark}[On the bounded domain \(\mathcal{X}\)]
\label{rem:bounded-domain}
Beyond the empirical observation that very short-term return distributions are sharply concentrated, the assumption of a bounded domain is consistent with market mechanisms that limit extreme price moves over short horizons.\footnote{A first example is provided by the circuit breakers on the S\&P~500, which are applied on the downside and defined according to three levels. Levels 1 and 2 can be triggered once between 9:30 a.m. and 3:25 p.m.; level 1 corresponds to a 7\% drop relative to the previous close and leads to a trading halt of at least 15 minutes; level 2 corresponds to a 13\% drop, again followed by a minimum 15-minute halt; level 3 is triggered by a 20\% drop and stops trading for the rest of the session.

A second mechanism is the \emph{Limit Up--Limit Down} (LULD) scheme, applied to S\&P~500 constituents. Introduced after the 2010 flash crash, it aims at limiting extreme and abrupt price moves in individual stocks. It relies on dynamic price bands (upper and lower) around a reference price computed as the average of trades over the last five minutes. When the price exits this band (typically \(\pm 5\%\)), a limit state is triggered; if it persists for 15 seconds, a 5-minute trading pause is imposed.

These constraints mechanically cap instantaneous price variations, and therefore support the assumption of a bounded domain \(\mathcal{X}\) for the densities under consideration.}
\end{remark}

\begin{definition}[Admissible densities]
\label{def:admissible-densities}
We define the admissible set 
\begin{equation}
 \mathcal{A}
 :=
 \left\{
  f \in H^{1}(\mathcal{X})
  \;\left|\;
  \begin{aligned}
   & f(x) \ge 0 \text{ a.e. on } \mathcal{X},  \int_{\mathcal{X}} f(x)\,dx = 1, \int_{\mathcal{X}} x f(x)\,dx = F_{0}^{T}, \\
   & e^{-rT} \int_{\mathcal{X}} (x - K_{i})_{+} f(x)\,dx
     \in [C_{i}^{\mathrm{bid}}, C_{i}^{\mathrm{ask}}],
     \quad i = 1,\dots,N
  \end{aligned}\right.
 \right\}.
 \label{eq:A-admissible-set}
\end{equation}
In other terms, admissible functions are \(H^{1}(\mathcal{X})\)\footnote{We recall that  $H^{1}(\mathcal{X})$ is the Sobolev space
$H^{1}(\mathcal{X}) = \{f \in L^{2}(\mathcal{X}) : f' \in L^{2}(\mathcal{X})\}$, 
where $f'$ is understood in the weak sense.} probability densities
whose first moment matches the forward price and whose associated call prices
remain within the market bid--ask intervals.
\end{definition}
We fix $\lambda_1>0$ and $\lambda_2>0$ and introduce the continuous
hybrid $L^2$--entropy objective on the admissible set $\mathcal A$.

\begin{definition}[Continuous objective function]
Let $H:\mathcal A\to\R$ denote the functional
\begin{equation}
  H(f)
  := \lambda_1 \lVert f'\rVert_{L^2(\mathcal{X})}^2
     + \lambda_2 \int_\mathcal{X} f(x)\ln f(x)\,dx
  = \lambda_1 \lVert f'\rVert_{L^2(\mathcal{X})}^2
    - \lambda_2 S(f),
  \label{eq:def-H}
\end{equation}
where $S(f) := -\int_\mathcal{X} f(x)\ln f(x)\,dx$ denotes the (differential)  Shannon entropy
of $f$, with the usual convention $0\ln 0 := 0$.
\end{definition}

While $H(f)$ provides a theoretical characterization of the risk-neutral density
(see Appendix~\ref{subsec:cont_framework}), numerical calibration requires a discrete
formulation. We present the numerical implementation and establish existence and
uniqueness of the resulting discrete solution in
Section~\ref{subsec:exist-unique-disc}. 

\begin{definition}[Grid and decision variable]
\label{def:grid-decision-variable}
Let $M\in\mathbb N$ and set $\Delta s := b/M$. We consider a uniform grid of $M$ points on $\mathcal{X}$ compatible with the observed strikes, in the sense that each $K_j$, $j=1,\dots,N$, belongs to the grid. The grid points are defined by
\begin{equation}
  s_i := i\Delta s = b\,\frac{i}{M}, \qquad i=1,\ldots,M.
  \label{eq:grid_def}
\end{equation}
The decision variable is $p := (p_1,\ldots,p_M)^\top \in \R^M$.
\end{definition}

\begin{definition}[Discrete admissible set]
\label{def:discrete-admissible-set}
We define the discrete admissible set $\mathcal A_M \subset \R^M$ as the set of
vectors $p$ satisfying:
\begin{enumerate}
  \item \emph{Simplex constraint:}
  \begin{equation}
    \Sigma_M := \Bigl\{p\in\R^M:\ p_i\ge 0,\ \sum_{i=1}^M p_i = 1\Bigr\},
    \label{eq:simplex}
  \end{equation}
  \item \emph{Forward constraint:}
  \begin{equation}
    \sum_{i=1}^M s_i p_i = F_0^T,
    \label{eq:discrete-forward}
  \end{equation}
  \item \emph{Call price constraints:} for each $j=1,\ldots,N$,
  \begin{equation}
    e^{-rT}\sum_{i=1}^M (s_i-K_j)_+\,p_i \in
    \bigl[C_j^{\mathrm{bid}},\,C_j^{\mathrm{ask}}\bigr].
    \label{eq:discrete-call}
  \end{equation}
\end{enumerate}
\end{definition}

\noindent We define the first-order difference matrix $D^{(1)} \in \R^{M-1 \times M}$ as the linear operator mapping any vector $v = (v_1, \dots, v_{M})^\top \in \R^{M}$ to its forward differences:
\begin{equation}
    (D^{(1)}v)_i := v_{i+1} - v_i, \qquad i = 1, \dots, M-1.
    \label{eq:D1}
\end{equation}
\begin{definition}[Discrete objective function]
\label{def:discrete-objective}
 The discrete entropy is:
\begin{equation}
  S_M(p) := -\sum_{i=1}^M p_i \ln p_i,
  \label{eq:discrete-entropy}
\end{equation}
with the continuous extension at $p_i=0$.
We next define the discrete functional
$H_M:(\R_+)^M\to\R$ by
\begin{equation}
  H_M(p)
  := \frac{\lambda_1}{(\Delta s)^3}\|D^{(1)}p\|_2^2 - \lambda_2 S_M(p)
  = \frac{\lambda_1}{(\Delta s)^3}\sum_{i=1}^{M-1}(p_{i+1}-p_i)^2
     + \lambda_2\sum_{i=1}^M p_i\ln p_i.
  \label{eq:def-HM}
\end{equation}
\end{definition}

In summary, the SEDEx procedure amounts to minimizing $H_M(\cdot)$ over $\mathcal A_M$ introduced in \Cref{def:discrete-admissible-set}. The next section provides the theoretical guarantees for this procedure.

\section{Theoretical Results}
\label{sec:theoretical_results}
\subsection{Arbitrage Filtering Procedure}
\label{subsec:arbitrage_removal}

We now derive a tractable formulation of Problem \eqref{eq:P1} and study the properties of its solution.
To this end, we introduce a slack variable $\varepsilon'$ representing the minimal payoff across states, which can be interpreted as an arbitrage margin. This allows us to reformulate the max–min problem \eqref{eq:P1} as follows:
\begin{equation}
\label{eq:P2}
\begin{aligned}
\text{(P2)}\qquad
&
\max_{\substack{
    q^{\mathrm{ask}},\, q^{\mathrm{bid}},\, u,\, c,\, \varepsilon' \\[4pt]
    \pi_0(q^{\mathrm{ask}}, q^{\mathrm{bid}}, u) \,+\, c \;=\; 0 \\[2pt]
    \Pi(q^{\mathrm{ask}},q^{\mathrm{bid}},u,c,s) \;\ge\; \varepsilon' \,,\;\; \forall\, s \;\ge\; 0 \\[2pt]
    0 \;\le\; q^{\mathrm{bid}} \;\le\; Q^{\mathrm{bid}} \,,\;\; 0 \;\le\; q^{\mathrm{ask}} \;\le\; Q^{\mathrm{ask}}
}}
\quad \varepsilon'.
\end{aligned}
\end{equation}

Substituting \eqref{eq:C0zero} into \eqref{eq:Pi} and renaming $\varepsilon'= \varepsilon e^{rT} $, we obtain the equivalent payoff constraints
\begin{equation}
\sum_{i=1}^{N} (q_i^{\text{ask}} - q_i^{\text{bid}})(s - K_i)_{+}
    + u s - \big( \pi_0(q^{\text{ask}}, q^{\text{bid}}, u) + \varepsilon \big) e^{rT} \ge 0, 
    \quad \forall s \ge 0.
\label{eq:P2constraint}
\end{equation}

\noindent
By aggregating the initial cost and the arbitrage margin into a single term denoted $\alpha := \pi_0(q^{\text{ask}}, q^{\text{bid}}, u) + \varepsilon$ we can rewrite the problem as follows:
\begin{equation}
\label{eq:P3}
\begin{aligned}
& \text{(P3)} \quad \max_{q^{\mathrm{ask}},\, q^{\mathrm{bid}},\, u,\, \alpha} \quad 
\sum_{i=1}^{N} \big( q_i^{\mathrm{bid}} C_i^{\mathrm{bid}} - q_i^{\mathrm{ask}} C_i^{\mathrm{ask}} \big) \,-\, u S_0 e^{-qT} \,+\, \alpha \\
& \mathrm{s.t.} \quad \sum_{i=1}^{N} (q_i^{\mathrm{ask}} - q_i^{\mathrm{bid}})(s - K_i)_{+} \,+\, u s \,-\, \alpha e^{rT} \;\ge\; 0 \,,\;\; \forall\, s \;\ge\; 0 \\
& \phantom{\mathrm{s.t.} \quad} 0 \;\le\; q^{\mathrm{bid}} \;\le\; Q^{\mathrm{bid}} \,,\;\; 0 \;\le\; q^{\mathrm{ask}} \;\le\; Q^{\mathrm{ask}}.
\end{aligned}
\end{equation}

\vspace{12pt}
\noindent
For notational convenience, we collect the objective and payoff expressions into the following reduced-form quantities:
\begin{align}
\tilde{\pi}_0(q^{\mathrm{ask}}, q^{\mathrm{bid}}, u, \alpha)
&:= \sum_{i=1}^{N} \big(q_i^{\mathrm{ask}} C_i^{\mathrm{ask}} - q_i^{\mathrm{bid}} C_i^{\mathrm{bid}}\big)
+ u S_0 e^{-qT} - \alpha, \label{eq:C0prime}\\[4pt]
\tilde{\Pi}(q^{\mathrm{ask}}, q^{\mathrm{bid}}, u, \alpha, s)
&:= \sum_{i=1}^{N} (q_i^{\mathrm{ask}} - q_i^{\mathrm{bid}})(s - K_i)_{+}
+ u s - \alpha e^{rT}. \label{eq:Piprime}
\end{align}

Then \eqref{eq:P3} is equivalent to:
\begin{equation}
\label{eq:P3prime}
\begin{aligned}
\text{(P3$^{\prime}$)}\qquad
&
\max_{\substack{
    q^{\mathrm{ask}},\, q^{\mathrm{bid}},\, u,\, \alpha \\[4pt]
    \tilde{\Pi}(q^{\mathrm{ask}}, q^{\mathrm{bid}}, u, \alpha, s) \;\ge\; 0 \,,\;\; \forall\, s \;\ge\; 0 \\[2pt]
    0 \;\le\; q^{\mathrm{bid}} \;\le\; Q^{\mathrm{bid}} \,,\;\; 0 \;\le\; q^{\mathrm{ask}} \;\le\; Q^{\mathrm{ask}}
}}
\quad
-\,\tilde{\pi}_0(q^{\mathrm{ask}}, q^{\mathrm{bid}}, u, \alpha).
\end{aligned}
\end{equation}
\noindent
Next, we reduce the problem to a finite set of constraints, yielding a tractable linear program. Since $\tilde{\Pi}$ is piecewise linear in $s$, its extrema occur at breakpoints and boundaries, thus $\tilde{\Pi}(...,s) \geq 0$ for all $s \geq 0$ is equivalent to:
\begin{equation}
\tilde{\Pi}|_{s=0}\ge 0, \qquad 
\tilde{\Pi}|_{s=K_i}\ge 0 \ (i=1,\ldots,N), \qquad 
\lim_{s\to\infty}\frac{\ \tilde{\Pi}(...,s)}{s}\ge 0.
\end{equation}

\noindent
Consequently, the constraints in \eqref{eq:P3prime} can be rewritten as:
\begin{equation}
\begin{aligned}
&\sum_{i=1}^{j} (q_i^{\text{ask}} - q_i^{\text{bid}})(K_j - K_i)
    + u K_j - \alpha e^{rT} \ge 0, && j = 1, \ldots, N,\\
&\sum_{i=1}^{N} (q_i^{\text{ask}} - q_i^{\text{bid}}) + u \ge 0,\\
&\alpha \le 0, \quad 0 \le q^{\text{ask}} \le Q^{\text{ask}}, \quad 0 \le q^{\text{bid}} \le Q^{\text{bid}}.
\end{aligned}
\label{eq:constraints}
\end{equation}
Consequently, Problem \eqref{eq:P3prime} is a finite-dimensional linear program over a polyhedral feasible set, denoted by $\mathcal{C}$ and characterized by the constraints in \eqref{eq:constraints}. 
\begin{remark}[Homogeneity]
\label{rem:homogeneity}
The objective function and all constraints in \eqref{eq:P3prime} are positively homogeneous of degree~$1$ in 
$(q^{\text{ask}}, q^{\text{bid}}, u, \alpha)$.  
\end{remark}
The following proposition shows that this linear program is well-posed.
\begin{proposition}[Existence of an Optimal Solution]
\label{prop:existence-optimum}
Problem~\eqref{eq:P3prime} admits at least one optimal solution.
Moreover, an optimal solution can be chosen at a vertex of $\mathcal{C}$.
\end{proposition}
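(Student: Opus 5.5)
The plan is to invoke the fundamental theorem of linear programming: a linear objective that is bounded above on a nonempty polyhedron attains its supremum, and if the polyhedron is pointed (contains no line), some optimal solution is a vertex. So three facts need checking for the polyhedron $\mathcal{C}$ defined by \eqref{eq:constraints}: (i) $\mathcal{C}\neq\emptyset$, (ii) $-\tilde\pi_0$ is bounded above on $\mathcal{C}$, and (iii) $\mathcal{C}$ has trivial lineality space.

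\emph{Feasibility.} The zero vector $(q^{\mathrm{ask}},q^{\mathrm{bid}},u,\alpha)=(0,0,0,0)$ satisfies every constraint in \eqref{eq:constraints}, with objective value $0$. So $\mathcal{C}$ is nonempty and the optimal value is at least $0$.

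\emph{Boundedness.} This is the step that needs care, because $u$ and $\alpha$ carry no explicit bounds. Write $w_i=q_i^{\mathrm{ask}}-q_i^{\mathrm{bid}}$, so that $|w_i|\le \max(Q_i^{\mathrm{ask}},Q_i^{\mathrm{bid}})$.
\begin{itemize}
\item The asymptotic-slope constraint $\sum_i w_i+u\ge 0$ gives $u\ge -\sum_i w_i\ge -\sum_i Q_i^{\mathrm{ask}}$. So $u$ is bounded below by a constant $\underline u$.
\item The constraint $\alpha\le 0$ (coming from $s=0$) bounds $\alpha$ above.
\end{itemize}
Hence
\[
-\tilde\pi_0=\sum_i\bigl(q_i^{\mathrm{bid}}C_i^{\mathrm{bid}}-q_i^{\mathrm{ask}}C_i^{\mathrm{ask}}\bigr)-uS_0e^{-qT}+\alpha
\le \sum_i Q_i^{\mathrm{bid}}C_i^{\mathrm{bid}}-\underline u\,S_0e^{-qT},
\]
using $C_i^{\mathrm{ask}}\ge 0$, $S_0e^{-qT}>0$, $\alpha\le 0$ and the depth bounds. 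This is a finite constant, so the supremum is finite and the LP attains it.

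\emph{Vertex.} It remains to show that $\mathcal{C}$ contains no line. Suppose a direction $d=(d^{\mathrm{ask}},d^{\mathrm{bid}},d_u,d_\alpha)$ has $x+td\in\mathcal{C}$ for all $t\in\R$.
\begin{itemize}
\item The box constraints on $q^{\mathrm{ask}},q^{\mathrm{bid}}$ force $d^{\mathrm{ask}}=d^{\mathrm{bid}}=0$.
\item The constraint $\alpha\le0$, applied in both directions $\pm t$, forces $d_\alpha=0$.
\item The slope constraint then reads $d_u\ge 0$ and $-d_u\ge 0$, so $d_u=0$.
\end{itemize}
Thus $d=0$, and $\mathcal{C}$ is pointed. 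A pointed nonempty polyhedron has at least one vertex, and for a linear objective that is bounded above on it, the optimal face contains a vertex. This yields an optimal solution at a vertex of $\mathcal{C}$, as claimed.
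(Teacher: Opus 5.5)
Your proof is correct, and it takes a different route from the paper's. The paper tries to show that $\mathcal{C}$ itself is bounded, hence a polytope, and then applies Weierstrass together with the polytope version of the fundamental theorem of LP.

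However, the paper's upper bound on $u$ and lower bound on $\alpha$ rely on $\tilde\pi_0\le 0$, which is not one of the constraints \eqref{eq:constraints} defining $\mathcal{C}$. In fact $\mathcal{C}$ is unbounded: $(0,0,u,0)$ with $u\ge 0$ and $(0,0,0,\alpha)$ with $\alpha\le 0$ are feasible for all such $u,\alpha$.

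What the paper's estimates actually establish is that the truncation $\mathcal{C}\cap\{\tilde\pi_0\le 0\}$ is a polytope. Since the null portfolio has value $0$, this truncation loses no optimal solutions, so existence survives. The vertex claim, though, then needs an extra observation, because a vertex of the truncated set need not be a vertex of $\mathcal{C}$. Two facts supply it. If the optimal value is positive, the cut $\tilde\pi_0\le 0$ is inactive at an optimal vertex. If the optimal value is zero, $\mathcal{C}\cap\{\tilde\pi_0\le 0\}=\mathcal{C}\cap\{\tilde\pi_0=0\}$ is a face of $\mathcal{C}$.

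Your route sidesteps all of this. You bound the objective rather than the feasible set and check that the lineality space is trivial. This gives existence and an optimal vertex of $\mathcal{C}$ itself directly, which is what ARIES needs when it extracts a vertex optimum. The cost is that you invoke the general form of the fundamental theorem for pointed, possibly unbounded polyhedra, rather than the compact version.

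Your bound on the objective implicitly uses nonnegativity of the quotes $C_i^{\mathrm{ask}},C_i^{\mathrm{bid}}$. This is harmless, and even unnecessary, since the box constraints on $q^{\mathrm{ask}},q^{\mathrm{bid}}$ already bound those terms.
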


\begin{proof}
See Appendix~\ref{appendix:exist_opt_arb}.
\end{proof}

Having established existence, we now provide a complete characterization of optimal portfolios and their arbitrage content. Specifically, exactly one of three mutually exclusive cases holds, corresponding to strong, weak, or no arbitrage.
\begin{proposition}[Characterization of Optimal Solutions and Link to Arbitrage]
\label{prop:characterization_proof_version}
Let $x=(q^{\mathrm{ask}},q^{\mathrm{bid}},u,\alpha)$ and consider the linear program \eqref{eq:P3prime}
over the feasible polytope $\mathcal{C}$ defined in \eqref{eq:constraints}.
Let $x^*=(q^{\mathrm{ask}*},q^{\mathrm{bid}*},u^*,\alpha^*)$ be an optimal solution, and set
\[
\tilde{\Pi}^*(s):=\tilde{\Pi}(q^{\mathrm{ask}*},q^{\mathrm{bid}*},u^*,\alpha^*,s),\qquad
\tilde{\pi}_0^*:=\tilde{\pi}_0(q^{\mathrm{ask}*},q^{\mathrm{bid}*},u^*,\alpha^*).
\]
Exactly one of the following cases occurs:
\begin{enumerate}
\item \textnormal{\emph{Strong-arbitrage case.}}
If $\tilde{\pi}_0^*<0$, then the optimizer must saturate at least one
depth constraint, i.e., there exists $i\in\{1,\dots,N\}$ such that
\[
q_i^{\mathrm{ask}*}=Q_i^{\mathrm{ask}}
\quad\text{or}\quad
q_i^{\mathrm{bid}*}=Q_i^{\mathrm{bid}}.
\]
Moreover, the associated implementation $(w,c)$ constructed from $x^*$ via
\eqref{eq:C0zero} and \eqref{eq:w-from-q-P1} yields a strong arbitrage.

\item \textnormal{\emph{Weak-arbitrage case.}}
If $\tilde{\pi}_0^*=0$ and $\tilde{\Pi}^*$ is not identically zero, then the set of optimal solutions contains a
non-trivial segment along the ray generated by $x^*$:
\[
\{\lambda x^*:\ 0\le \lambda \le \bar\lambda(x^*)\}\subseteq \arg\max\ (P3'),
\]
where $\bar\lambda(x^*)$ is the maximal scaling factor compatible with the upper size bounds, i.e.
\begin{equation}
    \bar{\lambda}(x^*) =
\min\!\left\{
\min_{i:\,q_i^{\mathrm{ask}*}>0}\frac{Q_i^{\mathrm{ask}}}{q_i^{\mathrm{ask}*}},
\;
\min_{i:\,q_i^{\mathrm{bid}*}>0}\frac{Q_i^{\mathrm{bid}}}{q_i^{\mathrm{bid}*}}
\right\}.
\label{eq:lambda_bar}
\end{equation}
The extremal point $\bar\lambda(x^*)x^*$ lies on the boundary of $\mathcal{C}$ and therefore
saturates at least one depth constraint.\\
In addition, the associated implementation $(w,c)$ constructed from $x^*$ via
\eqref{eq:C0zero} and \eqref{eq:w-from-q-P1} yields a weak arbitrage.

\item \textnormal{\emph{No-arbitrage case.}}
The market is arbitrage-free if and only if
\[
\tilde{\Pi}^*(s)= 0 \ \text{for all } s\ge 0.
\]
In this case, the unique optimal solution is the null portfolio $x^*=0$.
\end{enumerate}
\end{proposition}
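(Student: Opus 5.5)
The plan rests on three facts: the origin $x=0$ is always feasible in \eqref{eq:P3prime} with objective value $0$, so the optimal value satisfies $-\tilde\pi_0^*\ge 0$; the problem is positively homogeneous by \Cref{rem:homogeneity}; and $\mathcal C$ is bounded in the $q$-directions by the depth constraints. An optimum $x^*$ exists by \Cref{prop:existence-optimum}. The three cases are then sorted according to the sign of $\tilde\pi_0^*$ and whether $\tilde\Pi^*\equiv 0$. They are mutually exclusive by construction, since $\tilde\pi_0^*>0$ is impossible.

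\emph{Strong case ($\tilde\pi_0^*<0$).} I argue by contradiction. Suppose no depth constraint is saturated. Then for $\lambda>1$ close to $1$ the point $\lambda x^*$ still satisfies the box constraints, because only the upper bounds can fail under scaling up. It also still satisfies the homogeneous payoff constraints $\tilde\Pi\ge 0$ and $\alpha\le 0$. Its objective is $-\lambda\tilde\pi_0^*>-\tilde\pi_0^*$, which contradicts optimality.

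For the arbitrage interpretation, I first replace $x^*$ by its canonical implementation via \Cref{rem:V0-execution}. This does not increase cost, so it remains optimal. I then set $c:=-\pi_0(q^{\mathrm{ask}*},q^{\mathrm{bid}*},u^*)$, so that $V_0(w)+c=0$. Since $\tilde\pi_0^*=\pi_0-\alpha^*$, we get $c=-\alpha^*-\tilde\pi_0^*$. Hence
\[
V_T(w)(s)+ce^{rT}=\tilde\Pi^*(s)-\tilde\pi_0^*e^{rT}\ \ge\ -\tilde\pi_0^*e^{rT}>0
\]
for all $s$, which is a strong arbitrage in the sense of \eqref{eq:strong-arbitrage}.

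\emph{Weak case.} Here the optimal value is $0$. For $0\le\lambda\le\bar\lambda(x^*)$ the point $\lambda x^*$ is feasible, since the box bounds hold exactly by the definition \eqref{eq:lambda_bar} and the other constraints are homogeneous. Its objective is $-\lambda\tilde\pi_0^*=0$, so the whole segment is optimal. At $\lambda=\bar\lambda$ the minimizing ratio in \eqref{eq:lambda_bar} is attained, so the corresponding depth bound is saturated. I need $x^*$ to have some positive $q$-component for $\bar\lambda$ to be finite. This holds: if $q^*=0$, then $\tilde\Pi^*(s)=u^*s-\alpha^*e^{rT}$ and the objective is $-u^*S_0e^{-qT}+\alpha^*=0$ with $u^*\ge0$ and $\alpha^*\le0$. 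This forces $u^*=\alpha^*=0$, hence $\tilde\Pi^*\equiv0$, a contradiction.

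For the arbitrage interpretation, the same construction of $(w,c)$ gives a payoff equal to $\tilde\Pi^*\ge 0$ everywhere. Because $\tilde\Pi^*$ is piecewise linear and not identically zero, it is positive on an open interval. That interval has positive Lebesgue measure, so the payoff is positive with positive probability under any absolutely continuous $\mathbb P$ with full support on $\mathbb R_+$. This is a weak arbitrage.

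\emph{No-arbitrage case.} The main obstacle is the equivalence in case 3 together with the uniqueness of $x^*=0$.

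For ($\Leftarrow$), I argue by contraposition. Suppose some $(w,c)$ generates a weak arbitrage. I first scale it down by homogeneity to respect the depths. I then rewrite it as a feasible point of \eqref{eq:P3prime} with $\alpha:=-ce^{-rT}$. This point has objective $\ge0$ and payoff that is nonnegative and not identically zero. So either the value is $>0$, which falls in the strong case, or it is $0$, in which case this point is itself an optimum with $\tilde\Pi\not\equiv0$.

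To make the statement about \emph{every} optimum $x^*$ hold, I will use uniqueness. If $\tilde\Pi^*\equiv0$ and the value is $0$, then $\alpha^*=0$ (evaluate at $s=0$), $u^*+\sum_i w_i=0$, and all the kinks vanish, so $w_i=0$ for every $i$. Hence $u^*=0$. Since the objective then reduces to $-\sum_i q_i^{\mathrm{bid}*}(C_i^{\mathrm{ask}}-C_i^{\mathrm{bid}})$, we get $q^{\mathrm{ask}*}=q^{\mathrm{bid}*}=0$ wherever the spread is strictly positive. Where the spread is zero, the statement must be read modulo canonical implementation.

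With this in hand, the argument is clean when the optimum is taken canonical. If some optimum had $\tilde\Pi\not\equiv0$, the dichotomy above applies. I would present case 3 by defining arbitrage-freeness as the nonexistence of any feasible point with objective $\ge0$ and nonzero payoff, and showing that this is equivalent to the optimum being $0$ with zero payoff.

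The subtle step is reconciling \Cref{def:arbitrage} (\Cref{item:non_arb}), which uses measures on $\mathbb R_+$, with depth-limited portfolios. I would handle it through the homogeneity scaling just described.
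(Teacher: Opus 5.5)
Your cases 1 and 2 follow the paper's Steps 3--4 essentially step for step. The same holds for two parts of case 3: that arbitrage-freeness forces $\tilde\Pi^*\equiv0$, and that a zero-payoff optimum is $x^*=0$ (the paper's Step 5). In particular you use the same scaling contradiction, the cash choice $c=-\pi_0$ giving payoff $\tilde\Pi^*-\tilde\pi_0^*e^{rT}$, the segment along the ray, and the kink-by-kink identification. Two of your details are sharper than the paper. Your check that $q^*\neq0$, so that $\bar\lambda(x^*)<\infty$, fills a hole the paper leaves, since it only notes $x^*\neq0$. Your spread-based argument for $q^{\mathrm{ask}*}=q^{\mathrm{bid}*}=0$ is also more precise than the paper's appeal to the canonical convention.

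The step that fails is the bridge ``to make the statement about every optimum hold, I will use uniqueness.'' What you compute is that any optimum with $\tilde\Pi^*\equiv0$ equals $0$; that is not uniqueness of the optimizer. Whenever the optimal value is $0$, the point $x=0$ is optimal (it is even a vertex of $\mathcal C$). Take a market with a weak but no strong arbitrage, e.g.\ $r=q=0$, $S_0=100$, zero-spread quotes $C(90)=C(110)=12$. There the $90/110$ bull spread is a costless weak arbitrage, and a direct check shows no strong arbitrage exists. Yet the optimum $x^*=0$ has $\tilde\Pi^*\equiv0$.

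So the ``if'' direction of case 3, read for an arbitrary optimal $x^*$, is false and cannot be proved. What is provable is ``arbitrage-free $\iff$ every optimum has $\tilde\Pi^*\equiv0$ $\iff$ the canonical optimal set is $\{0\}$'', which is exactly your proposed reformulation. Your contrapositive does prove the converse half of it. The paper's Step 5 never addresses that half; it proves only the forward direction and the identification $x^*=0$.

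The contrapositive needs two repairs.
\begin{itemize}
\item The cash variable must be $\alpha:=-c$, as in the paper's Step 1, because $\tilde\Pi=V_T(w)+ce^{rT}$ forces $-\alpha e^{rT}=ce^{rT}$. With your $\alpha=-ce^{-rT}$ the payoff drops by $c(e^{rT}-1)$, and feasibility can fail when $c>0$.
\item Passing from $\mathbb P$-a.s.\ nonnegativity to the pointwise nonnegativity required in \eqref{eq:P3prime} uses that $\mathbb P$ charges every open interval. You should state this full-support reading explicitly; the paper relies on it implicitly. Under the literal wording of \Cref{def:arbitrage}, item~3 (every absolutely continuous $\mathbb P$), no market is arbitrage-free at all. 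A $\mathbb P$ supported above $F_0^T$ makes ``long the underlying, borrow $S_0e^{-qT}$'' a weak arbitrage.
\end{itemize}
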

\begin{proof}
See Appendix~\ref{appendix:charac_opt_arb}
\end{proof}

\subsection{Well-Posedness of Risk-Neutral Density Extraction}
This subsection establishes that the proposed extraction procedure (SEDEx) is well-posed under no-arbitrage. The argument proceeds in three steps. First, we characterize admissible call price vectors consistent with market quotes and static arbitrage restrictions. Second, we show that any such vector admits a discrete risk-neutral representation on an augmented strike grid. Third, we show that the proposed density extraction problem admits a unique solution.

\label{subsec: hybrid_method}

\subsubsection{Admissible Call Price Vectors}

We assume throughout the remainder of \Cref{subsec: hybrid_method} that ARIES has been applied and that the market $\Mcal$ is free of arbitrage opportunities. We first introduce the notion of an admissible call price vector.
\begin{definition}[Admissible call price vector]
\label{def:admissible-call-vector}
A vector \(C := (C_{0}, \dots, C_{N}) \in \R_+^{N+1}\) is said to be an \emph{admissible call price vector} if it satisfies the following conditions:
\begin{enumerate}
 \item \textbf{Positivity.} For all \(i \in \{0,\dots,N\}\),
 \begin{equation}
   C_{i} > 0.
   \label{eq:positivity_adm_call}
 \end{equation}

 \item \textbf{Monotonicity.} For all \(i \in \{0,\dots,N-1\}\),
 \begin{equation}
   C_{i} > C_{i+1}.
   \label{eq:monotonicity}
 \end{equation}

 \item \textbf{Convexity.} For all \(i \in \{0,\dots,N-2\}\),
 \begin{equation}
   \frac{C_{i} - C_{i+1}}{K_{i+1} - K_{i}}
   \;>\;
   \frac{C_{i+1} - C_{i+2}}{K_{i+2} - K_{i+1}}.
   \label{eq:discrete-convexity-local}
 \end{equation}

 \item \textbf{Slope control on \([K_{0},K_{1}]\) (Lower Bound on Call Prices).} We have
 \begin{equation}
   C_{0} - C_{1} < K_{1} e^{-rT},
   \qquad \text{equivalently} \qquad
   \frac{C_{1} - C_{0}}{K_{1}} > - e^{-rT}.
   \label{eq:slope-K0-K1}
 \end{equation}

 \item \textbf{Bid--ask bounds.} For all \(i \in \{0,\dots,N\}\),
 \begin{equation}
   C_{i}^{\mathrm{bid}} \le C_{i} \le C_{i}^{\mathrm{ask}},
   \qquad
   C_{0}^{\mathrm{ask}} = C_{0}^{\mathrm{bid}} = S_{0} e^{-qT}.
   \label{eq:bid-ask-bounds}
 \end{equation}
\end{enumerate}
\end{definition}
The aforementioned local properties are, in fact, global; we provide the formal proof in Appendix~\ref{appendix:global_prop}. Conditions \eqref{eq:positivity_adm_call}--\eqref{eq:slope-K0-K1} correspond to the standard static arbitrage restrictions for call prices, while condition \eqref{eq:bid-ask-bounds} enforces consistency with executable market quotes.

\begin{remark}
\label{rem:vs-bf-rewrite}
With the notation of \Cref{def:vs-bf}, suppose that call option prices
$C_i$ are observed on the market at time $0$ for strikes $K_i$.
Then the initial value of the vertical spread supported on $(K_i,K_{i+1})$
satisfies
$
V_0(\mathrm{VS}_{i,i+1}) = C_i - C_{i+1}.
$
In this case, the strict monotonicity condition \eqref{eq:monotonicity}
is equivalent to the strict positivity of adjacent vertical-spread prices, namely
\begin{align}
\forall\, i \in \{0,\dots,N-1\}, \qquad
C_i > C_{i+1}
\quad \Longleftrightarrow \quad
V_0(\mathrm{VS}_{i,i+1}) > 0 .
\end{align}

Likewise, the strict convexity condition \eqref{eq:discrete-convexity-local}
is equivalent to the strict positivity of adjacent butterfly-spread prices:
\begin{align}
\forall\, i \in \{0,\dots,N-2\}:\ 
\frac{C_i - C_{i+1}}{K_{i+1} - K_i}
-
\frac{C_{i+1} - C_{i+2}}{K_{i+2} - K_{i+1}}
> 0
 \Longleftrightarrow 
V_0(\mathrm{BF}_{i,i+1,i+2}) > 0 .
\end{align}
\end{remark}

The remaining issue is feasibility. We will use a corollary of Motzkin’s theorem of the alternative (\Cref{thm:motzkin} in the \Cref{appendix:motzkin}; see also\ \cite{motzkin1936beitrage, abhyankar2002encyclopaedia}), which provides the basis for establishing the existence of an admissible call-price vector in 
\Cref{prop:existence-admissible-C}.

\begin{corollary}[Motzkin with strict positivity]
\label{cor:motzkin-positive}
Let $A\in\R^{m\times n}$, $B\in\R^{\ell\times n}$, $b\in\R^m$, and $c\in\R^\ell$.
Exactly one of the following two assertions holds:
\begin{enumerate}
\item[1.] The system $Ax\le b$, $Bx<c$ admits a solution $x>0$.
\item[2.] There exist $y\in\R^m_+$ and $z\in\R^\ell_+$ such that,
\[
A^\top y + B^\top z \ge 0 \ \text{and }\ b^\top y + c^\top z < 0,
\]
or
\[
A^\top y + B^\top z \ge 0,\ z\neq 0 \ \text{and } b^\top y + c^\top z \leq 0,
\]
or
\[
A^\top y + B^\top z \ge 0,\ A^\top y + B^\top z \neq 0 \ \text{and } b^\top y + c^\top z \leq 0.
\]
\end{enumerate}
\end{corollary}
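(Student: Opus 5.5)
The plan is to derive the corollary from the inhomogeneous form of Motzkin's transposition theorem (\Cref{thm:motzkin}). The strict positivity requirement $x>0$ will be absorbed into the strict block of the system. I take \Cref{thm:motzkin} in the standard form: for matrices $\tilde A,\tilde B$ and vectors $\tilde b,\tilde c$, exactly one of the following holds.
\begin{enumerate}
\item[(a)] There is $x$ with $\tilde A x\le \tilde b$ and $\tilde B x<\tilde c$.
\item[(b)] There are $y\ge 0$ and $\tilde z\ge 0$ with $\tilde A^\top y+\tilde B^\top \tilde z=0$, and either $\tilde b^\top y+\tilde c^\top\tilde z<0$, or $\tilde z\neq 0$ and $\tilde b^\top y+\tilde c^\top\tilde z\le 0$.
\end{enumerate}
If the appendix states the theorem in a slightly different normalization, I will first restate it in this form; this is routine.

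\textbf{Step 1 (augmentation).} Set $\tilde A:=A$ and $\tilde b:=b$. Stack the strict constraints as
\[
\tilde B:=\begin{pmatrix} B\\ -I_n\end{pmatrix},\qquad \tilde c:=\begin{pmatrix} c\\ 0\end{pmatrix}.
\]
Then assertion 1 of the corollary is exactly alternative (a) for the augmented system, since $-x<0$ is equivalent to $x>0$.

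\textbf{Step 2 (translating the dual alternative).} Write the multiplier as $\tilde z=(z,w)$ with $z\in\R^\ell_+$ and $w\in\R^n_+$. The equation $\tilde A^\top y+\tilde B^\top\tilde z=0$ reads $A^\top y+B^\top z=w$, which is equivalent to $A^\top y+B^\top z\ge 0$ once $w$ is eliminated. Since the last block of $\tilde c$ vanishes, $\tilde c^\top\tilde z=c^\top z$.

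The first branch of (b) becomes the first branch of assertion 2. For the second branch, $\tilde z\neq 0$ means either $z\neq 0$, which gives the second branch of assertion 2, or $z=0$ and $w\neq 0$. In the latter case $w=A^\top y+B^\top z\neq 0$, which gives the third branch. Conversely, given $(y,z)$ satisfying any branch of assertion 2, I set $w:=A^\top y+B^\top z\ge 0$. This produces a certificate for (b): in the third branch $w\neq 0$, so $\tilde z\neq 0$ even when $z=0$. Hence assertion 2 is equivalent to (b), and the dichotomy follows from \Cref{thm:motzkin}.

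\textbf{Step 3 (mutual exclusivity).} Exclusivity is inherited from \Cref{thm:motzkin}, but I will also verify it directly as a sanity check. Suppose $x>0$ solves system 1 and $(y,z)$ is as in assertion 2. Then
\[
0\le (A^\top y+B^\top z)^\top x = y^\top Ax+z^\top Bx\le b^\top y+c^\top z .
\]
The last inequality is strict when $z\neq 0$, because $Bx<c$ componentwise. The first inequality is strict when $A^\top y+B^\top z\neq 0$, because it is a nonnegative nonzero vector and $x>0$. Each of the three branches then yields a contradiction.

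The only delicate point is the bookkeeping in Step 2. The case $z=0$, $w\ne0$ must be carried by the third alternative, and the converse direction must construct a nonzero $\tilde z$ there.
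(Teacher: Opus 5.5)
Your proof is correct and takes the same route the paper intends: the paper states this corollary without a written proof, and its derivation from \Cref{thm:motzkin} is exactly your augmentation $\tilde B=\bigl(\begin{smallmatrix}B\\ -I_n\end{smallmatrix}\bigr)$, $\tilde c=\bigl(\begin{smallmatrix}c\\ 0\end{smallmatrix}\bigr)$, with the extra multiplier eliminated as the slack $w=A^\top y+B^\top z\ge 0$. Your case split, in which $z=0,\ w\neq 0$ yields the third alternative, is precisely the structure the paper later relies on in the proof of \Cref{prop:existence-admissible-C}, where it uses $w\ge 0$ and $(\tilde z,\tilde w)\neq(0,0)$.
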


\begin{proposition}[Existence of an admissible call-price vector]
\label{prop:existence-admissible-C}
Under the no-arbitrage assumption (\Cref{def:arbitrage}), there exists an admissible call price vector
\begin{equation}
 C = (C_{0}, C_{1}, \dots, C_{N}) \in \R^{N+1}
\end{equation}
satisfying all constraints of \Cref{def:admissible-call-vector}.
\end{proposition}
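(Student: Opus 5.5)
We argue by contradiction through \Cref{cor:motzkin-positive}. The unknown is $x=(C_0,\dots,C_N)$, and the conditions of \Cref{def:admissible-call-vector} are encoded as a linear system. The non-strict block $Ax\le b$ collects the bid--ask bounds $-C_i\le -C_i^{\mathrm{bid}}$ and $C_i\le C_i^{\mathrm{ask}}$ for $i=0,\dots,N$; for $i=0$ these pin $C_0$ to the spot value. The strict block $Bx<c$ collects three families of conditions:
\begin{itemize}
\item monotonicity, $C_{i+1}-C_i<0$;
\item convexity, $-\alpha_{i,i+1}C_i+(\alpha_{i,i+1}+\alpha_{i+1,i+2})C_{i+1}-\alpha_{i+1,i+2}C_{i+2}<0$;
\item the slope condition, $C_0-C_1<K_1e^{-rT}$.
\end{itemize}
Positivity $x>0$ is exactly the sign requirement in assertion~1 of the corollary. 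It therefore suffices to show that assertion~2 is impossible under no-arbitrage.

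Suppose multipliers $y=(y^{\mathrm{bid}},y^{\mathrm{ask}})\in\R_+^{2(N+1)}$ and $z=(z^{\mathrm{VS}},z^{\mathrm{BF}},z^{\mathrm{sl}})\ge 0$ realise one of the three alternatives. The key step is to read $(y,z)$ as a static portfolio. Set
\[
w:=\sum_i z^{\mathrm{VS}}_i\,\mathrm{VS}_{i,i+1}+\sum_i z^{\mathrm{BF}}_i\,\mathrm{BF}_{i,i+1,i+2}+z^{\mathrm{sl}}\,\mathrm{VS}_{0,1},
\]
and put the residual cash $z^{\mathrm{sl}}K_1e^{-rT}$ in the bond, so that the slope row corresponds to a vertical spread financed against cash. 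Each of these building blocks has a nonnegative payoff by \Cref{def:vs-bf}, and the slope portfolio pays $(S_T)_+-(S_T-K_1)_+ - K_1\ge -K_1$ after accounting for cash.

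The condition $A^\top y+B^\top z\ge 0$ then says the following. The net position $w$ can be offset by buying strikes at the ask, with weight $y^{\mathrm{ask}}$, and selling at the bid, with weight $y^{\mathrm{bid}}$, so that the call positions remaining after offsetting have nonnegative coefficients. Equivalently, $B^\top z$ is the payoff vector of a nonnegative-payoff portfolio, and $y$ specifies how to hedge it at executable quotes. The quantity $b^\top y+c^\top z$ is then the net time-$0$ cost of the zero-payoff-floor combination. Concretely, we combine buying the portfolio with payoff $V_T\ge0$ with selling the offsetting calls at bid and buying at ask. This gives a portfolio whose terminal payoff is $\sum_i (A^\top y+B^\top z)_i (S_T-K_i)_+$ plus a nonnegative residual from the blocks. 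Its cost is at most $b^\top y+c^\top z$, with $V_0$ computed as in \Cref{def:V0}; the positive and negative parts are handled by \Cref{rem:V0-execution}, which ensures that netting only lowers cost.

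The three alternatives are disposed of as follows. In the first alternative the cost is strictly negative while the payoff is nonnegative, which is a strong arbitrage, or at least a weak one. In the second alternative $z\neq0$, so some nontrivial VS or BF block is present; its payoff is positive on a set of positive Lebesgue measure (a spread on $(K_i,K_{i+1})$, or a tent), and the cost is $\le 0$. This is a weak arbitrage under any absolutely continuous $\mathbb P$. In the third alternative the residual call position $A^\top y+B^\top z\ge0$ is nonzero, so the payoff is positive for large $S_T$ at cost $\le0$; this is again a weak arbitrage. Each case contradicts \Cref{def:arbitrage}\,\ref{item:non_arb}, which ARIES enforces through \Cref{prop:characterization_proof_version}.

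The step I expect to be the main obstacle is the bookkeeping of the dual. We must verify that $b^\top y+c^\top z$ truly dominates the executable cost $V_0+c$, and that the cash leg and the spot row ($C_0$ fixed by (Hyp-Spot)) are treated correctly, including when $y^{\mathrm{bid}}_i,y^{\mathrm{ask}}_i$ are both positive. We must also check that the depth constraints in \Cref{def:V0} do not obstruct the argument. For the latter I would invoke the homogeneity of \Cref{rem:homogeneity}: scaling $(y,z)$ down preserves all three sign conditions and brings the weights within $Q^{\mathrm{bid}},Q^{\mathrm{ask}}$.
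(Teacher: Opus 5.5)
Your route is the paper's: encode Definition~\ref{def:admissible-call-vector} as a mixed strict/weak system and take a Motzkin certificate. You then read the certificate as an executable strategy plus a bond leg, split by the sign of $b^\top y+c^\top z$, and rescale by homogeneity to respect the depths. There is, however, a genuine gap in the step you call the key one, the translation of $(y,z)$ into a portfolio: the slope row has the wrong sign, and as written the payoff argument does not go through.

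With your own rows,
$B^\top z=-\sum_i z^{\mathrm{VS}}_i\,\mathrm{VS}_{i,i+1}-\sum_i z^{\mathrm{BF}}_i\,\mathrm{BF}_{i,i+1,i+2}+z^{\mathrm{sl}}\,\mathrm{VS}_{0,1}$.
So your $w$ is neither $B^\top z$ nor $-B^\top z$. The nonnegative-payoff object is $-B^\top z$, and its slope component is $-z^{\mathrm{sl}}\mathrm{VS}_{0,1}$, i.e.\ long the $K_1$ call and short the underlying. Add a bond \emph{deposit} of $z^{\mathrm{sl}}K_1e^{-rT}$, which is exactly the term $c^\top z$ in the cost. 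The block then pays the synthetic put $z^{\mathrm{sl}}(K_1-S_T)_+\ge 0$.

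Your block $+z^{\mathrm{sl}}\mathrm{VS}_{0,1}$ is not the position the certificate executes. The payoff you compute for it, $\min(S_T,K_1)-K_1=-(K_1-S_T)_+\le 0$, is not nonnegative; you yourself only reach ``$\ge -K_1$''. So nonnegativity of the combined payoff, on which all three alternatives rest, is not established. The cost accounting against $b^\top y+c^\top z$ does not close either, since $y^{\mathrm{ask}}-y^{\mathrm{bid}}$ is not your $w$ plus a nonnegative residual.

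With the sign corrected, $A^\top y+B^\top z=v\ge 0$ reads $y^{\mathrm{ask}}-y^{\mathrm{bid}}=v+(-B^\top z)$. The executed trades themselves realise the structural portfolio plus extra long calls. This is not a hedge of that portfolio, and nothing has to be ``bought'' separately.

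Your ``main obstacle'' is then a one-line identity, as in Steps~2--6 of the paper's proof. The execution cost is exactly $b_{\mathrm{mkt}}^\top y=\pi_0(q^{\mathrm{ask}},q^{\mathrm{bid}},u)$, where $u=y^{\mathrm{ask}}_{K_0}-y^{\mathrm{bid}}_{K_0}$ is priced at $C_0=S_0e^{-qT}$ by (Hyp-Spot). Remark~\ref{rem:V0-execution} gives $V_0\le\pi_0$, which also covers the case where both $y^{\mathrm{bid}}_i$ and $y^{\mathrm{ask}}_i$ are positive.

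Your treatment of the second alternative also misses a case. The condition $z\neq 0$ does not force a VS or BF block to be present, because $z$ may be supported on the slope row alone. This happens, for instance, in the equality case $C_1^{\mathrm{ask}}=S_0e^{-qT}-K_1e^{-rT}$, where the certificate has zero cost and $v=0$. There the positive-measure set is $(0,K_1)$, supplied by the synthetic put. That is available only with the corrected sign; under your sign this block's payoff is nonpositive and the case cannot be closed.
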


\begin{proof}
The detailed proof is deferred to Appendix~\ref{appendix:existence_call_price}; only its main outline is presented below.
\paragraph{Outline and intuition}
We encode the admissibility constraints in \Cref{def:admissible-call-vector} as a system of mixed strict and weak linear inequalities
\[
A_{\mathrm{mkt}}\,C \le b_{\mathrm{mkt}},
\qquad
A_{\mathrm{struct}}\,C < b_{\mathrm{struct}},
\qquad
C>0,
\]
where $A_{\mathrm{mkt}}$ collects the bid--ask bounds, and $A_{\mathrm{struct}}$ collects the shape constraints
(monotonicity, convexity, and slope constraint). Thus the goal is to find $C$ that satisfies these inequalities.
Next, we argue by contradiction. If no solution exists, that is, the admissible-price constraints are infeasible, Motzkin’s alternative (\Cref{cor:motzkin-positive}) provides a \emph{dual certificate} of infeasibility. Financially, this certificate can be read as an explicit static strategy constructed from two ingredients:
(i) executable trades at the quoted bid--ask prices, and
(ii) nonnegative combinations of elementary strategies.
By construction, this strategy has a terminal payoff that is nonnegative.
We then inspect its inception cost. Either the strategy produces a strictly negative initial cost (cash received upfront), which yields a \emph{strong arbitrage}; or it is costless and is strictly profitable in some future state, which yields a \emph{weak arbitrage} under a suitably chosen probability measure.
Both possibilities contradict the no-arbitrage assumption; hence the admissible-price system must be feasible.
\end{proof}

\subsubsection{Discrete Risk-Neutral Measure Construction}
\label{subsub:discrete-rn-construction}
Given an admissible call price vector $C$ (Definition~\ref{def:admissible-call-vector}), 
we construct a discrete risk-neutral measure that reproduces $C$ and is supported in 
the state space $\mathcal X=(0,b)$. The construction proceeds in three steps. We first 
augment the strike grid with a fictitious terminal strike $K_{N+1}$ and set 
$C_{N+1}=0$. We then derive a sequence of weights on the augmented grid. The natural 
allocation places a Dirac mass at $K_0=0\notin\mathcal X$; we therefore substitute 
$K_0$ by an interior strike $K'\in(K_0,K_1)$ and redistribute the leftmost mass on 
the support $\mathcal K':=\{K',K_1,\dots,K_{N+1}\}$. The admissible range for $K'$ is 
derived below.

\paragraph{Tail extrapolation.}
Let $\xi>1$ be chosen so that $K_{N+1}<b$ and the terminal strike belongs to the compatible grid of \Cref{def:grid-decision-variable}: 
\begin{equation}
K_{N+1}
:=
K_{N}
+
\xi\,
\frac{C_{N}}{(C_{N-1} - C_{N})/(K_{N} - K_{N-1})}
\, .
\label{eq:KNplus1-def}
\end{equation}
We set
\begin{equation}
C_{N+1} := 0.
\label{eq:CNplus1-def}
\end{equation}
Choosing $\xi>1$ guarantees that the call-price vector extended with 
$C_{N+1}$ remains strictly decreasing and convex.

\paragraph{Risk-neutral weights on $\mathcal K'$.}
Define the auxiliary quantities, for $i\in\{0,\dots,N\}$,
\begin{equation}
Q_i
:=
e^{rT}\,
\frac{C_i-C_{i+1}}{K_{i+1}-K_i}.
\label{eq:Qi-def}
\end{equation}
Under Definition~\ref{def:admissible-call-vector} together with \eqref{eq:CNplus1-def}, 
the sequence $(Q_i)_{i=0}^N$ is strictly decreasing and takes values in $(0,1)$. We define the weights 
at $K_2,\dots,K_{N+1}$ by
\begin{equation}
q'_j := Q_{j-1}-Q_j,
\qquad j=2,\dots,N,
\qquad
q'_{N+1}:=Q_N,
\label{eq:qprime-upper}
\end{equation}
which satisfy $\sum_{j=2}^{N+1}q'_j = Q_1$. The mass remaining for the two leftmost 
weights is therefore
\begin{equation}
m := 1-Q_1.
\label{eq:m-mass}
\end{equation}
Let
\begin{equation}
\bar K
:=
\frac{(1-Q_0)K_0 + (Q_0-Q_1)K_1}{1-Q_1}.
\label{eq:Kbary}
\end{equation}
For any given $K'\in(K_0,K_1)$, the two remaining weights are defined by
\begin{equation}
q'_0
:=
m\,\frac{K_1-\bar K}{K_1-K'},
\qquad
q'_1
:=
m\,\frac{\bar K-K'}{K_1-K'}.
\label{eq:qprime-lower}
\end{equation}

\begin{definition}[Discrete measure on $\mathcal K'$]
\label{def:discrete-rn-measure}
Let $K'\in(K_0,K_1)$ be a grid point and define \(q'=(q'_0,\dots,q'_{N+1})^\top\) by 
\eqref{eq:qprime-upper}--\eqref{eq:qprime-lower}. The associated discrete measure on 
$\mathcal K'$ is
\begin{equation}
\nu
:=
q'_0\,\delta_{K'}
+
\sum_{j=1}^{N+1} q'_j\,\delta_{K_j},
\label{eq:nu}
\end{equation}
where $\delta_K$ denotes the Dirac mass at $K$.
\end{definition}

\begin{remark}[Canonical interpretation]
\label{rem:canonical-interpretation}
Consider the discrete probability measure on the augmented grid 
$\{K_0,\dots,K_{N+1}\}$ defined by
\begin{equation}
\nu_0
:=
(1-Q_0)\,\delta_{K_0}
+
\sum_{j=1}^{N}(Q_{j-1}-Q_j)\,\delta_{K_j}
+
Q_N\,\delta_{K_{N+1}}.
\label{eq:nu0}
\end{equation}
The sequence $(Q_i)_{i=0}^N$ coincides with the survival function of $\nu_0$, namely 
$Q_i=\nu_0(S_T>K_i)$. Equivalently, $\nu_0$ arises as the distributional second 
derivative of the piecewise-linear interpolation of the extended call price vector 
on $[K_0,K_{N+1}]$, in the sense of \cite{BreedenLitzenberger1978}. The mass \eqref{eq:m-mass} and barycenter 
\eqref{eq:Kbary} then admit the representation
\begin{equation}
\nu_0(\{K_0,K_1\})=m,
\qquad
\bar K
=
\E^{\nu_0}\!\left[\,S_T\,\middle|\,S_T\in\{K_0,K_1\}\,\right].
\end{equation}
Moreover, $(q'_0,q'_1)$ in \eqref{eq:qprime-lower} is the unique solution of
\begin{equation}
q'_0+q'_1 = m,
\qquad
q'_0\,K' + q'_1\,K_1 = m\bar K,
\label{eq:left-tail-system}
\end{equation}
and therefore preserves the mass and the first moment on $\{K_0,K_1\}$. Accordingly, $\nu$ is referred to as the 
\emph{extended Breeden--Litzenberger measure} on $\mathcal K'$.
\end{remark}

\begin{proposition}[Replication and uniqueness on $\mathcal K'$]
\label{prop:boundary-shift}
Let $\nu$ be the extended Breeden--Litzenberger measure of 
Definition~\ref{def:discrete-rn-measure}. Then $\nu$ reproduces the admissible call 
price vector $C$, namely
\begin{equation}
e^{-rT}\E^\nu[(S_T-K_i)_+] \;=\; C_i,
\qquad i=0,\dots,N.
\label{eq:nu-matches-C}
\end{equation}
Moreover, $\nu$ is a probability measure on $\mathcal K'$ if and only if $K'$ 
satisfies the barycentric condition
\begin{equation}
K' < \bar K.
\label{eq:barycentric-condition}
\end{equation}
For any fixed support $\mathcal K'$, the probability measure satisfying 
\eqref{eq:nu-matches-C} is unique.
\end{proposition}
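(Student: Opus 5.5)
The plan is to work first with the canonical measure $\nu_0$ of \Cref{rem:canonical-interpretation} and then pass to $\nu$ through the two-point substitution \eqref{eq:left-tail-system}. There are three claims to prove: replication, positivity, and uniqueness.

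\emph{Replication.} Since the survival function of $\nu_0$ is $Q_i=\nu_0(S_T>K_i)$, I would use the layer-cake identity. For every $i\in\{0,\dots,N\}$,
\[
\E^{\nu_0}[(S_T-K_i)_+]=\int_{K_i}^{K_{N+1}}\nu_0(S_T>x)\,dx=\sum_{j=i}^{N}Q_j\,(K_{j+1}-K_j).
\]
By \eqref{eq:Qi-def} this sum telescopes to $e^{rT}(C_i-C_{N+1})=e^{rT}C_i$, using $C_{N+1}=0$ from \eqref{eq:CNplus1-def}. This gives \eqref{eq:nu-matches-C} for $\nu_0$.

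Next I pass from $\nu_0$ to $\nu$. The two measures agree on $\{K_2,\dots,K_{N+1}\}$. On the lower part they differ only in that the masses $(1-Q_0,\,Q_0-Q_1)$ at $(K_0,K_1)$ are replaced by $(q'_0,q'_1)$ at $(K',K_1)$. Both pairs have total mass $m$ and first moment $m\bar K$, by \eqref{eq:left-tail-system}. For $i\ge1$ the payoff $(x-K_i)_+$ vanishes at $K_0$, $K'$ and $K_1$, so those expectations are unchanged. For $i=0$ the payoff $(x-K_0)_+=x$ is linear on $[K_0,K_1]$, so only the first moment matters, and it is preserved.

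\emph{Probability measure iff $K'<\bar K$.} The total mass equals $m+Q_1=1$ by construction. For $j=2,\dots,N$, each $q'_j$ is strictly positive because $(Q_i)$ is strictly decreasing, which follows from the convexity condition \eqref{eq:discrete-convexity-local} extended with $C_{N+1}$ (the choice $\xi>1$). Also $q'_{N+1}=Q_N>0$. We have $m=1-Q_1>0$ because $Q_1<Q_0<1$, where $Q_0<1$ is exactly the slope condition \eqref{eq:slope-K0-K1}.

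Since $K_0=0$, $\bar K=(Q_0-Q_1)K_1/(1-Q_1)$, which lies in $(0,K_1)$. Hence $q'_0=m(K_1-\bar K)/(K_1-K')>0$ automatically. The only binding sign is that of $q'_1=m(\bar K-K')/(K_1-K')$, which is positive if and only if $K'<\bar K$. I will read "probability measure on $\mathcal K'$" as requiring strictly positive mass at every support point, so that the equivalence is exact; if $K'=\bar K$ is to be excluded, strict positivity is needed. The main care in the whole proof is checking these sign facts, and in particular deriving $Q_0<1$ and the monotonicity of $Q$ cleanly from \Cref{def:admissible-call-vector}.

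\emph{Uniqueness.} Fix $\mathcal K'$ and let $\mu=\sum_k p_k\delta_{x_k}$ with $x_0=K'<x_1=K_1<\dots<x_{N+1}$. Impose the $N+1$ equations \eqref{eq:nu-matches-C} together with total mass $1$. This gives $N+2$ linear equations in $N+2$ unknowns. Ordering the unknowns as $p_{N+1},p_N,\dots,p_0$, I would solve it by back-substitution.
\begin{itemize}
\item The equation for $i=N$ involves only $p_{N+1}(K_{N+1}-K_N)$, so it determines $p_{N+1}$.
\item The equation for $i=N-1$ then determines $p_N$, and so on down to $i=1$, which determines $p_2$. At each step the coefficient of the new unknown is $K_{i+1}-K_i>0$.
\item The equation for $i=0$, i.e. the first moment, involves $p_0K'+p_1K_1$ plus known terms. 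Together with the mass equation it forms a $2\times2$ system with determinant $K_1-K'\neq0$, which fixes $p_0$ and $p_1$.
\end{itemize}
The full system is therefore triangular up to this final $2\times2$ block and is nonsingular. Hence the solution is unique and must coincide with $\nu$.
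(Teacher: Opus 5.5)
Your proposal is correct and follows the paper's proof essentially step for step. Replication goes by telescoping: your layer-cake identity for $\nu_0$ is the same summation by parts as the paper's Step 1, and you treat $i=0$ by the same mass and first-moment preservation when passing from $\{K_0,K_1\}$ to $\{K',K_1\}$. You then use the same sign checks on the weights, and the same nonsingular $(N+2)\times(N+2)$ system for uniqueness, whose determinant $(K_1-K')\prod_{i=1}^{N}(K_{i+1}-K_i)$ your back-substitution makes explicit.

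Your remark that the ``if and only if'' with the strict condition $K'<\bar K$ requires reading ``probability measure on $\mathcal K'$'' as charging every point of $\mathcal K'$ is apt. The paper's own argument only gives $q'_1\ge 0\iff K'\le\bar K$ for mere nonnegativity.
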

\begin{proof}
See Appendix~\ref{appendix:proof_boundary_shift}.
\end{proof}

This result establishes that the admissibility of a call price vector is equivalent 
to the existence of a discrete arbitrage-free pricing measure. With the above grid-compatible choice of auxiliary strikes, this measure induces a feasible element within the admissible optimization set $\mathcal A_M$.

\subsubsection{Existence and Uniqueness of the Minimizer}
\label{subsec:exist-unique-disc}
We conclude this section by establishing well-posedness of SEDEx, namely the minimization of $H_M$ over the admissible set $\mathcal{A}_M$ introduced in Section~\ref{subsub:methodo_rnd}. In particular, we prove existence of a minimizer and its uniqueness.
\begin{proposition}
\label{prop:AM-compact-convex}
For a compatible grid as in Definition~\ref{def:grid-decision-variable}, the set $\mathcal{A}_M$ is a non-empty compact convex subset of $\R^M$.
\end{proposition}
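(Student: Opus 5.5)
Convexity and compactness follow directly from the definition of $\mathcal{A}_M$; non-emptiness is the substantive part and rests on the construction of \Cref{subsub:discrete-rn-construction}.

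\emph{Convexity.} Every defining condition of $\mathcal{A}_M$ is affine in $p$. The simplex $\Sigma_M$ is an intersection of half-spaces with a hyperplane. The forward constraint \eqref{eq:discrete-forward} is a hyperplane. Each call constraint \eqref{eq:discrete-call} amounts to two half-spaces, $e^{-rT}\sum_i (s_i-K_j)_+p_i\ge C_j^{\mathrm{bid}}$ and $\le C_j^{\mathrm{ask}}$. So $\mathcal{A}_M$ is a polyhedron, hence convex and closed.

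\emph{Compactness.} $\mathcal{A}_M$ is a closed subset of the bounded set $\Sigma_M\subset[0,1]^M$, so it is compact by Heine--Borel.

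\emph{Non-emptiness.} I would chain the earlier results as follows.
\begin{enumerate}
\item Since ARIES has been applied, the market is arbitrage-free, and \Cref{prop:existence-admissible-C} gives an admissible call price vector $C$ with $C_j\in[C_j^{\mathrm{bid}},C_j^{\mathrm{ask}}]$.
\item Extend $C$ with $K_{N+1}$ from \eqref{eq:KNplus1-def}, where $\xi>1$ is chosen so that $K_{N+1}<b$ and $K_{N+1}$ lies on the grid.
\item Choose a grid point $K'\in(K_0,K_1)$ with $K'<\bar K$.
\item By \Cref{prop:boundary-shift}, the measure $\nu$ of \Cref{def:discrete-rn-measure} is then a probability measure on $\mathcal K'$ and reproduces $C$.
\item Define $p$ by putting $p_i=\nu(\{s_i\})$ at the grid points $K',K_1,\dots,K_{N+1}$ and $p_i=0$ elsewhere. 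This $p$ lies in $\Sigma_M$.
\item The call constraints hold because $e^{-rT}\sum_i (s_i-K_j)_+p_i=C_j\in[C_j^{\mathrm{bid}},C_j^{\mathrm{ask}}]$ for $j=1,\dots,N$.
\item The forward constraint follows from \eqref{eq:nu-matches-C} at $i=0$. Since $K_0=0$, it gives $e^{-rT}\E^\nu[S_T]=C_0=e^{-rT}F_0^T$ (using \eqref{eq:call-strike-zero}), i.e.\ $\sum_i s_ip_i=F_0^T$.
\end{enumerate}

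\emph{Main obstacle: grid compatibility of $K'$ and $K_{N+1}$.}
\begin{itemize}
\item For $K'$, I need a grid point in $(0,\min(K_1,\bar K))$. First I would show $\bar K>K_0=0$. This holds because $Q_0<1$ (from the slope condition \eqref{eq:slope-K0-K1}) and $Q_0>Q_1$ (from convexity), so $\bar K$ is a strictly positive convex combination of $0$ and $K_1$ with positive weight on $K_1$. Then $s_1=\Delta s$ is admissible once $\Delta s<\bar K$. This holds for $M$ large, or after refining the grid, which keeps the strikes on the grid if $M$ is replaced by a multiple.
\item For $K_{N+1}$, varying $\xi$ over $(1,\infty)$ sweeps \eqref{eq:KNplus1-def} continuously over $(K_N+C_N/\text{slope},\infty)$. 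Hence a grid point strictly below $b$ can be hit provided $b$ exceeds this threshold by at least $\Delta s$. I would state this as the standing requirement implicit in \Cref{def:grid-decision-variable} (the compatible grid) and in the choice of $b$ \eqref{eq:upper-bound-support}.
\item One caveat: the case $C_N$ near zero must be handled via the same formula. If $C_N=0$ were allowed, the construction would degenerate, but positivity \eqref{eq:positivity_adm_call} excludes this.
\end{itemize}
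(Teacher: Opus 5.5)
Your proof follows the paper's argument: $\mathcal A_M$ is a finite intersection of closed half-spaces and hyperplanes inside the bounded simplex $\Sigma_M$, and it is non-empty because the extended Breeden--Litzenberger measure $\nu$ of \Cref{prop:boundary-shift}, built from the admissible vector of \Cref{prop:existence-admissible-C}, sits on grid points; you simply write out the non-emptiness chain in more detail, including the forward constraint via $i=0$. The grid conditions you isolate (a grid point in $(0,\bar K)$ for $K'$, and one strictly between $K_N+C_N(K_N-K_{N-1})/(C_{N-1}-C_N)$ and $b$ for $K_{N+1}$) are exactly what the paper assumes tacitly when it takes the auxiliary strikes ``on the compatible grid''; they must be imposed as hypotheses on the given grid rather than obtained by refinement, since refining changes $\mathcal A_M$ and the statement genuinely fails without them (e.g.\ if $\Delta s=K_1$, every $p$ satisfying the forward constraint prices the $K_1$-call at $e^{-rT}(F_0^T-K_1)$, so $\mathcal A_M=\emptyset$ once $C_1^{\mathrm{bid}}$ exceeds this value).
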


\begin{proof}
The constraint \eqref{eq:simplex} defines a simplex, hence a polytope in $\R^M$. The forward constraint \eqref{eq:discrete-forward}
defines an affine hyperplane, and each of the call price constraints
\eqref{eq:discrete-call} defines the intersection of two closed affine
half-spaces. Therefore $\mathcal{A}_M$ is the finite intersection of closed convex
sets, hence is itself closed and convex. Boundedness follows from the
simplex constraint. \\
The non-emptiness of $\mathcal{A}_M$ is guaranteed by the
construction of the discrete measure $\nu$ \eqref{eq:nu} under the no-arbitrage
assumption, with the auxiliary strikes chosen on the compatible grid. Hence $\mathcal{A}_M$ is a non-empty compact convex subset of $\R^M$.
\end{proof}

\begin{proposition}
\label{prop:HM-strict-convex}
The functional $H_M$ is continuous and strictly convex on $\mathcal{A}_M$.
\end{proposition}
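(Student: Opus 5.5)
The plan is to split $H_M$ into its quadratic smoothing part and its entropic part,
\[
H_M(p)=Q(p)+\lambda_2 E(p),\qquad Q(p):=\frac{\lambda_1}{(\Delta s)^3}\|D^{(1)}p\|_2^2,\qquad E(p):=\sum_{i=1}^M \phi(p_i),
\]
with $\phi(t):=t\ln t$ for $t>0$ and $\phi(0):=0$. Each piece will be treated separately on $(\R_+)^M$, which contains $\mathcal A_M\subset\Sigma_M$.

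\emph{Continuity.} $Q$ is a quadratic form, hence continuous on $\R^M$. For $E$, the function $\phi$ is continuous on $(0,\infty)$, and $t\ln t\to 0$ as $t\downarrow 0$, so the convention $0\ln 0=0$ gives a continuous extension to $[0,\infty)$. A finite sum of continuous functions of the coordinates is continuous, so $H_M$ is continuous on $(\R_+)^M$, and in particular on $\mathcal A_M$.

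\emph{Convexity of $Q$.} Since $Q(p)=\frac{\lambda_1}{(\Delta s)^3}\,p^\top (D^{(1)})^\top D^{(1)}p$ and $(D^{(1)})^\top D^{(1)}$ is positive semidefinite, $Q$ is convex. It is not strictly convex, because $D^{(1)}$ annihilates constant vectors. Strict convexity therefore has to come from the entropy term, and this is the one point that needs care.

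\emph{Strict convexity of $E$.} We have $\phi''(t)=1/t>0$ on $(0,\infty)$, so $\phi$ is strictly convex there. We then extend strict convexity to $[0,\infty)$. Take $0\le a<b$ and $\theta\in(0,1)$, and set $m:=\theta a+(1-\theta)b$.
\begin{itemize}
\item If $a>0$, the strict inequality $\phi(m)<\theta\phi(a)+(1-\theta)\phi(b)$ is the interior case.
\item If $a=0$, the inequality reads $(1-\theta)b\ln\bigl((1-\theta)b\bigr)<(1-\theta)b\ln b$. This holds because $\ln(1-\theta)<0$ and $b>0$.
\end{itemize}
Now take $p\neq p'$ in $(\R_+)^M$. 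Some coordinate $i$ has $p_i\neq p'_i$; there the inequality for $\phi$ is strict, and every other coordinate satisfies the non-strict convexity inequality. Summing gives
\[
E(\theta p+(1-\theta)p')<\theta E(p)+(1-\theta)E(p').
\]

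\emph{Conclusion.} Since $\lambda_2>0$, the sum of the convex $Q$ and the strictly convex $\lambda_2E$ is strictly convex on the convex set $(\R_+)^M$. Restricting to the convex subset $\mathcal A_M$ (convex by \Cref{prop:AM-compact-convex}) preserves both continuity and strict convexity, which proves the statement.
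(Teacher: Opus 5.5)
Your proof is correct. It uses the same split as the paper (smoothing term plus entropy) but distributes the work differently.

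The paper proves that \emph{both} pieces are strictly convex:
\begin{itemize}
\item the quadratic part on $\Sigma_M$, because $\ker\bigl((D^{(1)})^\top D^{(1)}\bigr)$ consists of constant vectors and two elements of the simplex cannot differ by a nonzero constant;
\item the entropy on $(\R_+)^M$, via the Hessian $\mathrm{Diag}(1/p_i)$ in the interior and an index-set decomposition ($p_i,q_i>0$; $p_i=0<q_i$; $q_i=0<p_i$) at the boundary.
\end{itemize}

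You only need the quadratic part to be convex and let $\lambda_2E$ carry strictness. You handle the boundary with a one-variable strict-convexity statement for $\phi$ on $[0,\infty)$, summed coordinatewise.

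Your route is shorter and does not use the unit-mass constraint at all, so it gives strict convexity on any convex subset of $(\R_+)^M$. The coordinatewise reduction also disposes in one line of the case bookkeeping the paper's split has to manage. An example is $p$ and $q$ agreeing on every index where both are positive but differing elsewhere.

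The paper's extra argument for the quadratic term buys robustness in the weights. Strict convexity, and hence uniqueness in \Cref{thm:exist-unique-discrete-minimizer}, would survive $\lambda_2=0$, i.e.\ a pure smoothness criterion on $\mathcal A_M$. Your argument genuinely requires $\lambda_2>0$, which is assumed here, so nothing is lost for the statement as posed.
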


\begin{proof}
The continuity is immediate from the expression \eqref{eq:def-HM}, so we focus on strict convexity.

\medskip
\noindent
\emph{Quadratic part.}
We have
\begin{equation}
  \lVert D^{(1)}p\rVert_2^2
  = (D^{(1)}p)^\top(D^{(1)}p)
  = p^\top (D^{(1)})^\top D^{(1)} p
  = p^\top Q p,
  \label{eq:Q-matrix}
\end{equation}
where $Q := (D^{(1)})^\top D^{(1)}\in\R^{M\times M}$ is a symmetric
tridiagonal matrix. Moreover, $Q$ is positive semidefinite and
its kernel is given by the one-dimensional subspace spanned by the vector
$\mathbbm{1} = (1,\ldots,1)^\top$:
\begin{equation}
  \ker(Q) = \mathrm{Vect}\{\mathbbm{1}\}.
  \label{eq:kernel-Q}
\end{equation}
Thus the quadratic form $p\mapsto p^\top Q p = \lVert D^{(1)}p\rVert_2^2$
is convex on $\R^M$, but not strictly convex on all of $\R^M$.

On the simplex $\Sigma_M$, however, strict convexity is recovered. Let
$p,q\in \Sigma_M$ with $p\neq q$ and $\theta\in(0,1)$. By adapting \eqref{eq:scalar-strict-convex} to the discrete
setting,
\begin{equation}
  \lVert D^{(1)}(\theta p + (1-\theta)q)\rVert_2^2
  = \theta \lVert D^{(1)}p\rVert_2^2
    + (1-\theta)\lVert D^{(1)}q\rVert_2^2
    - \theta(1-\theta)\lVert D^{(1)}(p-q)\rVert_2^2.
  \label{eq:discrete-quadratic-identity}
\end{equation}
The last term vanishes if and only if $D^{(1)}(p-q)=0$, i.e.\ $p-q$ is
constant. Since $p$ and $q$ both have total mass $1$, this is only
possible if $p=q$. Therefore, for $p\neq q$ in $\Sigma_M$,
\begin{equation}
  \lVert D^{(1)}(\theta p + (1-\theta)q)\rVert_2^2
  < \theta \lVert D^{(1)}p\rVert_2^2
    + (1-\theta)\lVert D^{(1)}q\rVert_2^2,
  \label{eq:discrete-quadratic-strict}
\end{equation}
which shows that $p\mapsto \lVert D^{(1)}p\rVert_2^2$ is strictly convex
on $\Sigma_M$, and hence on $\mathcal{A}_M\subset \Sigma_M$.

\medskip
\noindent
\emph{Entropy part.}
On the strictly positive orthant $(\mathbb{R}_{+})^M$, the discrete entropy
$S_M$ is twice continuously differentiable with Hessian
\begin{equation}
  \nabla^2 (-S_M)(p) = \mathrm{Diag}\left( \frac{1}{p_i} \right)_{i=1}^M,
  \label{eq:discrete-entropy-hessian}
\end{equation}
which is positive definite. Thus $-S_M$ is strictly convex on
$(\mathbb{R}_{+})^M$.

To handle the boundary points where some $p_i$ may vanish, consider
$p\neq q$ in $(\R_+)^M$ and $\theta\in (0,1)$. Define the index sets
\begin{equation}
  I := \{ i : p_i>0 \text{ and } q_i>0\}, \quad
  P_0 := \{ i : p_i=0,\ q_i>0\}, \quad
  Q_0 := \{ i : p_i>0,\ q_i=0\}.
  \label{eq:index-sets}
\end{equation}
Then
\begin{align}
  -S_M(\theta p + (1-\theta)q)
  &= \sum_{i=1}^M
       \bigl(\theta p_i + (1-\theta)q_i\bigr)
       \ln\bigl(\theta p_i + (1-\theta)q_i\bigr)
     \nonumber\\
  &= \sum_{i\in I}
       \bigl(\theta p_i + (1-\theta)q_i\bigr)
       \ln\bigl(\theta p_i + (1-\theta)q_i\bigr)
     \nonumber\\
  &\quad + \sum_{i\in Q_0} \theta p_i \ln(\theta p_i)
       + \sum_{i\in P_0} (1-\theta) q_i \ln\bigl((1-\theta) q_i\bigr).
  \label{eq:entropy-split}
\end{align}
Using the strict convexity in $I$, we
obtain
\begin{equation}
  \sum_{i\in I}
    \bigl(\theta p_i + (1-\theta)q_i\bigr)
    \ln\bigl(\theta p_i + (1-\theta)q_i\bigr)
  < \theta \sum_{i\in I} p_i\ln p_i
    + (1-\theta)\sum_{i\in I} q_i\ln q_i,
  \label{eq:entropy-I}
\end{equation}
whenever $I\neq\emptyset$. In addition, expanding
the last two terms in \eqref{eq:entropy-split} yields
\begin{align}
  \sum_{i\in Q_0} \theta p_i \ln(\theta p_i)
  &= \theta \sum_{i\in Q_0} p_i\ln p_i
     + \theta \ln\theta \sum_{i\in Q_0} p_i,
     \nonumber\\
  \sum_{i\in P_0} (1-\theta) q_i \ln\bigl((1-\theta) q_i\bigr)
  &= (1-\theta) \sum_{i\in P_0} q_i\ln q_i
     + (1-\theta)\ln(1-\theta) \sum_{i\in P_0} q_i.
  \label{eq:entropy-PQ}
\end{align}
If $I = \emptyset$ then either $P_0$ or $Q_0$ is non-empty. Since $\theta\in(0,1)$, any strictly positive mass in these sets yields a strictly negative term.\\
Hence $-S_M$ is strictly convex on $(\R_+)^M$, and \emph{a fortiori} on $\mathcal{A}_M$.

\medskip
Combining the strict convexity of the quadratic part on $\mathcal{A}_M$ with the
strict convexity of the entropy part on $\mathcal{A}_M$, we conclude that $H_M$ is
strictly convex on $\mathcal{A}_M$.
\end{proof}

\begin{theorem}
\label{thm:exist-unique-discrete-minimizer}
For any compatible grid $(s_i)_{i=1}^M$ in Definition~\ref{def:grid-decision-variable}, the discrete functional $H_M$ admits a unique minimizer $p^\ast$ over $\mathcal A_M$.
\end{theorem}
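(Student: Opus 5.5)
The argument is the standard Weierstrass-plus-strict-convexity route, using \Cref{prop:AM-compact-convex} and \Cref{prop:HM-strict-convex}.

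\emph{Existence.} By \Cref{prop:AM-compact-convex}, $\mathcal A_M$ is non-empty and compact in $\R^M$. It sits inside the simplex $\Sigma_M\subset(\R_+)^M$, where $H_M$ is defined with the convention $0\ln 0=0$. By \Cref{prop:HM-strict-convex}, $H_M$ is continuous on $\mathcal A_M$.

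The only delicate point is continuity of the entropy term at boundary points where some $p_i=0$. There I will simply note that $x\mapsto x\ln x$, extended by $0$ at $x=0$, is continuous on $[0,\infty)$, since $x\ln x\to 0$ as $x\downarrow 0$. Hence $H_M$ is a finite sum of continuous functions on $(\R_+)^M$.

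The Weierstrass extreme value theorem then yields some $p^\ast\in\mathcal A_M$ with $H_M(p^\ast)=\min_{\mathcal A_M}H_M$. This minimum is finite, because $H_M$ is bounded on the compact set $\mathcal A_M$.

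\emph{Uniqueness.} Suppose $p^\ast\neq \tilde p$ are both minimizers, with common value $h:=\min_{\mathcal A_M}H_M$. Convexity of $\mathcal A_M$ (\Cref{prop:AM-compact-convex}) gives $\tfrac12(p^\ast+\tilde p)\in\mathcal A_M$. Strict convexity of $H_M$ on $\mathcal A_M$ (\Cref{prop:HM-strict-convex}) then gives
\[
H_M\bigl(\tfrac12 p^\ast+\tfrac12\tilde p\bigr)<\tfrac12 H_M(p^\ast)+\tfrac12 H_M(\tilde p)=h,
\]
which contradicts the minimality of $h$. Hence the minimizer is unique.

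Strict convexity on $\mathcal A_M$ already follows from the quadratic part alone, since $\lambda_1>0$ and the quadratic form is strictly convex on $\Sigma_M$. So uniqueness does not depend on the more delicate boundary analysis of the entropy term.

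\emph{Main obstacle.} Given the earlier propositions, no step here is truly hard. The substantive input is non-emptiness of $\mathcal A_M$, which is inherited from \Cref{prop:existence-admissible-C} and \Cref{prop:boundary-shift}. Specifically, the extended Breeden--Litzenberger measure $\nu$ must be supported on grid points inside $\mathcal X$. For this I need $K'$ and $K_{N+1}$ to be grid nodes with $K'<\bar K$ and $K_{N+1}<b$, and the compatible grid is assumed to provide them. I would state this explicitly when invoking \Cref{prop:AM-compact-convex}: the vector $p$ with $p_i=\nu(\{s_i\})$ lies in $\mathcal A_M$ because it satisfies the simplex, forward (the $i=0$ case of \eqref{eq:nu-matches-C}) and call-price constraints. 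Everything else is the Weierstrass argument and strict convexity.
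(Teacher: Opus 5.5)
Your proposal is correct and follows the paper's proof essentially verbatim. Non-emptiness, compactness and convexity of $\mathcal A_M$ come from \Cref{prop:AM-compact-convex}, existence from continuity plus Weierstrass, and uniqueness from the strict convexity of \Cref{prop:HM-strict-convex}. Your side remarks are also sound. Strict convexity on $\Sigma_M$ does already follow from the $D^{(1)}$ term alone, with the entropy term only needing to be convex. And the requirement that $K'<\bar K$ and $K_{N+1}<b$ be grid nodes is exactly the implicit assumption the paper relies on when it asserts that $\mathcal A_M$ is non-empty.
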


\begin{proof}
By Proposition~\ref{prop:AM-compact-convex}, the set $\mathcal{A}_M$ is a non-empty
compact convex subset of $\R^M$. The functional $H_M$ is continuous on
$(\R_+)^M$ and strictly convex on $\mathcal{A}_M$
(Proposition~\ref{prop:HM-strict-convex}). By the
Weierstrass theorem, any (lower semi-) continuous function on a compact set attains its
minimum; thus $H_M$ attains its minimum on $\mathcal{A}_M$. The strict convexity of
$H_M$ on $\mathcal{A}_M$ ensures that this minimizer is unique.
\end{proof}
\section{Numerical Results}
\label{sec:numerical_results}
\subsection{Data Cleaning and Preprocessing}

\paragraph{Data Source and Coverage.}

Our empirical analysis is based on intraday index and option data on the S\&P~500 from January~2012 to July~2023, obtained from the Cboe DataShop.

For the underlying index, the dataset provides bid and ask quotes sampled at five-minute intervals. Option data include bid and ask quotes and associated quoted sizes, executed trading volumes, opening and closing prices, as well as high and low transaction prices, all recorded at the same five-minute frequency. The dataset also includes contract characteristics; namely option type (standard SPX or weekly SPXW), strike prices and expiration dates. Finally, the dataset reports the open interest measured at the beginning of each trading day.

Although both SPX and SPXW options are European-style and cash-settled, they differ in their trading and settlement conventions. Weekly SPXW options expire from Monday to Friday at the market close (PM settlement) and remain tradable throughout the expiration day. In contrast, standard monthly SPX options expire on the third Friday of each month at the market open (AM settlement) and cease trading at the market close on the preceding day. As a consequence, 0DTE SPX options are exposed to overnight gap risk between the last trading opportunity and settlement, whereas 0DTE SPXW options are not.

\paragraph{Data Cleaning and Sample Construction.}
The data-cleaning procedure is deliberately parsimonious. Rather than aggressively filtering observations ex ante, we rely on ARIES (introduced in \Cref{subsubsec:ARIES}) to identify and eliminate arbitrage-violating quotes. This approach allows us to preserve broad market information while ensuring arbitrage-free inputs at the estimation stage.

We implement the following preliminary filters. First, we discard all observations on standard SPX options and retain only SPXW contracts. Second, we remove option quotes exhibiting a zero bid price, zero open interest, or zero bid and ask sizes, as these are unlikely to reflect meaningful trading interest or reliable price discovery. Third, except for the specific purpose of estimating the risk-free rate and dividend yield outlined hereafter, we exclude in-the-money options.

\paragraph{Data Normalization} 
Throughout the following implementation, we exploit the first-order homogeneity of call prices with respect to the underlying asset and the strike. To improve numerical stability, we normalize all market quantities by the spot price $S_0$. Equivalently, we work in units of spot and set $S_0 = 1$. Under this normalization, strikes are replaced by $K_i/S_0$, the forward price is $F_0^T/S_0 = e^{(r-q)T}$, and bid and ask call prices become $C_i^{\mathrm{bid}}/S_0$ and $C_i^{\mathrm{ask}}/S_0$.

\paragraph{Conventions.}
\label{para:conventions_T}
For options with at least one day to expiry, we retain quotes observed 15 minutes before the market close, that is, at 16{:}00~EST. Time to maturity is measured on a calendar-day basis using a 365-day year; for instance, an option with one remaining trading day to expiration has $T=1/365$. Using a 252-day convention yields quantitatively similar results and does not affect our conclusions.
Throughout the analysis, the underlying index level is proxied by the mid-quote, defined as the average of the bid and ask prices. Together, these conventions and the limited filtering described above yield a clean dataset suitable for the estimation and empirical analyses that follow.
\paragraph{Estimation of the Risk-Free Rate and Dividend Yield}
Accurate estimation of the risk-free rate, $r$, and the continuous dividend yield, $q$, is fundamental for extracting arbitrage-free risk-neutral densities. While standard practice often relies on Treasury yields, recent literature (\cite{van2022risk, diamond2022risk}) argues that option-implied rates provide a more consistent pricing benchmark. Unlike Treasury valuations, which may be distorted by convenience yields, implied rates reflect the actual funding costs facing market participants.
Accordingly, we estimate $r$ and $q$ directly from the option chain at each maturity using put--call parity\footnote{Note that if $T < 1/365$, we apply the convention $r=q=0$, as detailed in Appendix~\ref{appendix:procedure_rq}.}, following the comprehensive procedure detailed in Appendix~\ref{appendix:procedure_rq}.

\paragraph{Unified Call Option Dataset}

Our raw data universe contains European calls and puts quoted with bid--ask spreads.
More precisely, we observe $N_c$ call options with strikes
$K_i^c$ and bid--ask quotes $(C_i^{\mathrm{bid}},C_i^{\mathrm{ask}})$
for $i=1,\dots,N_c$, and $N_p$ put options with strikes
$K_j^p$ and bid--ask quotes $(P_j^{\mathrm{bid}},P_j^{\mathrm{ask}})$
for $j=1,\dots,N_p$.

For the theoretical analysis, it is convenient to work with a single family of payoffs. The next lemma shows that, given the forward level $F_0^T$ and the discount factor
$e^{-rT}$, put constraints can be rewritten as constraints on call payoffs at the same
strikes with appropriately shifted bid--ask bounds.

We begin by collecting the baseline restrictions that define the discrete
admissible (risk-neutral) distributions on the grid $(s_i)_{i=1}^M$, including the
forward condition:
\begin{equation}
\mathcal{D}_M
:=
\left\{
p \in \Sigma_M
\;\middle|\;
\sum_{i=1}^M s_i p_i = F_0^T
\right\},
\label{eq:DM-density-class}
\end{equation}
where $\Sigma_M$ is the simplex defined in \eqref{eq:simplex}.

For brevity, given an admissible distribution \(p\), we introduce the associated call and put price functions by
\begin{equation}
C^{p}(K)
:=
e^{-rT}\sum_{i=1}^{M} (s_i-K)_+\,p_i,
\qquad K\ge 0,
\label{eq:Cp-def}
\end{equation}
and
\begin{equation}
P^{p}(K)
:=
e^{-rT}\sum_{i=1}^{M} (K-s_i)_+\,p_i,
\qquad K\ge 0.
\label{eq:Pp-def}
\end{equation}

The admissible set induced by the raw call--put dataset is
\begin{equation}
\mathcal{A}_{M}^{cp}
:=
\left\{
p \in \mathcal{D}_M
\;\middle|\;
\begin{aligned}
& C^{p}(K_i^c)\in [C_i^{\mathrm{bid}}, C_i^{\mathrm{ask}}],
\qquad i=1,\dots,N_c,\\[2pt]
& P^{p}(K_j^p)\in [P_j^{\mathrm{bid}}, P_j^{\mathrm{ask}}],
\qquad j=1,\dots,N_p.
\end{aligned}
\right\}.
\label{eq:AM-admissible-set-cp-compact}
\end{equation}

For convenience we will transform the put prices into synthetic calls by:
\begin{equation}
[\widetilde{C}_{j}^{\mathrm{bid}}, \widetilde{C}_{j}^{\mathrm{ask}}]
=
\Bigl[
P_{j}^{\mathrm{bid}} + e^{-rT}\bigl(F_0^T-K_j^p\bigr),
\;
P_{j}^{\mathrm{ask}} + e^{-rT}\bigl(F_0^T-K_j^p\bigr)
\Bigr],
\qquad j=1,\dots,N_p.
\label{eq:transformed-bounds-discrete}
\end{equation}

The unified call admissible set is
\begin{equation}
\mathcal{A}_{M}^{c}
:=
\left\{
p \in \mathcal{D}_M
\;\middle|\;
\begin{aligned}
& C^{p}(K_i^c)\in [C_i^{\mathrm{bid}}, C_i^{\mathrm{ask}}],
\qquad i=1,\dots,N_c,\\[2pt]
& C^{p}(K_j^p)\in [\widetilde{C}_j^{\mathrm{bid}}, \widetilde{C}_j^{\mathrm{ask}}],
\qquad j=1,\dots,N_p.
\end{aligned}
\right\}.
\label{eq:AM-admissible-set-callonly-compact}
\end{equation}

\begin{lemma}[Equivalence of Admissible Sets]
\label{lemma:equivalence_adm_set}
We have $\mathcal{A}_{M}^{cp}=\mathcal{A}_{M}^{c}$. That is, the set of discrete
admissible distributions constrained by the original dataset of calls and puts
coincides with the set constrained by calls and synthetic calls.
\end{lemma}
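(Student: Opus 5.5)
The plan is to establish a discrete put--call parity identity valid for every $p\in\mathcal D_M$, and then to show that under this identity each put constraint is equivalent to the corresponding synthetic call constraint. Both sets are subsets of $\mathcal D_M$, and they impose identical call constraints at the strikes $K_i^c$. It therefore suffices to prove, for fixed $p\in\mathcal D_M$ and each $j=1,\dots,N_p$,
\[
P^{p}(K_j^p)\in [P_j^{\mathrm{bid}}, P_j^{\mathrm{ask}}]
\iff
C^{p}(K_j^p)\in [\widetilde C_j^{\mathrm{bid}}, \widetilde C_j^{\mathrm{ask}}].
\]

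\textbf{Step 1: discrete parity.} For any real $s$ and $K\ge 0$ we have the pointwise identity $(s-K)_+-(K-s)_+=s-K$. We multiply by $p_i$, sum over $i=1,\dots,M$, and multiply by $e^{-rT}$, using \eqref{eq:Cp-def} and \eqref{eq:Pp-def}. This gives
\[
C^{p}(K)-P^{p}(K)=e^{-rT}\Bigl(\sum_{i=1}^M s_ip_i-K\sum_{i=1}^M p_i\Bigr).
\]
For $p\in\mathcal D_M$, the simplex constraint gives $\sum_i p_i=1$ and the forward constraint gives $\sum_i s_ip_i=F_0^T$. Hence
\[
C^{p}(K)=P^{p}(K)+e^{-rT}(F_0^T-K),\qquad K\ge0 .
\]

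\textbf{Step 2: translating the intervals.} Fix $j$ and set $K=K_j^p$. By Step 1, the map $x\mapsto x+e^{-rT}(F_0^T-K_j^p)$ sends $P^p(K_j^p)$ to $C^p(K_j^p)$. Because it is an increasing bijection of $\R$, it carries $[P_j^{\mathrm{bid}},P_j^{\mathrm{ask}}]$ exactly onto $[\widetilde C_j^{\mathrm{bid}},\widetilde C_j^{\mathrm{ask}}]$ as defined in \eqref{eq:transformed-bounds-discrete}. The two memberships are therefore equivalent, which gives both inclusions $\mathcal A_M^{cp}\subseteq\mathcal A_M^{c}$ and $\mathcal A_M^{c}\subseteq\mathcal A_M^{cp}$ at once.

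There is no genuine obstacle. The only point requiring care is that the parity identity uses both the normalization and the forward constraint. It therefore holds only on $\mathcal D_M$, not on all of $\R^M$, so it must be stated for $p\in\mathcal D_M$ before the constraints are compared.
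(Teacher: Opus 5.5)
Your proposal is correct and follows essentially the same route as the paper's proof. Both derive the discrete put--call parity $C^{p}(K)=P^{p}(K)+e^{-rT}(F_0^T-K)$ on $\mathcal D_M$ from the pointwise payoff identity together with the simplex and forward constraints, and then shift the $j$-th put bid--ask interval by the constant $e^{-rT}(F_0^T-K_j^p)$ to obtain the synthetic call interval.
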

\begin{proof}
See Appendix~\ref{appendix:equi_admissible_sets}.
\end{proof}

Building upon this result, we restrict our analysis to call options without loss of generality, ensuring consistency with the market framework established in Section~\ref{sec:Notation and Market Framework}. Specifically, the unified dataset is mapped onto a single family of $N := N_p + N_c$ call constraints. The first $N_p$ entries represent synthetic calls derived from out-of-the-money puts at strikes $K_j^p$, with associated bid--ask spreads $[\widetilde{C}_{j}^{\mathrm{bid}}, \widetilde{C}_{j}^{\mathrm{ask}}]$.

\subsection{Numerical Implementation}
Our numerical procedure involves two sequential stages: isolating an arbitrage-free set of options and subsequently extracting the risk-neutral density. For the first stage, Proposition~\ref{prop:characterization_proof_version} ensures that whenever an arbitrage is detected (strong or weak), at least one upper size bound is active at an optimal solution of \eqref{eq:P3prime}. Based on this property, ARIES (Algorithm~\ref{alg:arb_filter}, detailed in the Appendix) (i) solves the linear program \eqref{eq:P3prime}, (ii) removes an option associated with an active upper bound (selecting the one with the smallest available size if multiple bounds are simultaneously active), and (iii) repeats this process until the unique optimizer is the null portfolio. Since at least one option is removed at each iteration, the procedure terminates after at most $N$ iterations. The selection rule (ii) effectively eliminates arbitrage while preserving the maximum amount of market information, yielding a strictly arbitrage-free dataset denoted by the index set $\mathcal I^*$.\\

Having secured the arbitrage-free subset $\mathcal I^*$, we proceed to apply SEDEx to the arbitrage-free quotes. As established in Theorem~\ref{thm:exist-unique-discrete-minimizer}, the associated discrete optimization problem on this filtered dataset is well-posed and admits a unique minimizer. To implement this extraction, we first define the spatial grid and then specify the objective function's parameters.

Let the strike bounds of our filtered dataset be defined as

\begin{equation}
    K_{\inf} := \min_{i\in\mathcal I^*} K_i,
    \qquad
    K_{\sup} := \max_{i\in\mathcal I^*} K_i,
    \label{eq:strike-bounds}
\end{equation}
and let $\sigma_{\mathrm{ATM}}$ denote the implied volatility computed from at-the-money (ATM) mid price. 

A natural requirement on the spatial discretization, discussed in Section~\ref{subsec:mesh_size}, is that the mesh size $\Delta s$ be sufficiently fine relative to the typical variation in the underlying corresponding to a probability increment $\delta$; see \eqref{eq:relative-variation-def}. In practice, we set $\delta = 0.5\%$ and use the asymptotic guideline in \eqref{eq:R-min-asympt} as a target resolution,
\begin{equation}
\Delta s_{\mathrm{tar}}
=
\sigma_{\mathrm{ATM}} \sqrt{2\pi T}\,\delta.
\label{eq:target-mesh-size}
\end{equation}
Let $\eta$ denote the mesh of the normalized strike lattice.\footnote{SPXW strikes are quoted on a fixed exchange lattice, for example in 5-index-point increments over the relevant range. A filtered option chain may contain only a subset of these strikes, so consecutive observed strikes can be separated by larger gaps. We therefore take the mesh of the exchange lattice as the reference mesh and choose the SEDEx grid as its refinement.}
We choose the smallest integer
\begin{equation}
m
:=
\min\left\{
\ell\in\mathbb N^{*}:\frac{\eta}{\ell}\le \Delta s_{\mathrm{tar}}
\right\},
\label{eq:grid-refinement-factor}
\end{equation}
and set
\begin{equation}
\Delta s:=\frac{\eta}{m}.
\label{eq:compatible-mesh-size}
\end{equation}

We next define a preliminary upper support bound as the maximum of two complementary quantities:
\begin{equation}
\label{eq:preliminary-upper-bound}
b_0=\max\left(
K_{\sup}+(\kappa_{1}-1)(K_{\sup}-K_{\inf}),
\;
S_{0}e^{\kappa_{2}\sigma_{\mathrm{ATM}}\sqrt{T}}
\right).
\end{equation}
The first term extends the observed strike range by a proportion controlled by $\kappa_{1}>1$, thereby enlarging the support beyond the largest quoted strike $K_{\sup}$ while preserving the scale of the quoted interval $[K_{\inf},K_{\sup}]$.  
The second term provides a volatility-based upper bound derived from a lognormal scaling around the spot level $S_{0}$, where $\kappa_{2}>0$ controls the number of volatility standard deviations retained in the support.
In practice\footnote{For computational efficiency, the lower bound of the support can alternatively be truncated to $\min\left(
K_{\inf}-(\kappa_{1}-1)(K_{\sup}-K_{\inf}),
\;
S_{0}e^{-\kappa_{2}\sigma_{\mathrm{ATM}}\sqrt{T}}
\right)$. Empirical evidence suggests this yields identical results while substantially reducing the number of discretization points.}, we set $\kappa_1 = 1.50$ and $\kappa_2 = 10$.

The final upper endpoint of the numerical support is the nearest point of the refined lattice above \(b_0\):
\begin{equation}
b
:=
\Delta s\left\lceil \frac{b_0}{\Delta s}\right\rceil,
\qquad
M:=\frac{b}{\Delta s}.
\label{eq:upper-bound-support}
\end{equation}
Thus all filtered observed strikes belong to the numerical grid \((s_i)_{i=1}^{M}\) defined in \eqref{eq:grid_def}.

With the grid fully specified, the final requirement is to calibrate the hybrid criterion $H_M$ defined in \eqref{eq:def-HM}, which involves the regularization weights $\lambda_1$ and $\lambda_2$. Following Section~\ref{subsec:reg_weight}, we select $\lambda_1$ and $\lambda_2$ so that their ratio satisfies the scaling relation in \eqref{eq:lambda-ratio-scaling}, yielding a reasonable balance between the $L^2$ and entropy terms in the density recovery:
$$
\frac{\lambda_{1}}{\lambda_{2}}
\sim
-4\sqrt{\pi}\,\sigma_{\mathrm{ATM}}^{3}T^{3/2}
\ln(\sigma_{\mathrm{ATM}}\sqrt{T}).
$$
The complete sequence of SEDEx operations for this second stage is summarized in Algorithm~\ref{alg:rnd_extraction}, which is detailed in the Appendix.

\subsection{Consistency Tests}

This subsection provides a validation step between the theoretical results of \Cref{sec:theoretical_results} and the empirical analysis. It is organized around two complementary exercises. First, the Cousot conditions are used as a consistency test to determine whether observed bid--ask quotes contain static arbitrage and, if so, which type of violation is present. Second, a synthetic Heston panel serves as a standard toy model to evaluate the full procedure in a controlled environment where the underlying risk-neutral density is known.

\subsubsection{Cousot Conditions}
\label{subsec:cousot_cond}
Within the market setup introduced above, the absence of static arbitrage implies a set of model-free constraints on quoted call prices. These constraints correspond to the bid--ask counterparts of the standard positivity, monotonicity, convexity, and lower-bound conditions for call option prices. They are closely related to the inequalities established by \citet[Proposition~3]{Cousot2006}, except that in our framework they must hold in strict rather than weak form.

This distinction is important because weak inequalities only exclude strong (model-independent) arbitrage. As discussed in \citet[Remark~4]{Cousot2006}, weak arbitrages inherently depend on the choice of the probability measure and cannot be ruled out prior to its specification, as they may materialize on events of zero probability. Accordingly, weak inequalities are necessary model-free restrictions, but they do not exclude all forms of arbitrage.

When these weak inequalities hold, \citet[Proposition~5]{Cousot2006} shows that there exists at least one (atomic) martingale measure consistent with observed quotes. However, such a measure might assign zero probability to states that would otherwise generate weak arbitrages. Under our no-arbitrage definition (\Cref{def:arbitrage}), equality cases must be excluded because they generate zero-cost portfolios with nonnegative payoff everywhere and strictly positive payoff on a nontrivial set of states. Consequently, we work throughout with the strict counterparts of these inequalities.

\begin{enumerate}

\item \textbf{Positivity:} For all \(i \in \{0,\dots,N\}\), the ask price with strike \(K_i\) must be strictly positive:
\begin{equation}
\label{eq:pos-ask}
C_i^{\mathrm{ask}} > 0 .
\end{equation}

\item \textbf{Vertical Spread - Monotonicity:} For all \(0 \leq i < j \leq N\), the cost of a vertical spread with strikes \((K_i,K_j)\) executed at market quotes must be strictly positive:
\begin{equation}
\label{eq:vertical}
C_i^{\mathrm{ask}} - C_j^{\mathrm{bid}} > 0 .
\end{equation}

\item \textbf{Butterfly Spread - Convexity:} For all \(0 \leq i < j < k \leq N\), the cost of a butterfly strategy with strikes \((K_i,K_j,K_k)\) executed at market quotes must be strictly positive:
\begin{equation}
\label{eq:butterfly}
\frac{C_i^{\mathrm{ask}} - C_j^{\mathrm{bid}}}{K_j - K_i} - \frac{C_j^{\mathrm{bid}} - C_k^{\mathrm{ask}}}{K_k - K_j} > 0 .
\end{equation}

\item \textbf{Lower Bound:} For all \(i \in \{1,\dots,N\}\), call ask prices must satisfy the lower bound:
\begin{equation}
\label{eq:synth-put-2}
C_i^{\mathrm{ask}} - S_0 e^{-qT} + K_i e^{-rT} > 0 .
\end{equation}
\end{enumerate}
In particular, no inequality involves more than three quoted strikes.
These conditions will subsequently be employed as a consistency test within our arbitrage detection and removal procedure to ensure the absence of arbitrage in the processed data.

\subsubsection{Toy Model: Heston}
We consider a 1DTE slice of European call prices 
generated under the Heston stochastic volatility model 
\citep{heston1993closed}. Heston serves here solely as a controlled 
synthetic environment providing a known ground-truth risk-neutral 
density. The extraction procedure itself is fully model-free: it makes 
no parametric assumption about the underlying price dynamics. Parameter 
values are reported in the footnote below.\footnote{The Heston parameters used in the toy experiment are {\(v_0 = 0.117 \cdot10^{-1}\), \(\theta = 0.394\cdot10^{-1}\), \(\kappa = 1.0\), \(\sigma = 0.30\), and \(\rho = -0.70\).}}

Our synthetic option panel consists of 84 strikes ranging from \(0.87\) to \(1.03\). This asymmetric strike range is designed to mimic the structure of short-dated option data, where quoted strikes are often deeper on the in-the-money side than on the out-of-the-money side. Throughout the experiment, the option prices and associated density are computed using the COS method \citep{fang2009novel} and serve as references.

The purpose of this toy example is threefold. First, we verify that the extraction procedure in \Cref{alg:rnd_extraction} recovers a density consistent with the true Heston density when quotes are frictionless and arbitrage-free. Second, we study the effect of introducing bid--ask spreads while preserving arbitrage consistency. Third, we analyze the case where arbitrage violations are deliberately injected into the bid--ask quotes. We finally show that ARIES (\Cref{alg:arb_filter}) is necessary to restore feasibility and recover a stable density estimate.

\paragraph{Frictionless Baseline}We begin with a frictionless baseline in which the call prices are observed without bid--ask spreads and satisfy the no-arbitrage conditions. In this setting, SEDEx performs as expected. As shown in Figure~\ref{fig:heston_frictionless}.a, the recovered risk-neutral density closely overlaps the Heston density over the relevant support. The same conclusion emerges from the pricing side: the call prices implied by the extracted density are indistinguishable from the Heston model prices across the whole strike range considered (Figure~\ref{fig:heston_frictionless}.b). This first experiment therefore validates SEDEx in the ideal case, confirming the procedure's ability to recover the underlying risk-neutral distribution with negligible error from exact, arbitrage-free quotes.

\begin{figure}[t!]
    \centering
    \begin{subfigure}[b]{0.48\textwidth}
        \centering
        \includegraphics[width=\textwidth]{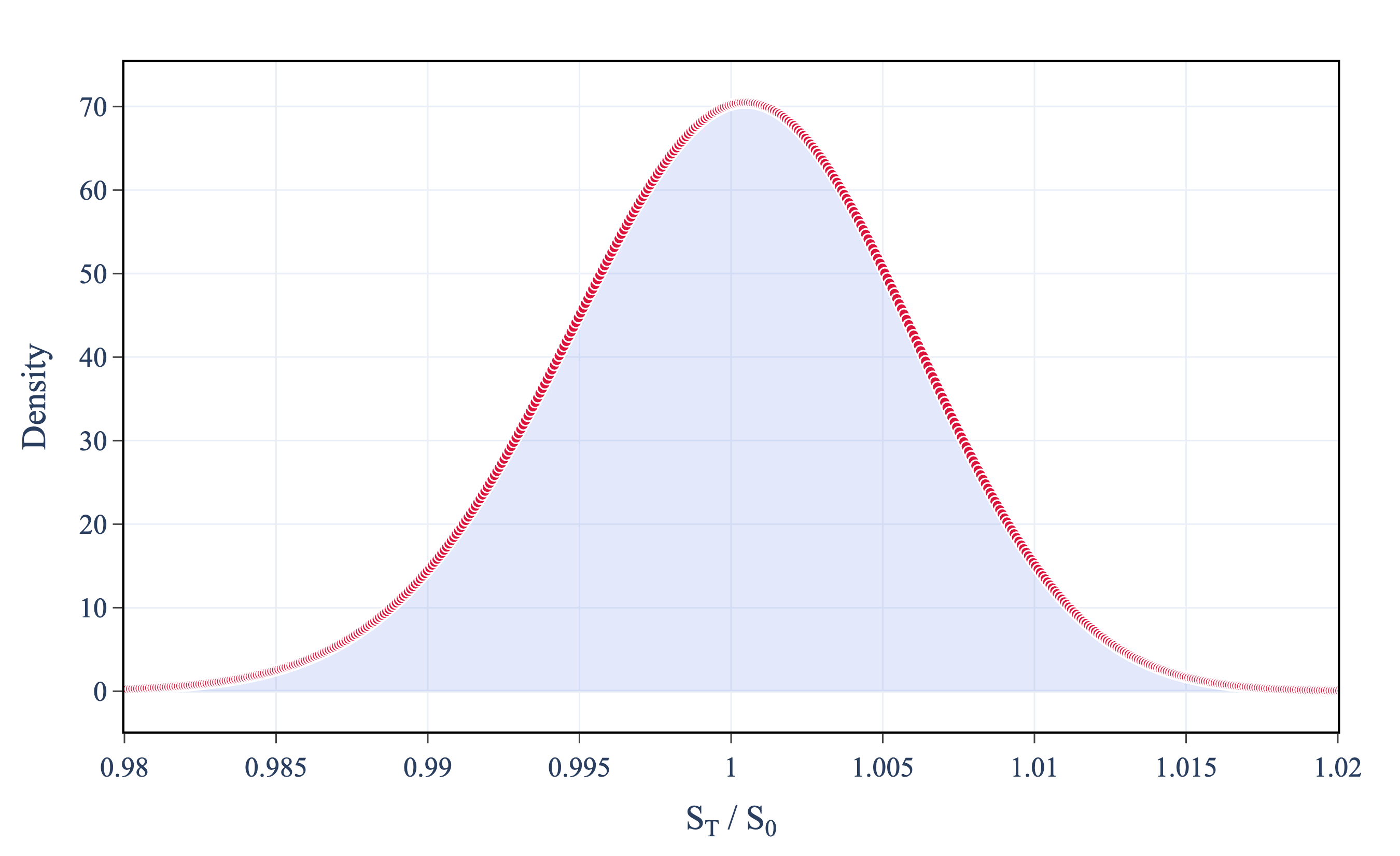}
        \caption*{(a)}
    \end{subfigure}
    \hfill
    \begin{subfigure}[b]{0.48\textwidth}
        \centering
        \includegraphics[width=\textwidth]{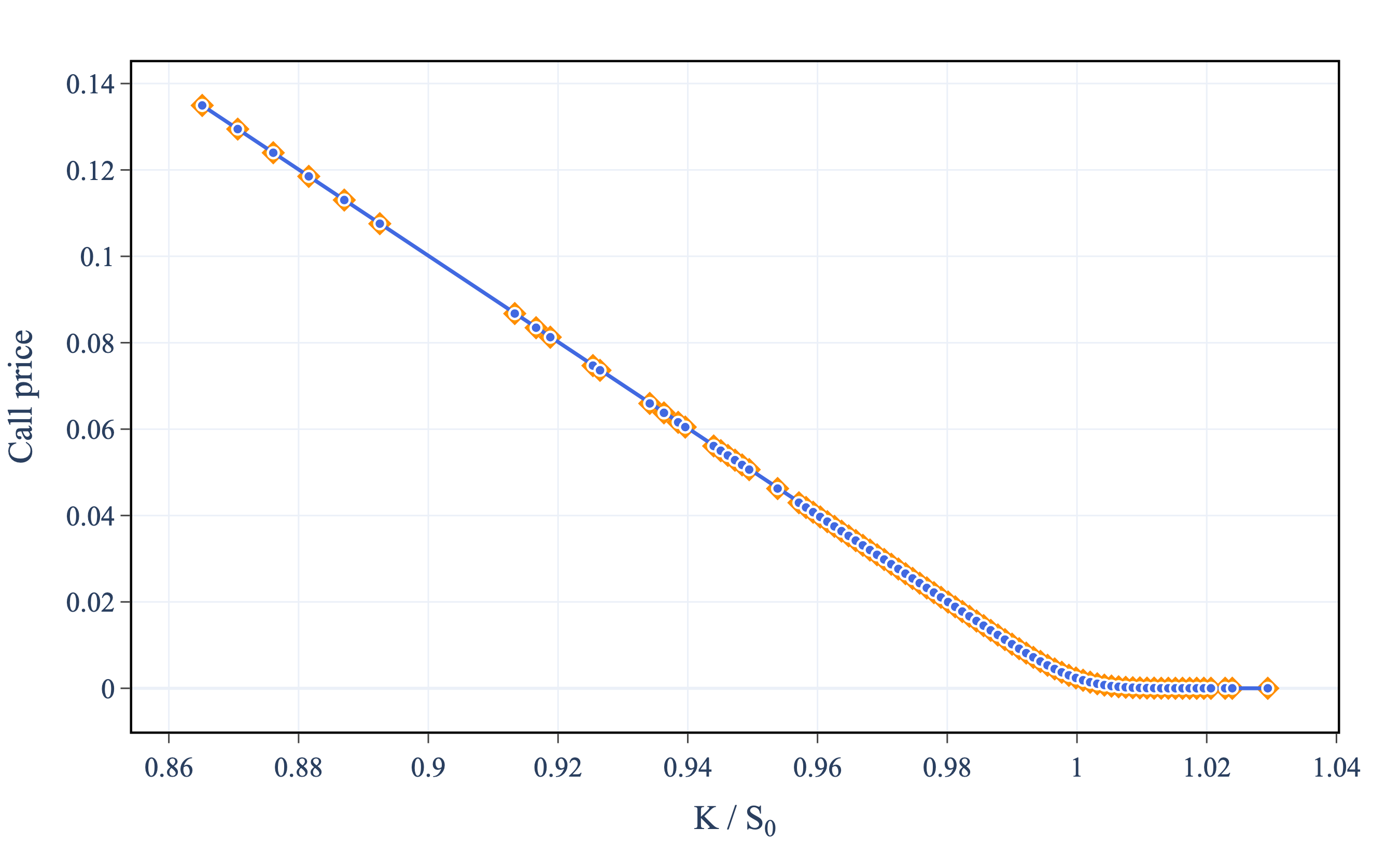}
        \caption*{(b)}
    \end{subfigure}
    \caption{Frictionless baseline for 1DTE options. \textit{(a)} Heston reference 
    risk-neutral density (blue solid line) and SEDEx density (red dots): the two 
    densities are visually indistinguishable at the scale of the figure. \textit{(b)} 
    Heston call prices (blue line with circles) and call prices implied by SEDEx 
    (orange diamonds): the two series overlap perfectly over the full strike range, up to a
    numerical tolerance of $10^{-4}$ in absolute price terms.}
    \label{fig:heston_frictionless}
\end{figure}

\paragraph{Arbitrage-Free Bid--Ask Setting}We next introduce artificial bid-ask spreads. In the absence of a standard benchmark in the  literature, we adopt the following parsimonious parametric form: taking the Heston prices as mid quotes, we construct bid
and ask by subtracting and adding, respectively, half of a
strike-dependent spread
\begin{equation}
    \label{eq:toy-spread}
    S(K) \;=\; S_{\mathrm{peak}}
              \exp\!\left(-\frac{(K - F_0^T)^2}{2h^2}\right)
              + S_{\mathrm{base}}.
\end{equation}
In this specification, $S_{\mathrm{base}} > 0$ defines the baseline spread, and the Gaussian component is centered at the forward price with amplitude $S_{\mathrm{peak}}$ and scale $h$.

The parameters are set to reflect typical short-dated quotes. The floor
$S_{\mathrm{base}}$ is set to one tick (0.05 index points), the minimum quoted spread
for SPXW contracts quoted below three points. This regime is typical of
short-dated deep out-of-the-money contracts. The amplitude
$S_{\mathrm{peak}}$ is set to four ticks. 
The width is taken to be $h = \sigma_{\mathrm{ATM}}\sqrt{T}$ to anchor the spread to the natural risk-neutral dispersion scale of the underlying over the life of the option.

The \textit{absolute} spread $S$ has a Gaussian shape, which may seem
inconsistent with the view that liquidity concentrates at the money, but the two are readily reconciled. Transaction costs are better measured by the
\emph{relative} spread
\begin{equation*}
    R(K) \;:=\; \frac{S(K)}{M(K)},
\end{equation*}
where $M(K)$ denotes the out-of-the-money mid price: the put for
$K < F_0^T$ and the call for $K \geq F_0^T$. At the forward, $R$ is
small: its numerator is bounded by a few ticks, its denominator
is the at-the-money premium, which is orders of magnitude larger.
Far from the money, option premia collapse toward the tick
floor. Both numerator and denominator are then of order
$S_{\mathrm{base}}$, so $R$ is of order one. The relative spread
therefore traces a U-shape with a global minimum at
$K = F_0^T$. This argument is made rigorous in the footnote below.\footnote{
That
$F_0^T$ is a strict local minimum of $R$ follows from two structural properties, independent of the parameter values. First, \(S'(F_0^T)=0\) by construction. Second, the mid price has a kink at $F_0^T$, with positive left derivative from the put and negative right derivative from the call. A quotient rule computation then gives
\begin{equation}
    R'\bigl((F_0^T)^-\bigr) < 0 < R'\bigl((F_0^T)^+\bigr).
    \label{eq:relative_spread_ratio}
\end{equation}

Global minimality over the traded strike range follows from the
log-slope dominance condition
\[
    \Lambda^{C}(K) \,>\, \Gamma(K) \text{ for } K > F_0^T,
    \qquad
    \Lambda^{P}(K) \,>\, \Gamma(K) \text{ for } K < F_0^T,
\]
with $\Lambda^{C}(K) = -\mathrm{d}\ln C(K)/\mathrm{d}K$,
$\Lambda^{P}(K) = \mathrm{d}\ln P(K)/\mathrm{d}K$, and
$\Gamma(K) = \left\lvert \mathrm{d}\ln S(K)/\mathrm{d}K \right\rvert$.
This condition is verified numerically for the parameter values used throughout.}

In this setting, the SEDEx density remains close to the reference Heston density, although the two no longer coincide exactly (Figure~\ref{fig:heston_bidask}.a). This discrepancy is not surprising. Once option prices are observed through bid--ask intervals rather than as exact point values, multiple risk-neutral densities may be consistent with the same admissible quotes. The role of our hybrid criterion is precisely to select a unique density within this feasible set. As shown in Figure~\ref{fig:heston_bidask}.b, the call prices implied by SEDEx are very close to the Heston reference prices. Furthermore, they remain within the prescribed bid--ask bounds over the full strike range, which is the relevant requirement in this setting.

\begin{figure}[t!]
    \centering
    \begin{subfigure}[b]{0.48\textwidth}
        \centering
        \includegraphics[width=\textwidth]{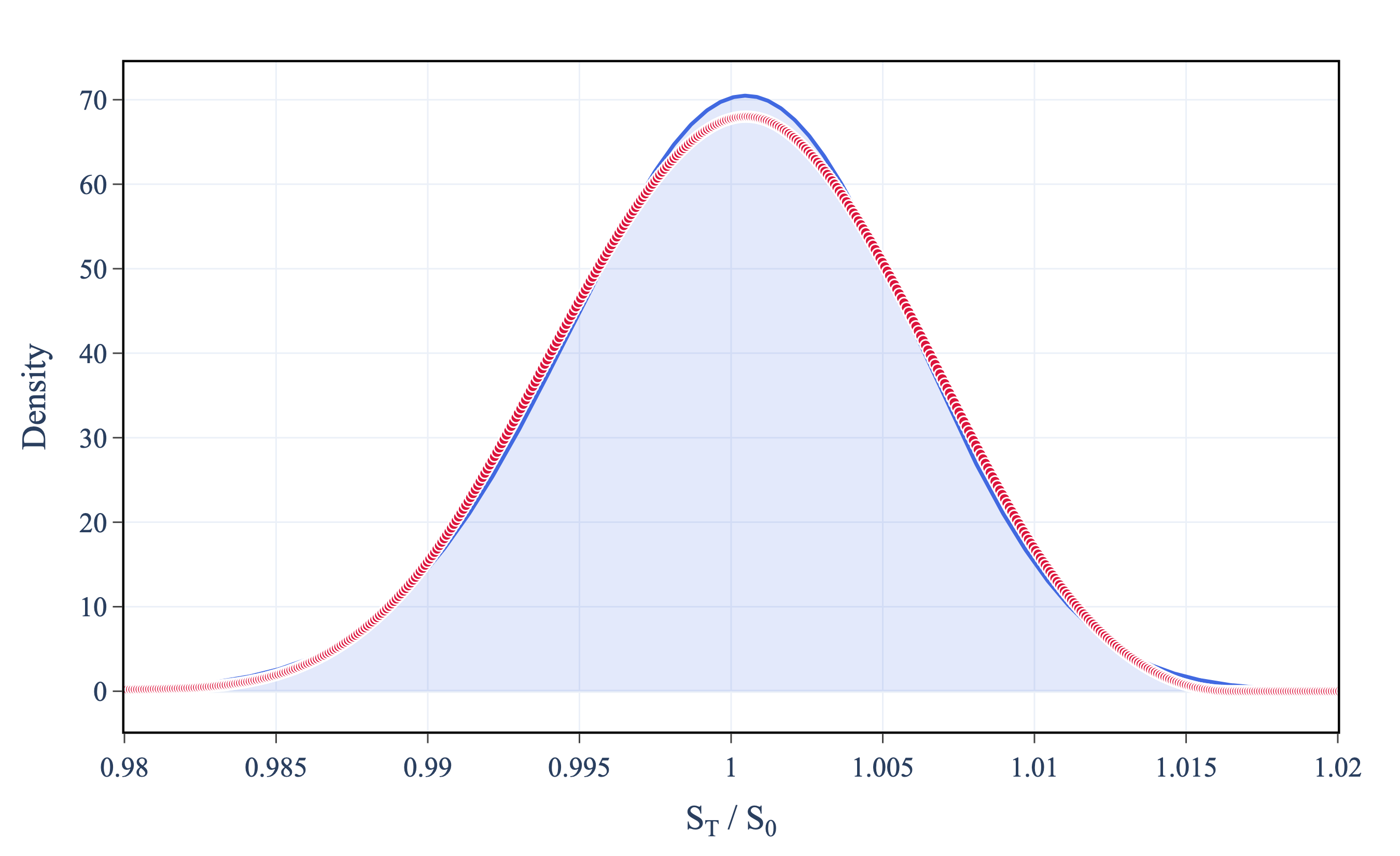}
        \caption*{(a)}
    \end{subfigure}
    \hfill
    \begin{subfigure}[b]{0.48\textwidth}
        \centering
        \includegraphics[width=\textwidth]{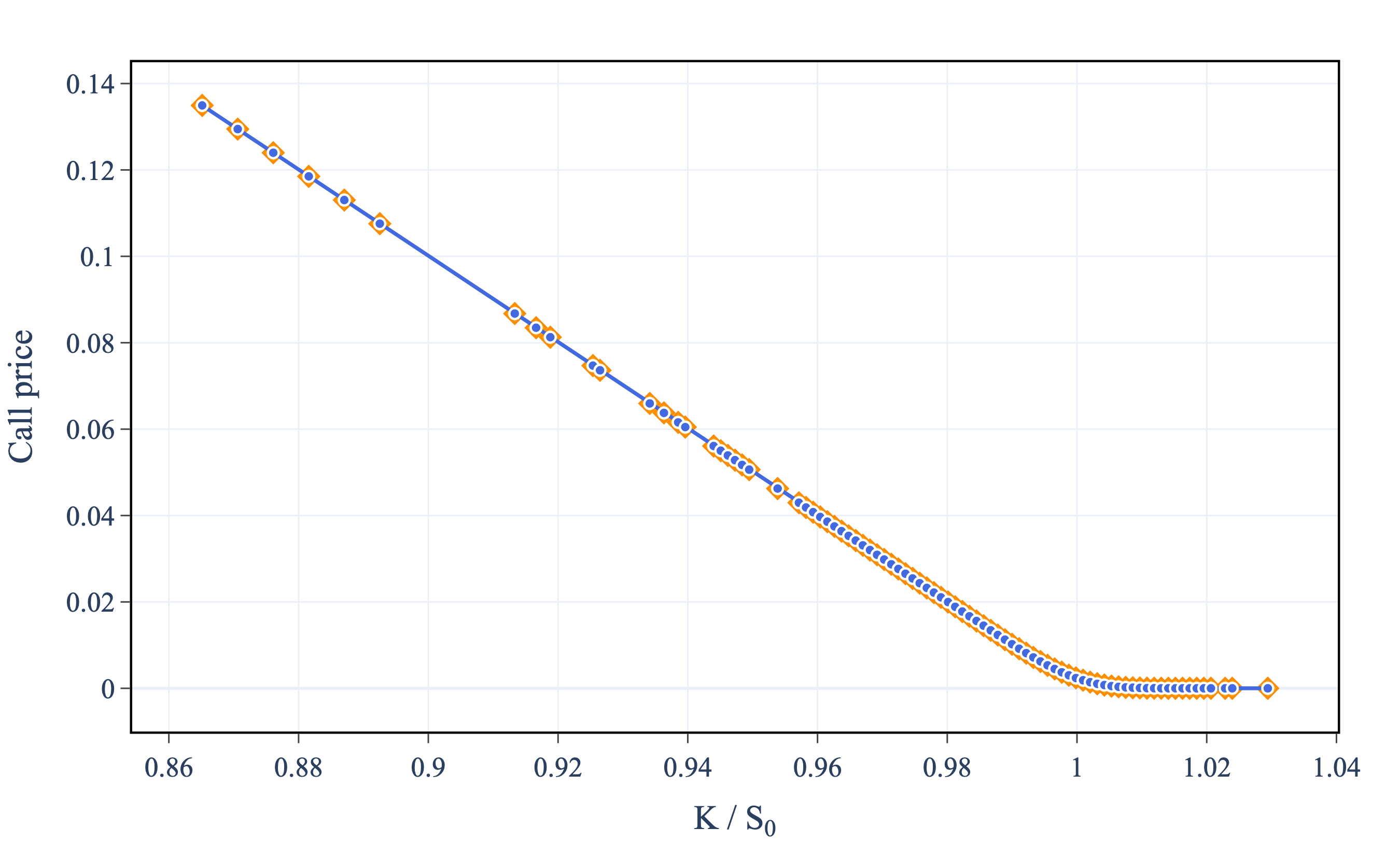}
        \caption*{(b)}
    \end{subfigure}
    \caption{Arbitrage-free bid--ask setting for 1DTE options. \textit{(a)} Heston 
    reference risk-neutral density (blue solid line) and SEDEx density (red dots): 
    both remain close but do not coincide exactly under interval constraints. 
    \textit{(b)} Heston reference call prices (blue line with circles) and call prices 
    implied by SEDEx (orange diamonds): the recovered prices satisfy the 
    bid--ask constraints up to a numerical tolerance of $10^{-7}$ in 
    absolute price terms.}
    \label{fig:heston_bidask}
\end{figure}
\paragraph{Arbitrage-Contaminated Bid--Ask Setting}Finally, we consider a deliberately contaminated bid--ask panel in which arbitrage violations are introduced by perturbing quoted prices across a broad portion of the strike range\footnote{The contamination is constructed by increasing selected bid quotes and decreasing selected ask quotes so as to create static arbitrage violations. Therefore, the contaminated bid--ask intervals are included in the original ones.}. Affecting over one-third of the strikes, this setup amplifies typical market contamination to an unusually severe level, incidentally serving as a rigorous stress test for the full pipeline.

\Cref{tab:cousot_heston_arbitrage}  reports the number of violated inequalities when the perturbed quotes are tested against the strict Cousot conditions of \Cref{subsec:cousot_cond}. The very large number of butterfly violations should be interpreted with care. Because our implementation checks convexity across all admissible butterfly combinations rather than strictly on adjacent triplets, a distortion affecting a limited subset of quotes can simultaneously invalidate a vast number of inequalities.
\begin{table}[H]
\centering
\caption{Number of violated Cousot inequalities after arbitrage injection in the Heston bid--ask panel.}
\label{tab:cousot_heston_arbitrage}
\vspace{0.2cm}
\begin{tabular}{lc}
\toprule
Type of inequality & Number of violations \\
\midrule
Lower bound & 18 (out of 84) \\
Vertical spread & 110 (out of 3,486) \\
Butterfly spread & 12,902 (out of 95,284) \\
\bottomrule
\end{tabular}
\end{table}
\noindent

Without prior application of ARIES, the inverse problem is infeasible, so no admissible density can match the contaminated quotes. This failure highlights the necessity of our filtering procedure. Because many violated constraints overlap, multiple quote-removal configurations could theoretically restore feasibility. Our procedure resolves this ambiguity by prioritizing the removal of the least informative quotes, characterized by the smallest available size. After filtering, density extraction becomes feasible again on the arbitrage-free subset extracted from the contaminated panel. 
As shown in \Cref{fig:heston_bidaskarb}, the SEDEx density remains very close to the Heston density and is visually indistinguishable from the density obtained in the arbitrage-free bid--ask experiment. The call prices implied by this density are, by construction, compatible with the arbitrage-free subset retained in the decontaminated panel. More interestingly, when the density is evaluated on the broader set of strikes from the original non-contaminated panel, the resulting call prices still fall within the original bid--ask bounds, including at strikes removed during filtering.

\begin{figure}[H]
    \centering
    \begin{subfigure}[b]{0.48\textwidth}
        \centering
        \includegraphics[width=\textwidth]{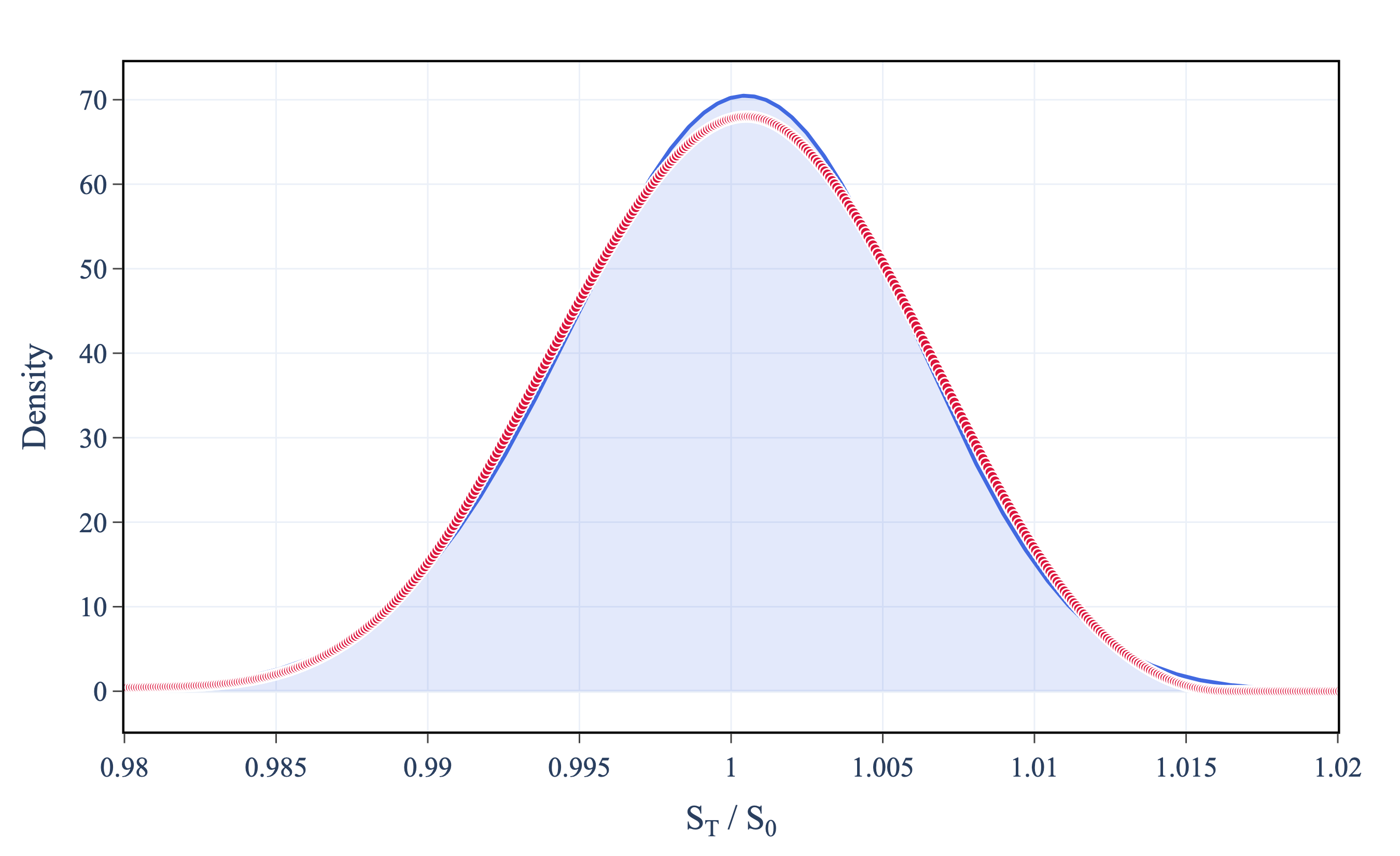}
        \caption*{(a)}
    \end{subfigure}
    \hfill
    \begin{subfigure}[b]{0.48\textwidth}
        \centering
        \includegraphics[width=\textwidth]{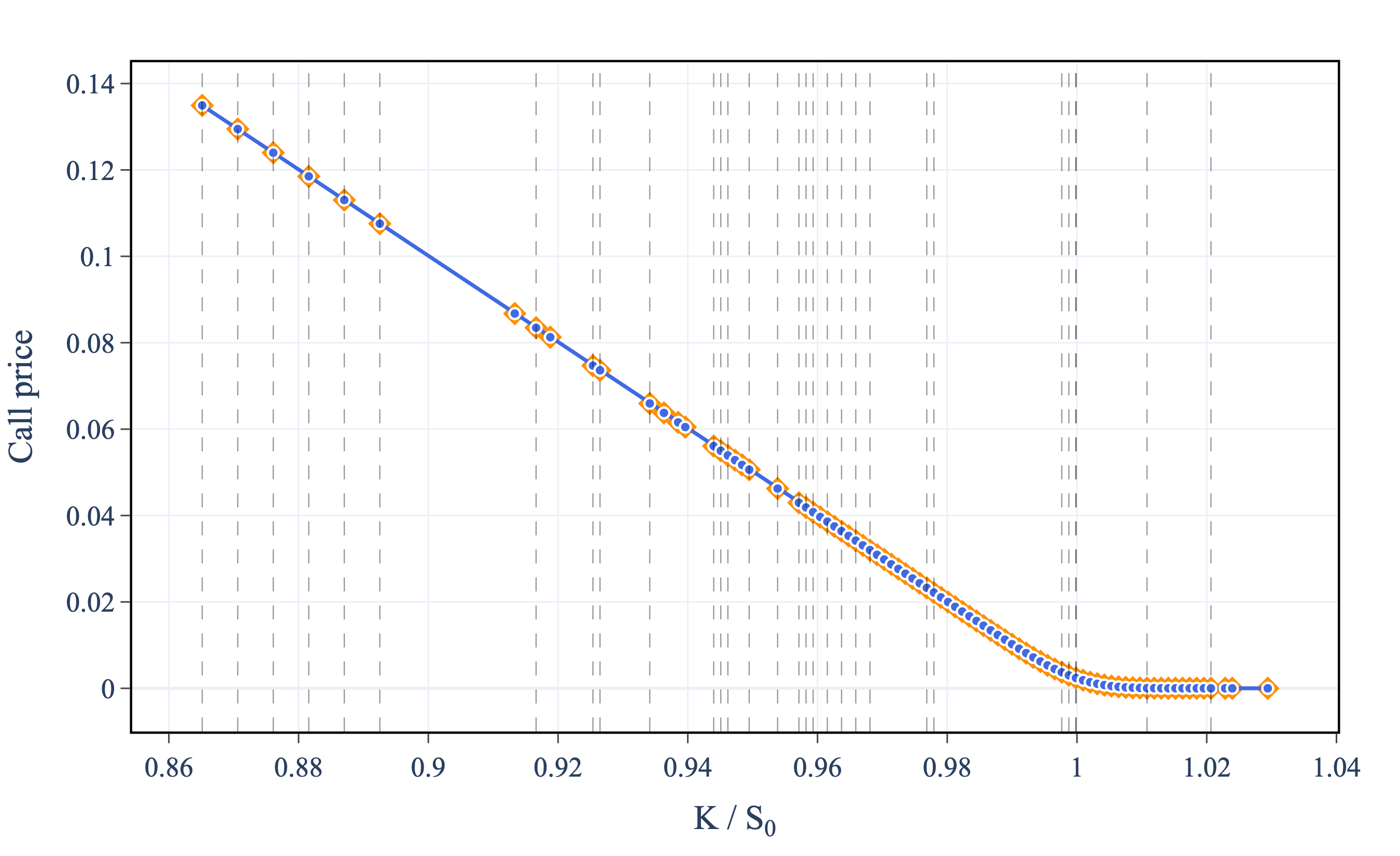}
        \caption*{(b)}
    \end{subfigure}
    \caption{Arbitrage-contaminated bid--ask setting for 1DTE options. \textit{(a)} 
    Heston reference risk-neutral density (blue solid line) and SEDEx density after 
    ARIES filtering (red dots). \textit{(b)} Heston reference call prices (blue line 
    with circles) and call prices implied by SEDEx (orange diamonds). 
    Dashed vertical lines indicate strikes removed by ARIES. Maximum 
    deviation from the original bid--ask bounds: $10^{-7}$ in absolute price terms.}
    \label{fig:heston_bidaskarb}
\end{figure}

In summary, this synthetic experiment illustrates the ability of the SEDEx procedure to recover the risk-neutral density in a controlled environment. Furthermore, it highlights the necessity of prior arbitrage filtering when quote inconsistencies render the inverse problem infeasible.

\subsection{Market Data: Results and Application}

We next evaluate the proposed methodology on representative market data, where the risk-neutral density is not directly observable, unlike in the toy experiment above. In all cases considered below, the full pipeline (from arbitrage filtering to density extraction) runs in a few seconds on a standard laptop, which makes the procedure suitable for intraday monitoring as well as for empirical work.

\paragraph{First Test: ``Business As Usual''}
We begin with the 1DTE slice observed on 2023-07-19, for which 120 strikes remain after preliminary cleaning and 118 after ARIES filtering\footnote{For this first test, the spot is 4565. ARIES detects a strong arbitrage at strike 4220 and a weak arbitrage at strike 4300. All other slices discussed in this subsection are already arbitrage-free.}. \Cref{fig:market_1dte_calm} reports the extracted density and the associated implied-volatility diagnostics, showing that SEDEx performs very well in this case. For comparison, we also calibrate the implied volatilities from mid prices to a standard SVI parameterization (\cite{Gatheral2014}),
\begin{equation}
w(k)=a+b\left[\rho (k-m)+\sqrt{(k-m)^2+\sigma^2}\right],
\qquad k=\log\!\left(\frac{K}{F_0^T}\right).
\label{eq:svi_slice}
\end{equation}

SVI provides a good overall fit, especially in the wings, but is less accurate around the money and can lie by as much as 1.90 volatility points outside the quoted interval\footnote{For this slice, the standard SVI calibration outputs parameters $(a,b,\rho,m,\sigma)=(-0.0357,\,0.1107,\,0.4095,\,0.1640,\,0.3544)$.}. By contrast, the smile implied by the SEDEx procedure remains admissible over the full chain with maximum error of $10^{-6}$ volatility points. More generally, this slice illustrates one of the main applications of risk-neutral density extraction: once an admissible density has been recovered from quoted prices, it can be used to construct a full implied-volatility smile by repricing options on a finer grid of strikes, thereby delivering a nonparametric interpolation and extrapolation of the smile.

\begin{figure}[t!]
    \centering
    \begin{subfigure}[b]{0.48\textwidth}
        \centering
        \includegraphics[width=\textwidth]{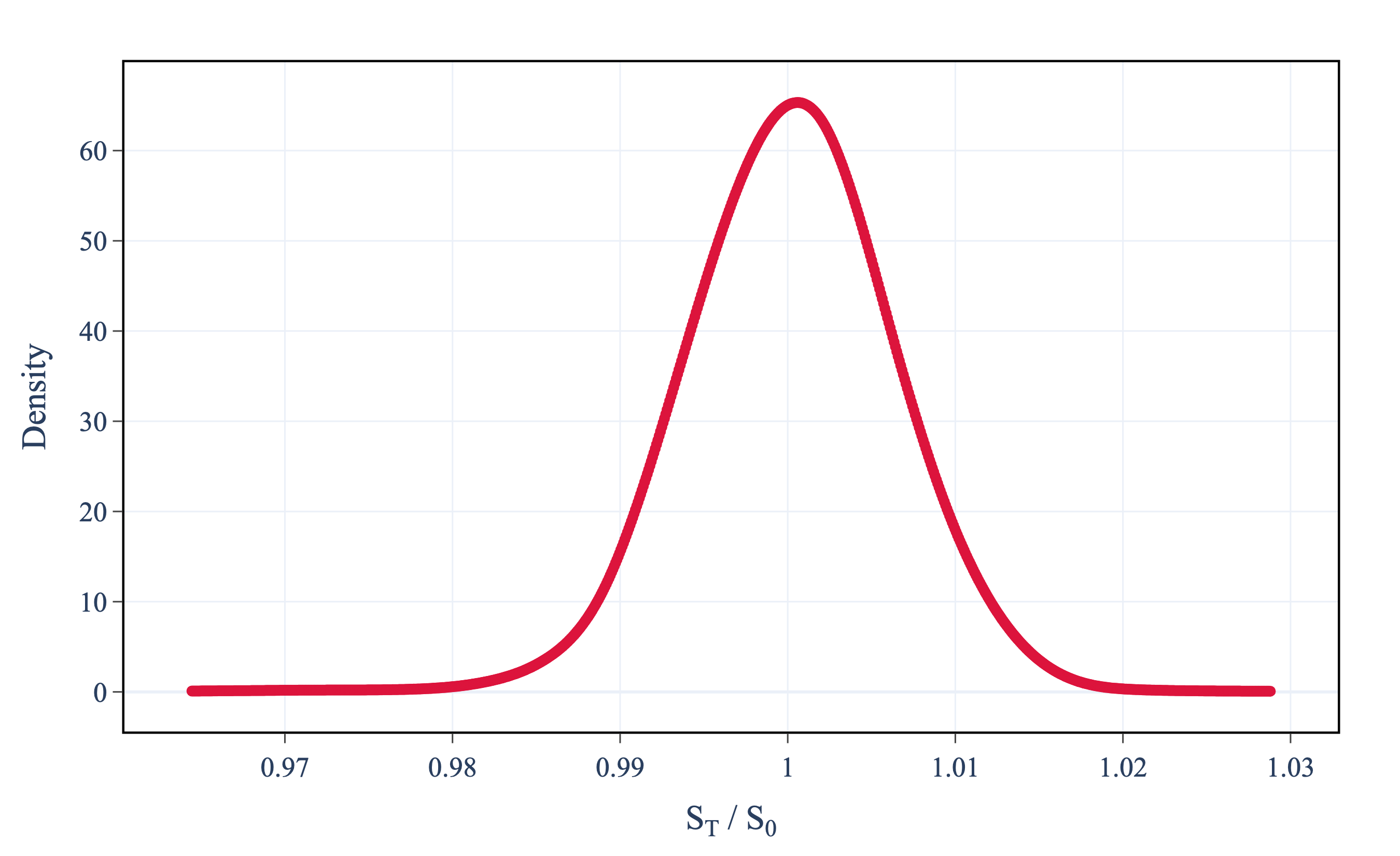}
        \caption*{(a)}
    \end{subfigure}
    \hfill
    \begin{subfigure}[b]{0.48\textwidth}
        \centering
        \includegraphics[width=\textwidth]{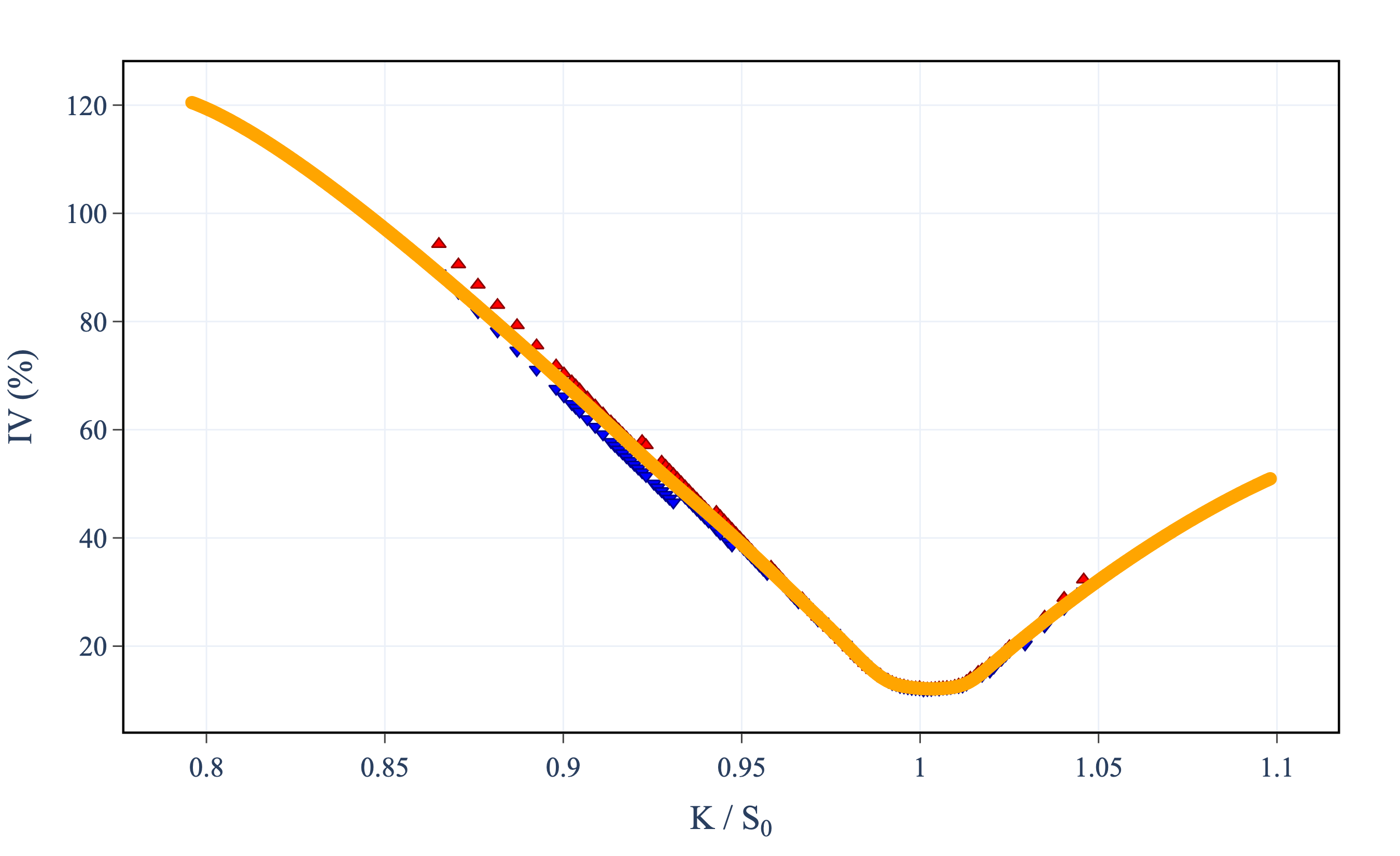}
        \caption*{(b)}
    \end{subfigure}
    \vspace{0.4em}
    \begin{subfigure}[b]{0.48\textwidth}
        \centering
        \includegraphics[width=\textwidth]{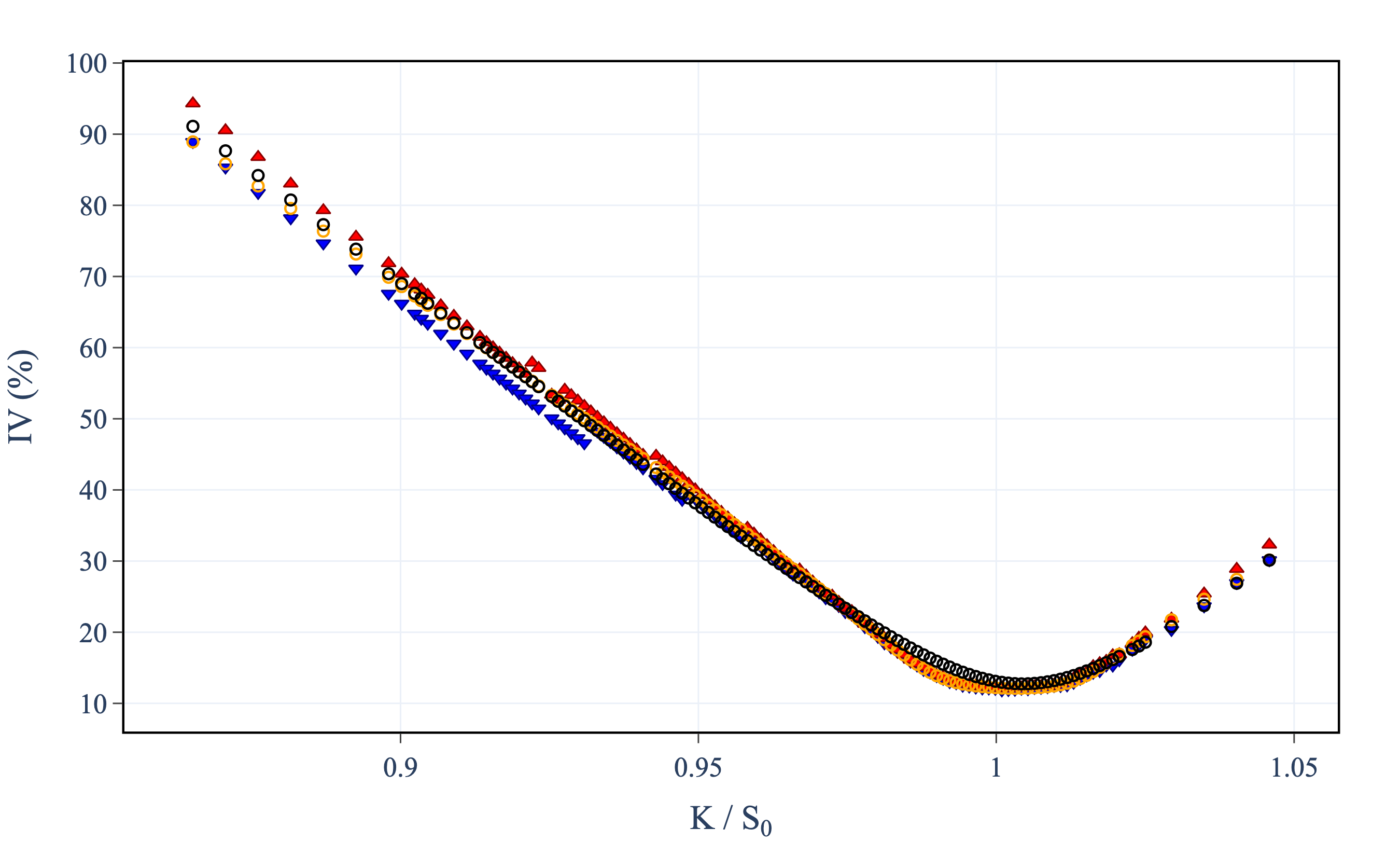}
        \caption*{(c)}
    \end{subfigure}
    \hfill
    \begin{subfigure}[b]{0.48\textwidth}
        \centering
        \includegraphics[width=\textwidth]{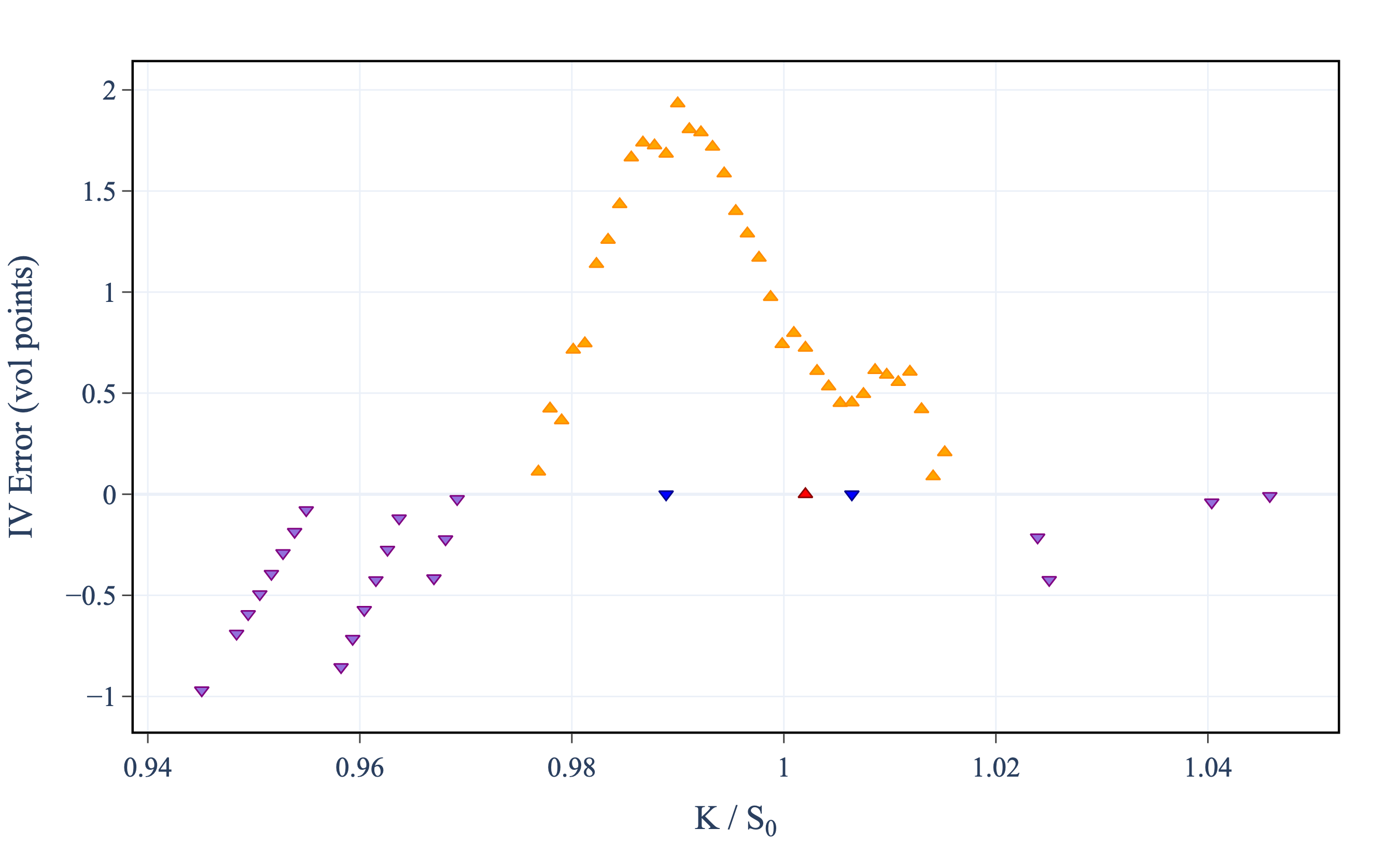}
        \caption*{(d)}
    \end{subfigure}
\caption{1DTE slice on 2023-07-19. \textit{(a)} Risk-neutral density extracted 
    by SEDEx (red dots). \textit{(b)} Implied volatility smile obtained by repricing 
    options on a finer strike grid: market ask (red triangles-up), market bid (blue 
    triangles-down), and SEDEx implied volatilities (orange circles). \textit{(c)} 
    Market bid--ask implied volatilities and SEDEx implied volatilities (orange circles) 
    compared with the SVI benchmark fit (black circles). \textit{(d)} Deviations from 
    the bid--ask envelope for SEDEx (red triangles-up for ask, blue triangles-down for 
    bid) and SVI (orange triangles-up for ask, purple triangles-down for bid), reported 
    only when the fitted volatility lies outside the quoted interval. SEDEx errors are 
    sparse, all within $10^{-6}$ vol points.}
    \label{fig:market_1dte_calm}
\end{figure}
\paragraph{Second Case: ``Binary-Outcome Event Driven Day''}A more revealing case is the 1DTE slice observed on 2022-12-13, with 106 strikes and expiration on 2022-12-14. This date sits between the November CPI release on December 13 and the FOMC decision on December 14,\footnote{On December 13, 2022, the U.S.\ Bureau of Labor Statistics reported that CPI inflation for November 2022 was 7.1\% year-over-year; on December 14, 2022, the FOMC raised the target range for the federal funds rate by 50 basis points to 4.25--4.50\%.} and the option chain displays a pronounced W-shaped smile. The extracted density in \Cref{fig:market_1dte_wshape} is slightly bimodal, suggesting that the market attached non-negligible probability to several distinct terminal scenarios over the remaining one-day horizon. This interpretation should, however, be stated carefully. As emphasized by \citet{GlassermanPirjol2023}, the relation between smile shape and density shape is not one-to-one: a bimodal risk-neutral density need not generate a W-shaped smile, and a unimodal density can generate oscillatory or W-shaped smiles. Accordingly, our point is not that the bimodality of the extracted density explains the W-shape. Rather, the coexistence of a pronounced W-shape and a slightly bimodal extracted density is economically consistent with short-dated event risk driven by multiple plausible macro outcomes.

This reading is close to the empirical evidence of \citet{AlexiouEtAl2025}, who document that concave short-expiry implied-volatility curves often appear around scheduled events, are frequently associated with bimodal risk-neutral densities, and are linked to a premium for gamma.

The comparison with SVI is especially instructive in this case.\footnote{For this slice, the standard SVI calibration outputs parameters  $(a,b,\rho,m,\sigma)=(0.0005,\,0.0088,\,-1.0000,\,-0.0600,\,0.0186)$. The calibration pushes $\rho$ essentially to its boundary and $\sigma$ close to zero, which is a symptom of the tension between the observed shape and the flexibility of the specification.}
The SVI pa\-ra\-me\-te\-ri\-za\-tion produces more frequent violations of the bid--ask envelope and errors exceeding 2 volatility points, while the SEDEx smile remains admissible over the full chain, with a maximum error of $10^{-6}$ volatility points. Despite this non-standard geometry, the density-based interpolation and extrapolation remain stable and yield a plausible smile outside the quoted strike range. This in turn highlights the flexibility of the proposed methodology, which accommodates irregular short-dated market shapes without imposing a parametric structure on the smile.

\begin{figure}[H]
    \centering
    \begin{subfigure}[b]{0.48\textwidth}
        \centering
        \includegraphics[width=\textwidth]{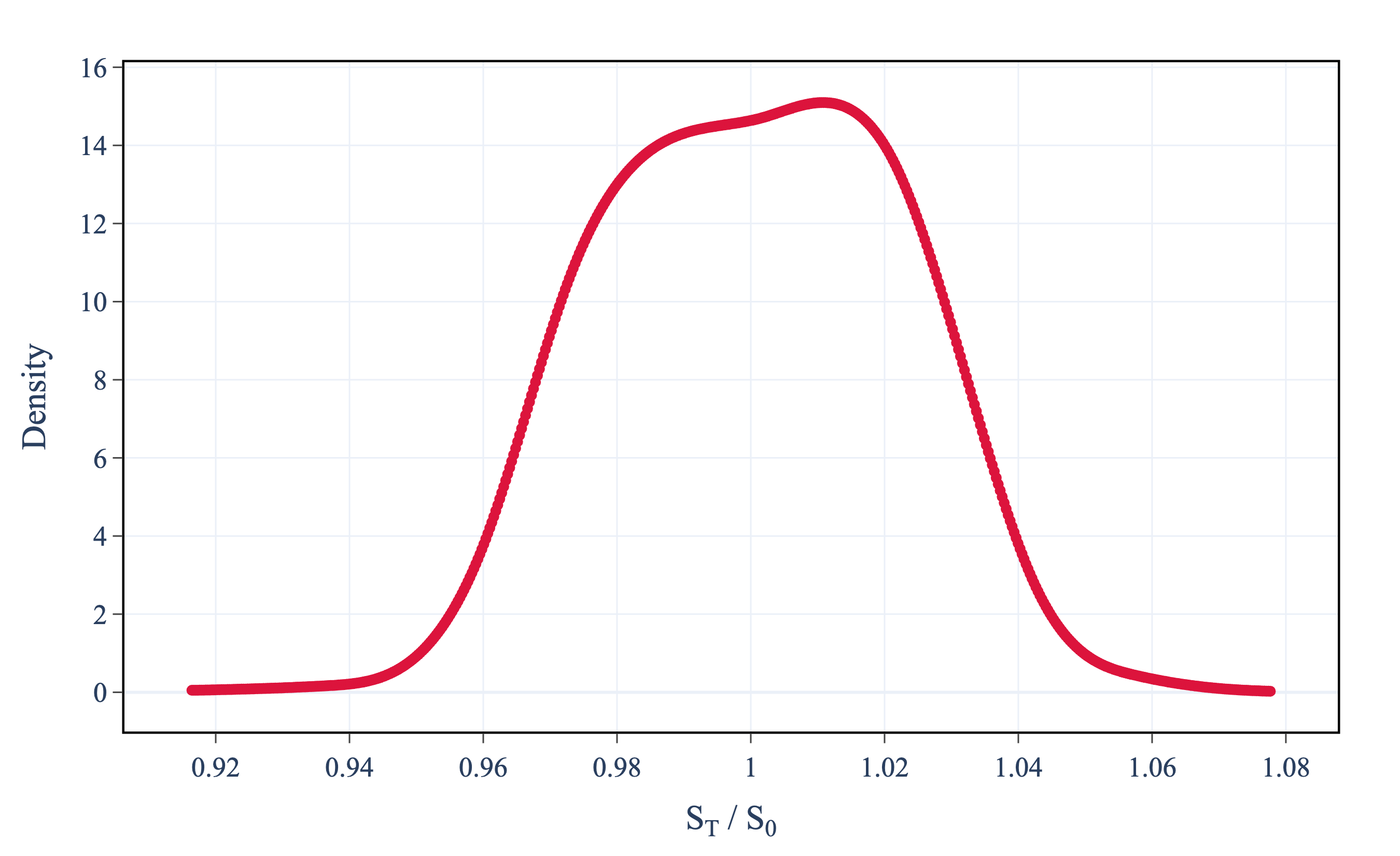}
        \caption*{(a)}
    \end{subfigure}
    \hfill
    \begin{subfigure}[b]{0.48\textwidth}
        \centering
        \includegraphics[width=\textwidth]{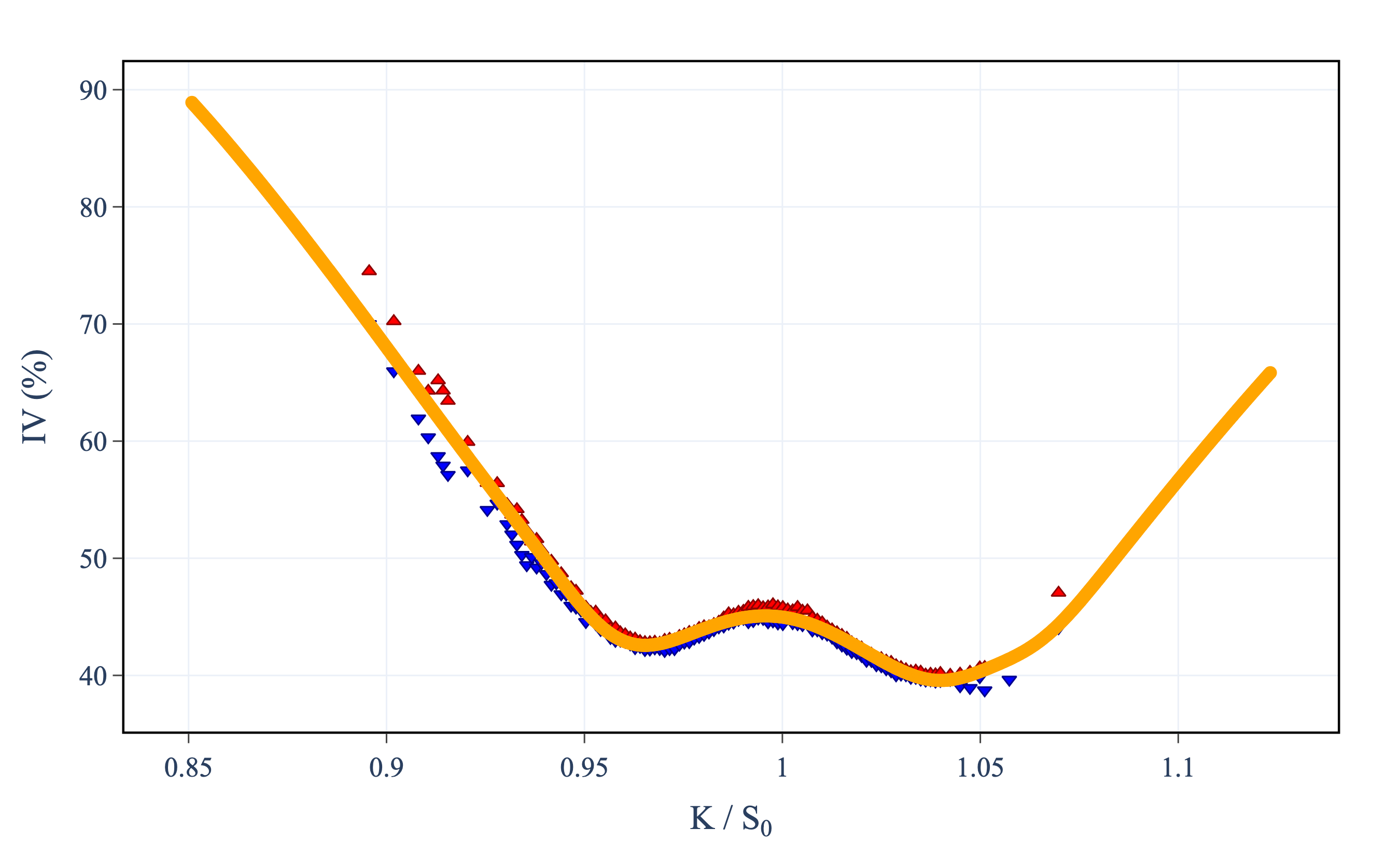}
        \caption*{(b)}
    \end{subfigure}
    \vspace{0.4em}
    \begin{subfigure}[b]{0.48\textwidth}
        \centering
        \includegraphics[width=\textwidth]{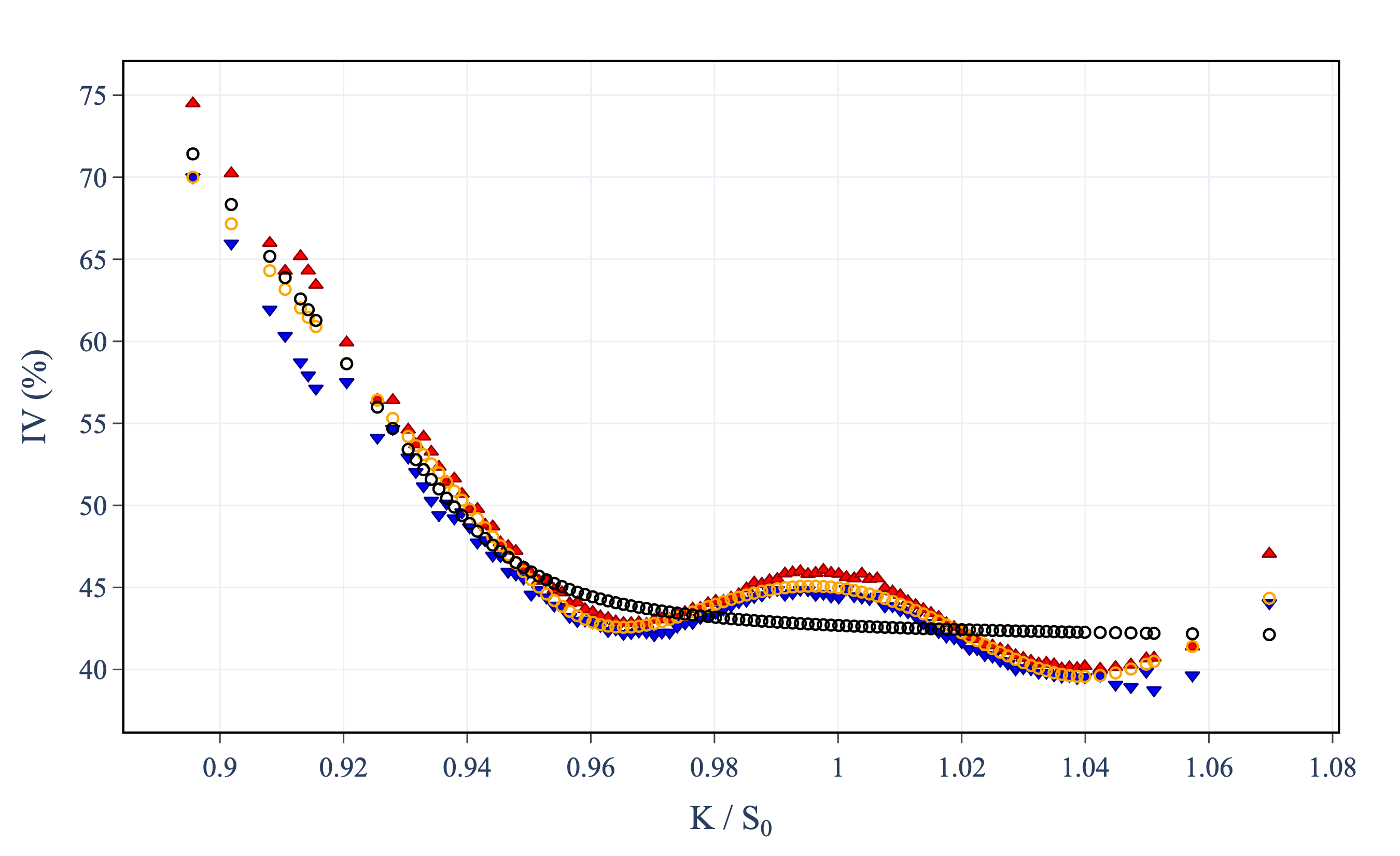}
        \caption*{(c)}
    \end{subfigure}
    \hfill
    \begin{subfigure}[b]{0.48\textwidth}
        \centering
        \includegraphics[width=\textwidth]{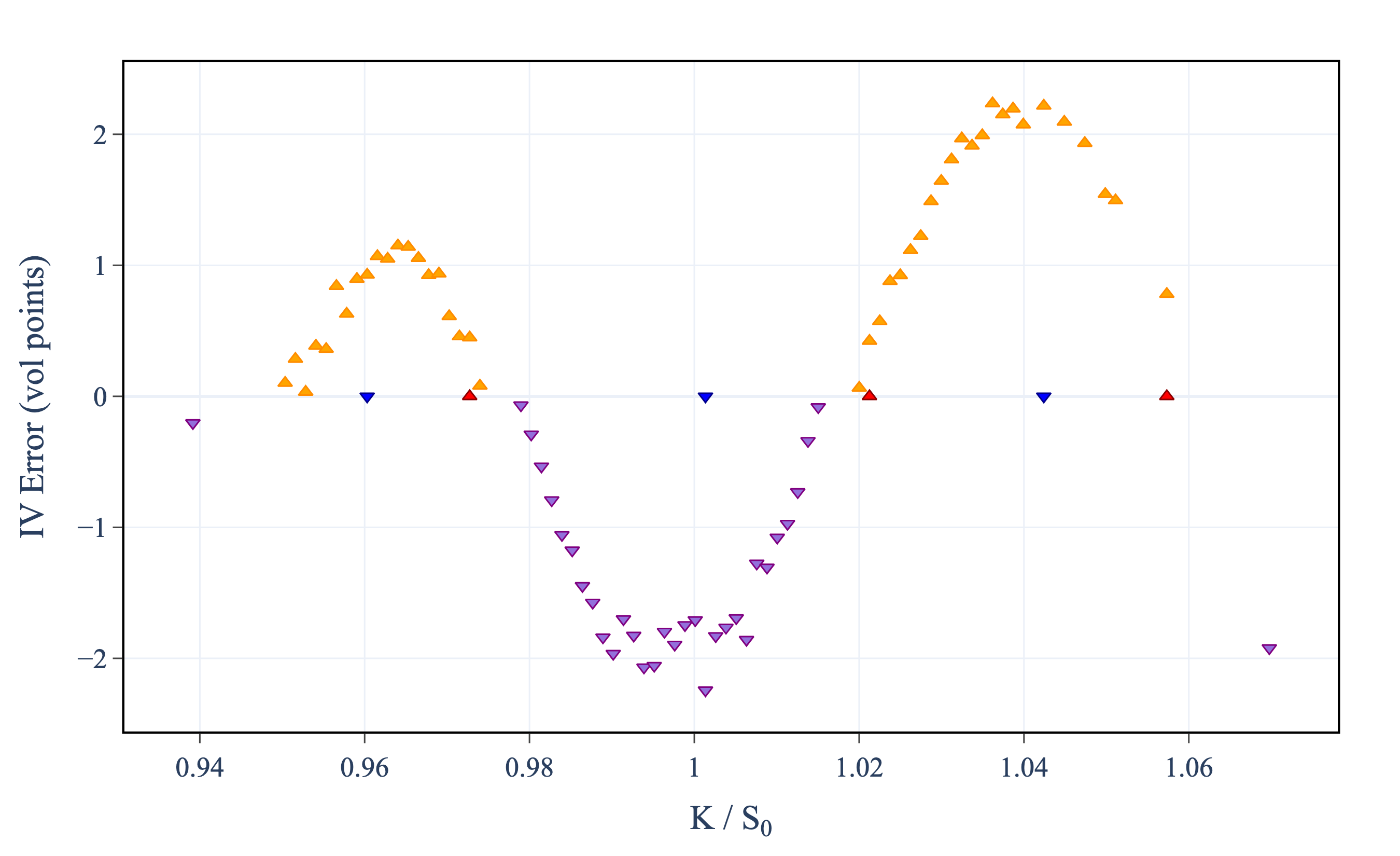}
        \caption*{(d)}
    \end{subfigure}
    \caption{Event-driven 1DTE slice on 2022-12-13. \textit{(a)} Risk-neutral density 
    extracted by SEDEx (red dots), which is slightly bimodal. \textit{(b)} Implied 
    volatility smile obtained by repricing options on a finer strike grid: market ask 
    (red triangles-up), market bid (blue triangles-down), and SEDEx implied volatilities 
    (orange circles). \textit{(c)} Market bid--ask implied volatilities and SEDEx 
    implied volatilities (orange circles) compared with the SVI benchmark fit (black 
    circles). \textit{(d)} Deviations from the bid--ask envelope for SEDEx (red 
    triangles-up for ask, blue triangles-down for bid) and SVI (orange triangles-up for 
    ask, purple triangles-down for bid), reported only when the fitted volatility lies 
    outside the quoted interval. SVI violations are frequent and of significant magnitude, 
    whereas SEDEx errors are sparse, all within $10^{-6}$ vol points.}
    \label{fig:market_1dte_wshape}
\end{figure}

\paragraph{Third Case: Very Short-Dated}We finally turn to 0DTE options observed on 2023-05-08. We consider two intraday timestamps, 10:30 and 15:00, corresponding to one hour after the open and one hour before expiration. The corresponding chains contain 44 and 14 strikes, respectively. \Cref{fig:market_0dte_intraday} shows that SEDEx remains stable even with few quoted strikes and only a few hours remaining to expiration. Two facts stand out. First, the support of the SEDEx density contracts dramatically over the day; by 15:00, most of the mass lies within roughly $\pm 0.5\%$ of spot, consistent with the concentration of the terminal distribution as expiration approaches. Second, the bid--ask envelope widens as maturity vanishes. This is precisely the regime in which procedures based solely on smooth parametric implied-volatility fits become fragile.

\begin{figure}[H]
    \centering
    \begin{subfigure}[b]{0.48\textwidth}
        \centering
        \includegraphics[width=\textwidth]{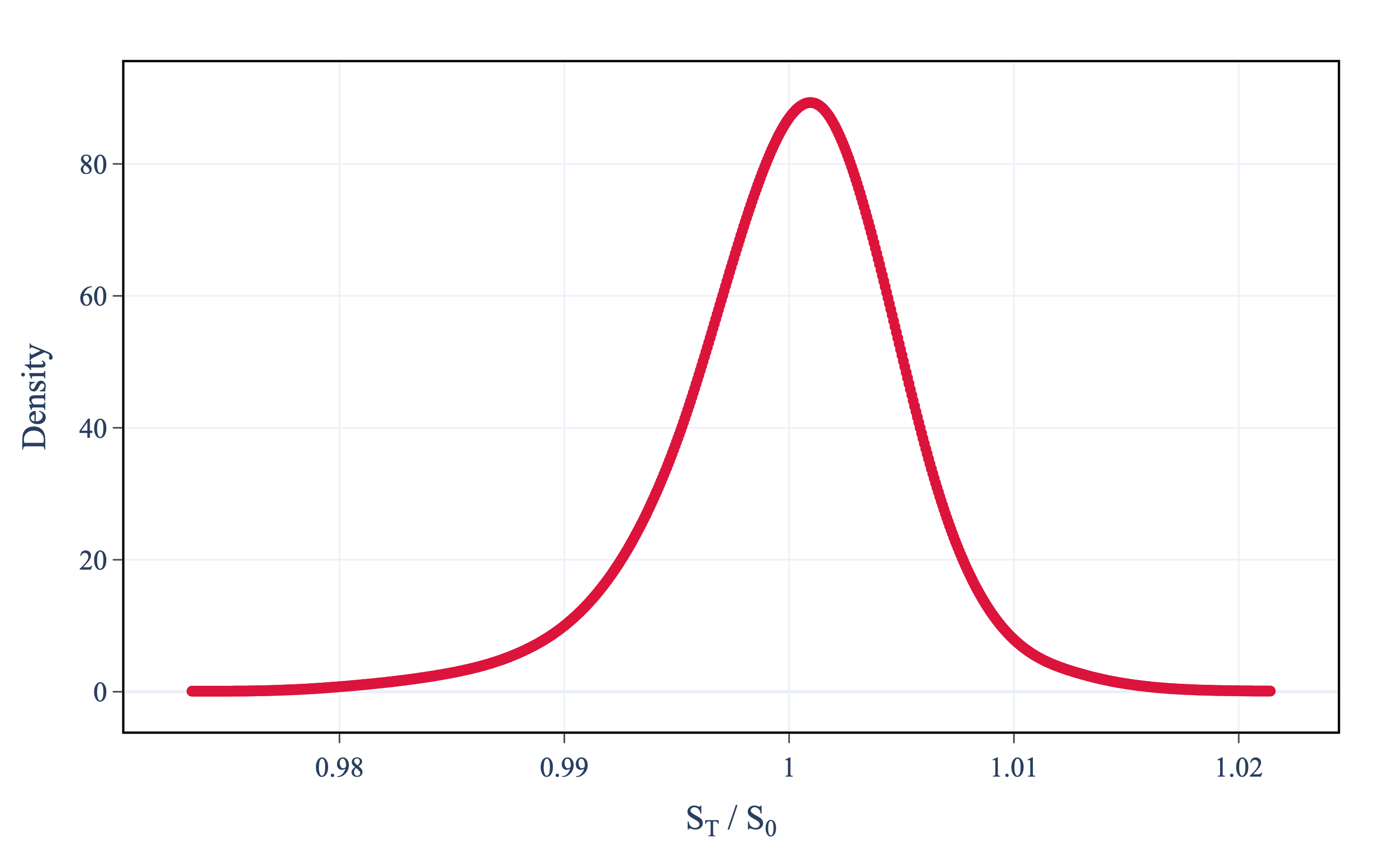}
        \caption*{(a)}
    \end{subfigure}
    \hfill
    \begin{subfigure}[b]{0.48\textwidth}
        \centering
        \includegraphics[width=\textwidth]{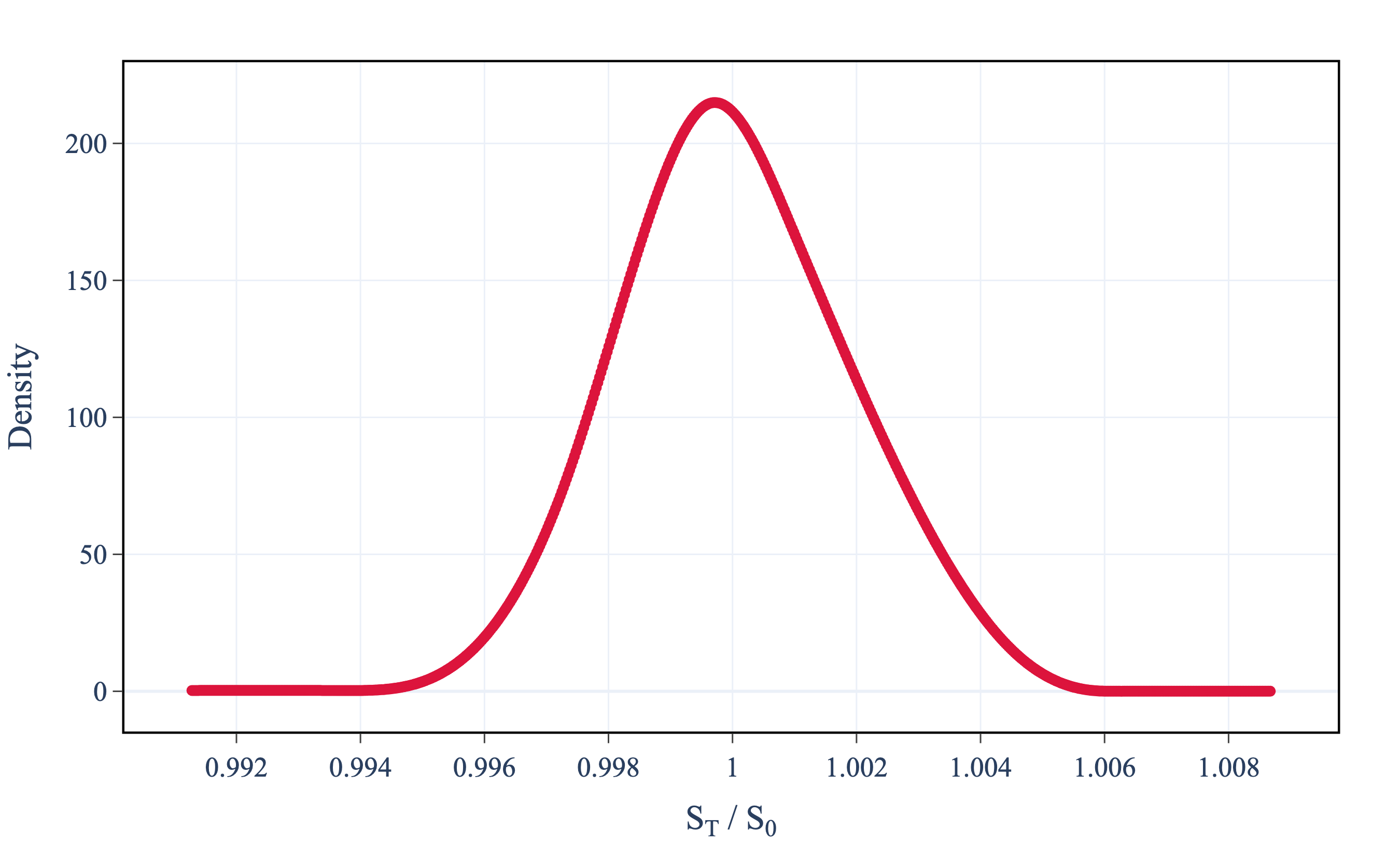}
        \caption*{(b)}
    \end{subfigure}
    \vspace{0.4em}
    \begin{subfigure}[b]{0.48\textwidth}
        \centering
        \includegraphics[width=\textwidth]{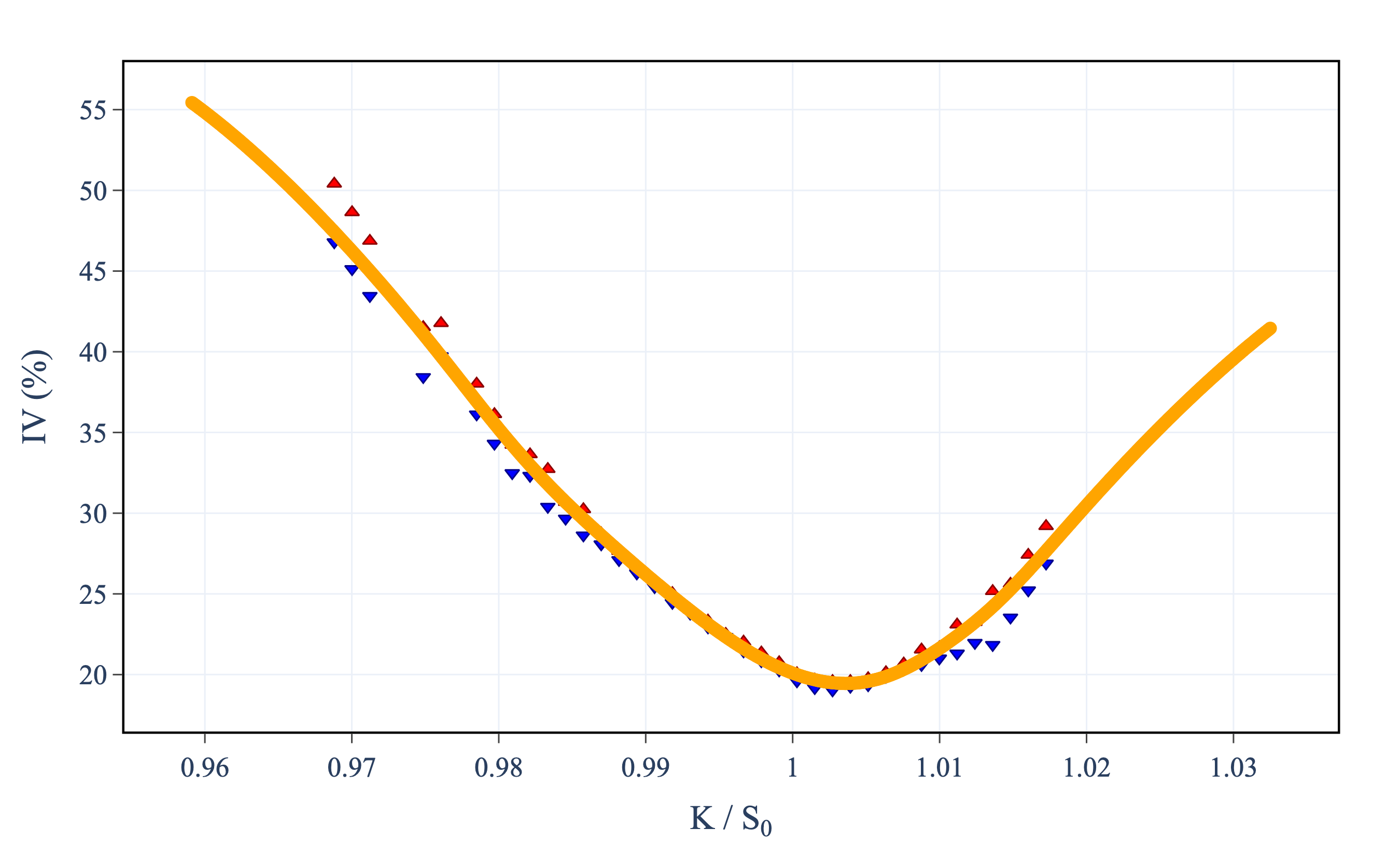}
        \caption*{(c)}
    \end{subfigure}
    \hfill
    \begin{subfigure}[b]{0.48\textwidth}
        \centering
        \includegraphics[width=\textwidth]{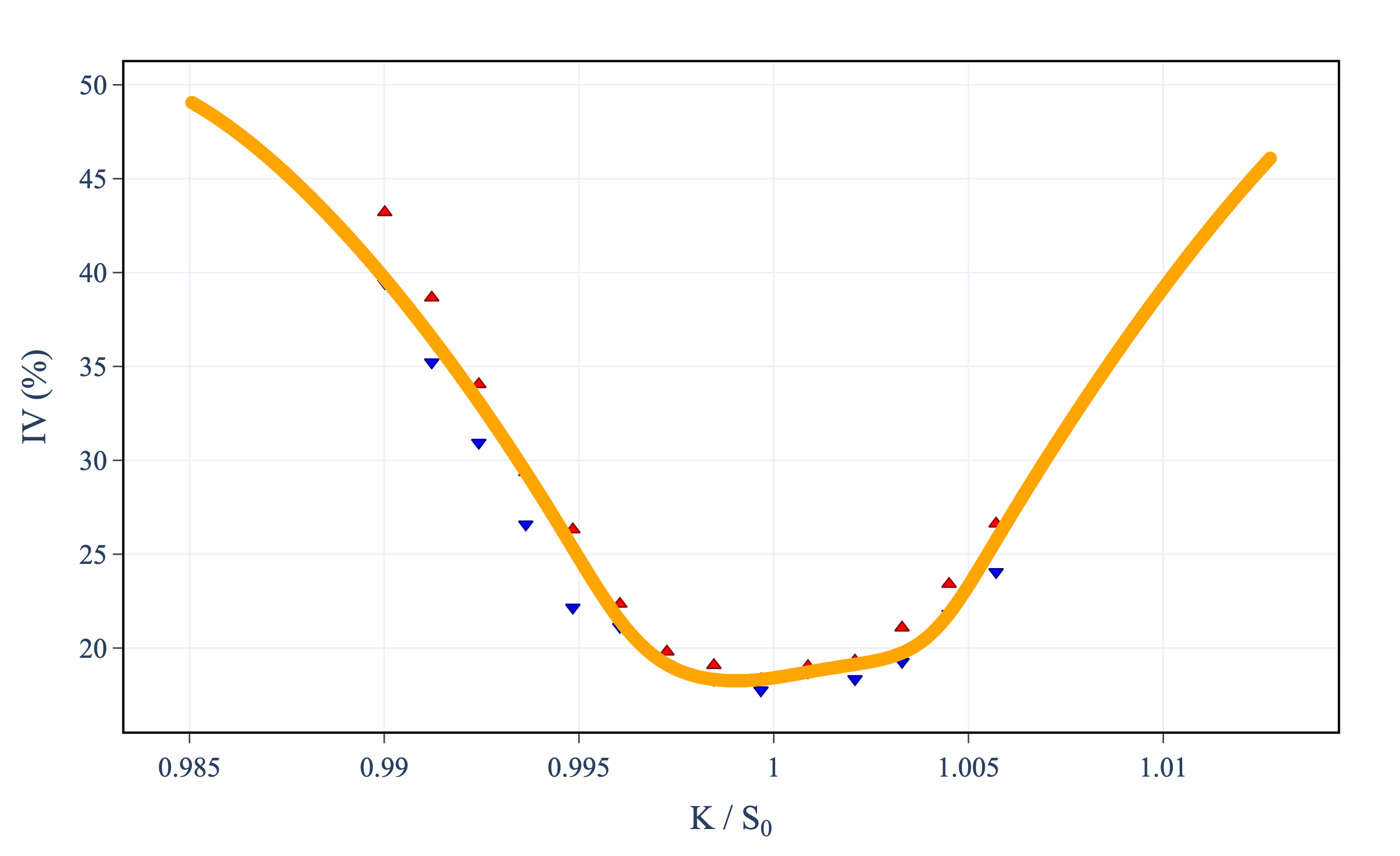}
        \caption*{(d)}
    \end{subfigure}
    \caption{0DTE intraday slices on 2023-05-08. \textit{(a)--(b)} Risk-neutral 
    densities extracted by SEDEx at 10:30 and 15:00 (red dots). \textit{(c)--(d)} 
    Corresponding implied volatility smiles: market ask (red triangles-up), market 
    bid (blue triangles-down), and SEDEx implied volatilities (orange circles). 
    The bid--ask constraints are satisfied up to a numerical tolerance of $10^{-6}$ 
    vol points.}
    \label{fig:market_0dte_intraday}
\end{figure}

\paragraph{Further Tests}To keep the main text focused, we defer a representative 7DTE slice to Appendix~\ref{appendix:7dte_market_slice}; see \Cref{fig:market_7dte_appendix}. The results are again satisfactory and indicate that the same pipeline extends smoothly from ultra-short-dated to short-dated maturities.

\medskip

Taken together, these results show that working directly with arbitrage-free bid--ask quotes provides a flexible approach to risk-neutral density recovery and implied-volatility smile construction.
\section{Summary and Conclusion}
\label{sec:conclusion}

This paper develops a model-free approach to risk-neutral density
extraction from short-dated option chains directly exploiting the
bid--ask quotes. It introduces two procedures, ARIES and SEDEx. ARIES removes
executable static arbitrages subject to
market-depth constraints. When several quotes can be removed, it
removes the quote with the smallest available size. This rule is
designed to preserve the most informative quotes. SEDEx recovers a density through a criterion leveraging smoothness and entropy under bid--ask constraints.

The two procedures are conceptually independent. Together, they form a
fast pipeline that keeps bid--ask intervals as the primitive market
constraint. Synthetic Heston experiments validate the two components in a
controlled setting. Applications to SPX option data show that the
pipeline remains fast and stable across different short-dated
settings, from a few hours to one week before expiry, including
scheduled-event days.

As an empirical application, the recovered densities are used to
construct implied-volatility smiles. These smiles remain
within the quoted bid--ask bounds. Extending the framework
to the joint treatment of the option surface across maturities is left
for future work.

\newpage
\bibliographystyle{plainnat}
\bibliography{biblio.bib}

\newpage
\appendix

\section{Detailed proofs}
\subsection{Lemmas}
\subsubsection{Equivalence of Discrete Admissible Sets}
\label{appendix:equi_admissible_sets}

\begin{proof}[Proof of \Cref{lemma:equivalence_adm_set}]
The proof relies on the algebraic identity relating the payoff of a put option to that
of a call option. Consider the fundamental identity:
\begin{equation}
(K - x)_{+} = (x - K)_{+} - (x - K), \quad \forall x, K \in \mathbb{R}.
\label{eq:payoff-identity-discrete}
\end{equation}
Let $p\in \mathcal{D}_M$. We evaluate the theoretical price of a put option against the discrete distribution $p$ and use the above identity:
\begin{align}
P^{p}(K_j^p)
&:= e^{-rT}\sum_{i=1}^M (K_j^p - s_i)_{+}\, p_i \notag\\
&= e^{-rT}\sum_{i=1}^M \left[ (s_i - K_j^p)_{+} - (s_i - K_j^p) \right] p_i
\quad \text{by \eqref{eq:payoff-identity-discrete}} \notag\\
&= e^{-rT}\sum_{i=1}^M (s_i - K_j^p)_{+}\, p_i
\;-\; e^{-rT}\left(\sum_{i=1}^M s_i p_i - K_j^p \sum_{i=1}^M p_i\right).
\label{eq:discrete-parity-derivation}
\end{align}
Using the simplex constraint $\sum_{i=1}^M p_i = 1$ and the forward constraint
$\sum_{i=1}^M s_i p_i = F_0^T$, this simplifies to the put--call parity relation:
\begin{equation}
P^{p}(K_j^p) = C^{p}(K_j^p) - e^{-rT}(F_0^T - K_j^p).
\label{eq:parity-relation-discrete}
\end{equation}

Now, consider the constraints defining $\mathcal{A}_{M}^{cp}$. A vector $p$ satisfies
the $j$-th put constraint if and only if
\begin{equation}
P_{j}^{\mathrm{bid}} \le P^{p}(K_j^p) \le P_{j}^{\mathrm{ask}}.
\label{eq:put-ineq-discrete}
\end{equation}
Substituting relation \eqref{eq:parity-relation-discrete} into \eqref{eq:put-ineq-discrete}
yields
\begin{equation}
P_{j}^{\mathrm{bid}} \le C^{p}(K_j^p) - e^{-rT}(F_0^T - K_j^p) \le P_{j}^{\mathrm{ask}}.
\end{equation}
Rearranging terms to isolate $C^{p}(K_j^p)$, we obtain
\begin{equation}
P_{j}^{\mathrm{bid}} + e^{-rT}(F_0^T - K_j^p)
\le C^{p}(K_j^p)
\le P_{j}^{\mathrm{ask}} + e^{-rT}(F_0^T - K_j^p).
\end{equation}
By definition of the transformed bounds in \eqref{eq:transformed-bounds-discrete},
this is equivalent to
\begin{equation*}
C^{p}(K_j^p)\in [\widetilde{C}_{j}^{\mathrm{bid}}, \widetilde{C}_{j}^{\mathrm{ask}}].
\end{equation*}
This demonstrates that the condition for the $j$-th put in $\mathcal{A}_{M}^{cp}$ is equivalent to the condition for the $j$-th synthetic call in $\mathcal{A}_{M}^{c}$.
Since the constraints on the original calls ($i=1,\dots,N_c$) and the moment conditions
defining $\mathcal{D}_M$ are identical in both sets, we conclude that
$\mathcal{A}_{M}^{cp} = \mathcal{A}_{M}^{c}$.
\end{proof}
\subsection{Propositions}
\subsubsection{Existence of a Solution to the Arbitrage Filtering Procedure}
\label{appendix:exist_opt_arb}
\begin{proof}[Proof of \Cref{prop:existence-optimum}]
The null portfolio $(q^{\mathrm{ask}}, q^{\mathrm{bid}}, u, \alpha)=0$ satisfies all inequalities, hence $\mathcal{C}\neq\emptyset$.
Moreover, note that $\mathcal{C}$ is a polyhedron as the intersection of finitely many half-spaces. We now show that $\mathcal{C}$ is bounded by verifying that each of its variables is bounded.

First, the quantities $q^{\mathrm{ask}}$ and $q^{\mathrm{bid}}$ are bounded due to the finite market sizes available, as specified in constraints~\eqref{eq:size-constraints-q-market}.
Second, by the definition of $\mathcal{C}$, $\alpha$ is bounded from above since $\alpha \le 0$.

Next, from the constraints in~\eqref{eq:constraints}, we have a lower bound for $u$:
\[
u \ge \sum_{i=1}^{N} (q_i^{\mathrm{bid}} - q_i^{\mathrm{ask}}).
\]
Since $q^{\mathrm{ask}}$ and $q^{\mathrm{bid}}$ are bounded, $u$ is bounded from below. 
Furthermore, since $\tilde{\pi}_0(q^{\mathrm{ask}}, q^{\mathrm{bid}}, u, \alpha) \le 0$, we can deduce an upper bound for $u$:
\[
u S_0 e^{-qT} \le \alpha - \sum_{i=1}^{N} \big(q_i^{\mathrm{ask}} C_i^{\mathrm{ask}} - q_i^{\mathrm{bid}} C_i^{\mathrm{bid}}\big).
\]
Given that $\alpha$ is bounded from above (by $0$) and the quantities $q$ are bounded, $u$ is also bounded from above.

Finally, using the same inequality $\tilde{\pi}_0 \le 0$, we obtain a lower bound for $\alpha$:
\[
\alpha \ge \sum_{i=1}^{N} \big(q_i^{\mathrm{ask}} C_i^{\mathrm{ask}} - q_i^{\mathrm{bid}} C_i^{\mathrm{bid}}\big) + u S_0 e^{-qT}.
\]
Since $u$ is now shown to be bounded from below and the quantities $q$ are bounded, $\alpha$ is bounded from below.

Since all variables are bounded, the polyhedron $\mathcal{C}$ is bounded and is therefore a polytope. 
Moreover, the objective function $\tilde{\pi}_0$ is linear. Because a polytope is a compact set, Weierstrass's theorem guarantees that the problem admits at least one optimal solution. Finally, by the Fundamental Theorem of Linear Programming \cite[Proposition B.20(c)]{bertsekas1997nonlinear}, an optimal solution is attained at least at a vertex of $\mathcal{C}$. This completes the proof.
\end{proof}
\subsubsection{Characterization of Optimal Solutions to the Arbitrage Filtering Procedure}
\label{appendix:charac_opt_arb}
\begin{proof}[Proof of \Cref{prop:characterization_proof_version}]
We proceed in five steps.

\paragraph{Step 1: Canonical map and interpretation.}
Given any feasible $x=(q^{\mathrm{ask}}, \allowbreak q^{\mathrm{bid}}, \allowbreak u, \allowbreak \alpha) \in \mathcal{C}$, define the induced net weights
$w$ by \eqref{eq:w-from-q-P1} and set the {cash position}
\begin{equation}
\label{eq:cash_from_alpha}
c:=-\alpha.
\end{equation}
Then the terminal payoff including cash equals
\begin{equation}
\label{eq:tildePi_is_payoff}
V_T(w)(s)+c e^{rT}
=
\sum_{i=1}^{N} (q_i^{\mathrm{ask}}-q_i^{\mathrm{bid}})(s-K_i)_+ + u s - \alpha e^{rT}
=\tilde{\Pi}(q^{\mathrm{ask}},q^{\mathrm{bid}},u,\alpha,s).
\end{equation}
Moreover, the associated time--$0$ {execution cost including cash} is
\begin{equation}
\label{eq:tildepi0_is_cost}
\pi_0(q^{\mathrm{ask}},q^{\mathrm{bid}},u)+c
=\pi_0(q^{\mathrm{ask}},q^{\mathrm{bid}},u)-\alpha
=\tilde{\pi}_0(q^{\mathrm{ask}},q^{\mathrm{bid}},u,\alpha).
\end{equation}
Hence, feasibility of $x$ is exactly the statement that the discounted payoff in \eqref{eq:tildePi_is_payoff}
is nonnegative for all $s\ge 0$, and the objective in \eqref{eq:P3prime} is precisely
\[
-\tilde{\pi}_0(x)=-(\pi_0(q^{\mathrm{ask}},q^{\mathrm{bid}},u)+c),
\]
i.e.\ the maximization of the {upfront cash inflow} subject to nonnegative payoff for all $s\ge 0$.

Finally, recall from \Cref{rem:V0-execution} that for a fixed net position $w$,
any round-trip trade with $q_i^{\mathrm{ask}}>0$ and $q_i^{\mathrm{bid}}>0$ at the same strike $K_i$
strictly increases the cost by $\min(q_i^{\mathrm{ask}},q_i^{\mathrm{bid}})(C_i^{\mathrm{ask}}-C_i^{\mathrm{bid}})\ge 0$
without changing the payoff. Therefore, any optimal solution can be chosen to be canonical in the sense that
\begin{equation}
\label{eq:canonical_no_roundtrip}
\forall i\in\{1,\dots,N\}:\quad q_i^{\mathrm{ask}}q_i^{\mathrm{bid}}=0.
\end{equation}
We henceforth work with such an optimal solution.

\paragraph{Step 2 : Homogeneity and maximal scaling.}
Fix any feasible $x\in\mathcal{C}$ and $\lambda\ge 0$.
By the positive homogeneity of $\tilde{\Pi}$ and $\tilde{\pi}_0$ (see \Cref{rem:homogeneity}),
\begin{equation}
\label{eq:homog_scaling}
\tilde{\Pi}(\lambda x,s)=\lambda \tilde{\Pi}(x,s),\qquad
\tilde{\pi}_0(\lambda x)=\lambda \tilde{\pi}_0(x).
\end{equation}
Hence all payoff constraints remain satisfied for $\lambda\ge0$.
The only restriction on scaling comes from the upper size bounds, so the maximal admissible scaling factor
is exactly the quantity $\bar\lambda(x)$ defined in \eqref{eq:lambda_bar}.
For every $\lambda\in[0,\bar\lambda(x)]$, the point $\lambda x$ belongs to $\mathcal{C}$ and the objective scales linearly:
\begin{equation}
\label{eq:objective_scales}
-\tilde{\pi}_0(\lambda x)=-\lambda\tilde{\pi}_0(x).
\end{equation}

\paragraph{Step 3 : Strong-arbitrage case: $\tilde{\pi}_0^*<0$.}
Assume $\tilde{\pi}_0^*<0$. Then the optimal objective value satisfies
\[
-\tilde{\pi}_0^*>0.
\]
Suppose, for contradiction, that no upper size bound is saturated at $x^*$, i.e.
$q_i^{\mathrm{ask}*}<Q_i^{\mathrm{ask}}$ and $q_i^{\mathrm{bid}*}<Q_i^{\mathrm{bid}}$ for all $i$.
Then $\bar\lambda(x^*)>1$, and by \eqref{eq:objective_scales} we would have
\[
-\tilde{\pi}_0(\bar\lambda(x^*)x^*)=-\bar\lambda(x^*) \tilde{\pi}_0^*>-\tilde{\pi}_0^*,
\]
contradicting optimality of $x^*$. Hence at least one depth constraint is saturated, as claimed.

We now show that $x^*$ generates a {strong arbitrage} in the sense of \Cref{def:arbitrage}.
Let $w^*$ be the net weights induced by $x^*$ via \eqref{eq:w-from-q-P1}.
Define the cash position {using the market budget constraint} as
\begin{equation}
\label{eq:cash_zero_cost_strong}
c^*:=-\pi_0(q^{\mathrm{ask}*},q^{\mathrm{bid}*},u^*),
\end{equation}
so that $V_0(w^*)+c^*=0$ (equivalently, \eqref{eq:C0zero} holds).
Using \eqref{eq:tildePi_is_payoff}--\eqref{eq:tildepi0_is_cost}, we can relate the corresponding terminal payoff to
$\tilde{\Pi}^*$:
\begin{align}
\label{eq:Pi_vs_tildePi_strong}
V_T(w^*)(s)+c^*e^{rT}
&=\tilde{\Pi}^*(s)+\bigl(\alpha^*-\pi_0(q^{\mathrm{ask}*},q^{\mathrm{bid}*},u^*)\bigr)e^{rT} \\
&=\tilde{\Pi}^*(s)-\tilde{\pi}_0^* e^{rT}. \notag
\end{align}
Since $x^*\in\mathcal{C}$, we have $\tilde{\Pi}^*(s)\ge 0$ for all $s\ge 0$.
Because $\tilde{\pi}_0^*<0$, the constant shift $(-\tilde{\pi}_0^*e^{rT})$ is strictly positive.
Therefore, \eqref{eq:Pi_vs_tildePi_strong} implies
\[
\forall s\ge 0:\quad V_T(w^*)(s)+c^*e^{rT} > 0,
\]
which is exactly the definition of a strong arbitrage \eqref{eq:strong-arbitrage}.

\paragraph{Step 4 : Weak-arbitrage case with $\tilde{\pi}_0^*=0$ and $\tilde{\Pi}^*\not\equiv 0$.}
Assume $\tilde{\pi}_0^*=0$ and $\tilde{\Pi}^*$ is not identically zero.
Then, by homogeneity \eqref{eq:homog_scaling}, for every $\lambda\in[0,\bar\lambda(x^*)]$, where
$\bar\lambda(x^*)$ is defined in \eqref{eq:lambda_bar}, we have $\lambda x^*\in\mathcal{C}$ and
\[
-\tilde{\pi}_0(\lambda x^*)=-\lambda\tilde{\pi}_0^*=0,
\]
so every $\lambda x^*$ attains the optimal objective value $0$.
Moreover, since $\tilde{\Pi}^*\not\equiv 0$, we have $x^*\neq 0$ and hence $\bar\lambda(x^*)>0$; the segment
$\{\lambda x^*:0\le\lambda\le\bar\lambda(x^*)\}$ is therefore non-trivial.
Finally, the extremal point $\bar\lambda(x^*)x^*$ lies on the boundary of $\mathcal{C}$ and thus
saturates at least one depth constraint.

To link $x^*$ to a weak arbitrage, implement the induced net weights $w^*$ and choose the cash position
\begin{equation}
\label{eq:cash_zero_cost_weak}
c^*:=-\pi_0(q^{\mathrm{ask}*},q^{\mathrm{bid}*},u^*).
\end{equation}
Because $\tilde{\pi}_0^*=0$, we have $\alpha^*=\pi_0(q^{\mathrm{ask}*},q^{\mathrm{bid}*},u^*)$, hence
$\alpha^*-\pi_0(\cdot)=0$ and \eqref{eq:Pi_vs_tildePi_strong} simplifies to
\begin{equation}
\label{eq:Pi_equals_tildePi_weak}
V_T(w^*)(s)+c^*e^{rT}=\tilde{\Pi}^*(s)\ge 0,\qquad \forall s\ge 0.
\end{equation}
Since $\tilde{\Pi}^*$ is a continuous piecewise–linear function on $\mathbb{R}_+$ with breakpoints in
$\{0,$ $K_1$,$\ldots,$ $K_N\}$ and $\tilde{\Pi}^*\not\equiv 0$, there exists a nonempty open interval
$I\subset\mathbb{R}_+$ such that $\tilde{\Pi}^*(s)>0$ for all $s\in I$.

Let $\mathbb{P}$ be any probability measure on $\R_+$ with a density (with respect to Lebesgue measure) that is strictly positive on $I$. Then we obtain
\[
\mathbb{P}\!\left(V_T(w^*)+c^*e^{rT}>0\right)
= \mathbb{P}\!\left(\tilde{\Pi}^*(S_T)>0\right)
\ge \mathbb{P}(S_T\in I)>0,
\]
which proves that $(w^*,c^*)$ yields a weak arbitrage in the sense of \Cref{eq:weak arbitrage}.

\paragraph{Step 5 : No-arbitrage case and uniqueness of the optimizer.}
First, assume that $\tilde{\Pi}^*(s)=0$ for all $s\ge 0$.
By \eqref{eq:tildePi_is_payoff}, this means
\begin{equation}
\label{eq:linear_indep_identity}
\sum_{i=1}^{N} (q_i^{\mathrm{ask}*}-q_i^{\mathrm{bid}*})(s-K_i)_+ + u^* s - \alpha^* e^{rT} \equiv 0
\quad\text{on }\mathbb{R}_+.
\end{equation}
Since the functions $\{(s-K_i)_{+},\, s,\, 1\}$ are linearly independent on $\mathbb{R}_{+}$, \eqref{eq:linear_indep_identity} yields
\[
q_i^{\mathrm{ask}*}-q_i^{\mathrm{bid}*}=0\ \ (i=1,\dots,N),\qquad u^*=0,\qquad \alpha^*=0.
\]
In particular, the induced net weights satisfy $w^*_{K_i}=0$ for all $i$.
Invoking the canonical optimality property \eqref{eq:canonical_no_roundtrip}, we further conclude that
$q_i^{\mathrm{ask}*}=q_i^{\mathrm{bid}*}=0$ for all $i$, hence $x^*=0$.
This proves uniqueness of the optimal solution under the canonical implementation convention.

Conversely, assume that the market is arbitrage-free in the sense of \Cref{def:arbitrage}
(item~\ref{item:non_arb}). If there existed a feasible $x\in\mathcal{C}$ such that
$\tilde{\pi}_0(x)=0$ and $\tilde{\Pi}(x,\cdot)\not\equiv 0$, then the construction in Step~4
would yield a weak arbitrage under any probability measure $\mathbb{P}$ whose law of $S_T$
charges every nonempty open interval, contradicting the arbitrage-free assumption.
Likewise, if there existed a feasible $x\in\mathcal{C}$ with $\tilde{\pi}_0(x)<0$, then Step~3
would yield a strong arbitrage, again a contradiction.
Therefore, arbitrage-free assumption forces $\tilde{\Pi}^*\equiv 0$ at the optimizer.

Finally, the three cases are mutually exclusive by definition:
either $\tilde{\pi}_0^*<0$, or $\tilde{\pi}_0^*=0$; in the latter case either $\tilde{\Pi}^*\equiv 0$ or not.
This completes the proof.
\end{proof}
\subsubsection{Global Monotonicity, Convexity, Lower Bounds on Call Prices.}
\label{appendix:global_prop}
\paragraph{From Local to Global Monotonicity}
Note that by \Cref{eq:monotonicity} it follows immediately that for all \(0 \le i < j \le N-1\) we have \(C_i > C_j\), that is, the call price sequence is globally strictly decreasing.

\paragraph{From Local to Global Convexity}

For \(0 \le i < j \le N\), we denote the discrete slope between \((K_{i},C_{i})\) and \((K_{j},C_{j})\) by
\begin{equation}
 \Delta_{i j}
 \;:=\;
 \frac{C_{i} - C_{j}}{K_{j} - K_{i}}.
 \label{eq:discrete-slope-ij}
\end{equation}
In particular, the local slopes between consecutive strikes are \(\Delta_{i,i+1}\), \(i=0,\dots,N-1\).

We will see now that conditions \eqref{eq:monotonicity} and \eqref{eq:discrete-convexity-local} imply global discrete convexity of the call prices
that is, the slopes \(\Delta_{i j}\) are decreasing in the strike. More precisely:
\begin{proposition}[Global convexity]
\label{prop:global-convexity}
Assume \eqref{eq:monotonicity} and \eqref{eq:discrete-convexity-local} hold; then: 
\begin{equation}
\forall i < j < k: \ 
 \frac{C_{i} - C_{j}}{K_{j} - K_{i}}
 -
 \frac{C_{j} - C_{k}}{K_{k} - K_{j}}
 \;>\; 0,
 \label{eq:discrete-convexity-global}
\end{equation}
\end{proposition}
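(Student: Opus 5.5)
The plan is to reduce the global statement to the local one through the standard fact that a slope over a union of consecutive intervals is a convex combination of the local slopes. Write $h_\ell := K_{\ell+1}-K_\ell>0$ and $\Delta_\ell := \Delta_{\ell,\ell+1}$ for $\ell=0,\dots,N-1$. Condition \eqref{eq:discrete-convexity-local} states exactly that $\Delta_0>\Delta_1>\dots>\Delta_{N-1}$. For any $i<j$, telescoping $C_i-C_j=\sum_{\ell=i}^{j-1}(C_\ell-C_{\ell+1})$ and $K_j-K_i=\sum_{\ell=i}^{j-1}h_\ell$ gives
\[
\Delta_{ij}=\sum_{\ell=i}^{j-1}\frac{h_\ell}{K_j-K_i}\,\Delta_\ell ,
\]
so $\Delta_{ij}$ is a weighted average of the local slopes $\Delta_i,\dots,\Delta_{j-1}$, with strictly positive weights summing to one.

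Now fix $i<j<k$. From the identity above, $\Delta_{ij}\ge \min_{i\le \ell\le j-1}\Delta_\ell=\Delta_{j-1}$, since the local slopes are strictly decreasing. Likewise $\Delta_{jk}\le \max_{j\le \ell\le k-1}\Delta_\ell=\Delta_j$. Local convexity gives $\Delta_{j-1}>\Delta_j$, and chaining these inequalities yields $\Delta_{ij}\ge\Delta_{j-1}>\Delta_j\ge\Delta_{jk}$. This is \eqref{eq:discrete-convexity-global}.

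No step is a real obstacle. The only care needed is that the averaging weights are positive, which holds because the strikes are strictly increasing. The strictness of the final inequality comes from the single middle link $\Delta_{j-1}>\Delta_j$, so the outer inequalities may be weak. The monotonicity condition \eqref{eq:monotonicity} is not actually needed for this argument; it only ensures that all slopes are positive, which plays no role here.
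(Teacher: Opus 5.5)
Your proof is correct and follows the paper's own argument: both write $\Delta_{ij}$ and $\Delta_{jk}$ as convex combinations of adjacent slopes and place them on either side of the strict link $\Delta_{j-1,j}>\Delta_{j,j+1}$. Your version is slightly more careful than the paper's. The paper states the outer bounds strictly, which fails when $j=i+1$ or $k=j+1$, whereas you keep them weak and also observe that monotonicity is not needed.
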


\begin{proof}
Assume that, for all \(i\),
\begin{equation}
 \Delta_{i,i+1} > \Delta_{i+1,i+2} > 0,
 \qquad i = 0,\dots,N-2,
 \label{eq:local-slopes-decreasing}
\end{equation}
which encodes local discrete convexity and monotonicity. Fix \(i<j<k\). Using a telescoping argument, we can write
\begin{align}
 \Delta_{i j}
 &= \sum_{l=i}^{j-1} \omega^{(i,j)}_{l,l+1} \,\Delta_{l,l+1},
 \label{eq:convex-combination-Delta-ij}
 \\
 \Delta_{j k}
 &= \sum_{l=j}^{k-1} \omega^{(j,k)}_{l,l+1} \,\Delta_{l,l+1},
 \label{eq:convex-combination-Delta-jk}
\end{align}
where the weight \(\omega^{(i,j)}_{l,l+1}\) is given by
\begin{equation}
 \omega^{(i,j)}_{l,l+1}
 =
 \frac{K_{l+1} - K_{l}}{K_{j} - K_{i}}
 > 0,
 \qquad
 \sum_{l=i}^{j-1} \omega^{(i,j)}_{l,l+1} = 1,
\end{equation}
and similarly for \(\omega^{(j,k)}_{l,l+1}\). In other words, both \(\Delta_{i j}\) and \(\Delta_{j k}\) are convex combinations of the local slopes \(\Delta_{l,l+1}\) over the respective intervals.

By \eqref{eq:local-slopes-decreasing}, we have, on the one hand,
\begin{equation}
 \Delta_{l,l+1} > \Delta_{j-1,j} > \Delta_{j,j+1},
 \qquad l \in \{i,\dots,j-1\},
\end{equation}
and, on the other hand,
\begin{equation}
 \Delta_{l,l+1} < \Delta_{j,j+1},
 \qquad l \in \{j,\dots,k-1\}.
\end{equation}
Since convex combinations preserve these bounds, we obtain
\begin{equation}
 \Delta_{i j}
 \;>\;
 \Delta_{j-1,j}
 \;>\;
 \Delta_{j,j+1}
 \;>\;
 \Delta_{j k},
\end{equation}
which is exactly \eqref{eq:discrete-convexity-global}.
\end{proof}

\paragraph{Global Lower Bound on Call Prices}

We now derive global lower bounds for admissible call prices.

\begin{proposition}[Global lower bounds]
\label{prop:intrinsic-bounds}
Let \(C\) be an admissible call price vector in the sense of \Cref{def:admissible-call-vector}. Then, for all \(i \in \{1,\dots,N\}\),
\begin{equation}
 S_{0} e^{-qT} - K_{i} e^{-rT}
 \;<\;
 C_{i}
 \;<\;
 S_{0} e^{-qT}.
 \label{eq:global-intrinsic-bounds}
\end{equation}
\end{proposition}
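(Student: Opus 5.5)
The upper bound follows from global monotonicity, the lower bound from global convexity combined with the slope condition on $[K_0,K_1]$.

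\emph{Upper bound.} By \eqref{eq:bid-ask-bounds}, $C_0 = S_0e^{-qT}$. The local monotonicity \eqref{eq:monotonicity} chains into $C_0 > C_1 > \dots > C_N$, as noted in the paragraph on global monotonicity. Hence $C_i < C_0 = S_0e^{-qT}$ for every $i\ge 1$.

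\emph{Lower bound.} Fix $i\in\{1,\dots,N\}$ and use the slopes $\Delta_{0i}$ and $\Delta_{01}$ of \eqref{eq:discrete-slope-ij}.

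For $i=1$, the claim is exactly \eqref{eq:slope-K0-K1}, since $C_0 - C_1 < K_1e^{-rT}$ rearranges to $C_1 > S_0e^{-qT} - K_1e^{-rT}$.

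For $i\ge 2$, apply the telescoping representation \eqref{eq:convex-combination-Delta-ij} with the pair $(0,i)$. This writes $\Delta_{0i}$ as a convex combination, with strictly positive weights, of the local slopes $\Delta_{l,l+1}$ for $l=0,\dots,i-1$. By \eqref{eq:discrete-convexity-local}, $\Delta_{l,l+1} < \Delta_{0,1}$ for every $l\ge 1$. Since $i\ge2$, at least one such term carries positive weight, so $\Delta_{0i} < \Delta_{01}$. Condition \eqref{eq:slope-K0-K1} gives $\Delta_{01} = (C_0-C_1)/K_1 < e^{-rT}$. Combining the two inequalities with $K_0=0$ yields $C_0 - C_i = K_i\Delta_{0i} < K_i e^{-rT}$, which is the claimed lower bound after substituting $C_0 = S_0e^{-qT}$.

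The only point needing care is the strictness of $\Delta_{0i} < \Delta_{01}$. It holds because the local slopes are strictly decreasing, so the convex combination is strictly dominated by its largest term whenever some other term has positive weight, which is the case for $i\ge 2$.
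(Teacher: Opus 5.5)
Your proof is correct and follows the paper's argument. The upper bound comes from chained monotonicity together with $C_0=S_0e^{-qT}$, the case $i=1$ is read off directly from the slope condition, and for $i\ge 2$ you combine $\Delta_{0i}<\Delta_{01}$ with $\Delta_{01}<e^{-rT}$. The only difference is cosmetic: you get $\Delta_{0i}<\Delta_{01}$ from the telescoping convex-combination identity, while the paper writes the same fact as the chord inequality $C_1<\frac{K_1}{K_i}C_i+\frac{K_i-K_1}{K_i}C_0$.
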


\begin{proof}
The upper bound in \eqref{eq:global-intrinsic-bounds} is immediate: by monotonicity \eqref{eq:monotonicity} and \eqref{eq:call-strike-zero}, we have
\begin{equation}
 C_{i} < C_{0} = S_{0} e^{-qT}.
\end{equation}
\medskip
For the lower bound, fix \(i \in \{2,\dots,N\}\). Let
\begin{equation}
 \lambda := \frac{K_{1}}{K_{i}} \in (0,1),
 \qquad
 K_{1} = \lambda K_{i} + (1-\lambda) K_{0}.
\end{equation}
By convexity in the strike,
\begin{equation}
 C_{1} = C(K_{1})
 \;<\;
 \lambda C(K_{i}) + (1-\lambda) C(K_{0}),
\end{equation}
that is,
\begin{equation}
 C_{1}
 \;<\;
 \frac{K_{1}}{K_{i}} C_{i}
 +
 \frac{K_{i}-K_{1}}{K_{i}} C_{0}.
\end{equation}
Subtracting \(S_{0} e^{-qT}\) on both sides and using \(C_{0}=S_{0}e^{-qT}\), we obtain
\begin{equation}
 C_{1} - S_{0} e^{-qT}
 \;<\;
 \frac{K_{1}}{K_{i}} \big( C_{i} - S_{0} e^{-qT} \big).
\end{equation}
Hence
\begin{equation}
 \frac{C_{1} - S_{0} e^{-qT}}{K_{1}}
 \;<\;
 \frac{C_{i} - S_{0} e^{-qT}}{K_{i}}.
 \label{eq:slope-1-i}
\end{equation}
By the slope control on \([K_{0},K_{1}]\) stated in \eqref{eq:slope-K0-K1},
\begin{equation}
 \frac{C_{1} - S_{0} e^{-qT}}{K_{1}} > -e^{-rT}.
\end{equation}
Combining this with \eqref{eq:slope-1-i} yields
\begin{equation}
 \frac{C_{i} - S_{0} e^{-qT}}{K_{i}} > -e^{-rT},
\end{equation}
which is equivalent to
\begin{equation}
 C_{i} > S_{0} e^{-qT} - K_{i} e^{-rT}.
\end{equation}
This also holds for \(i=1\) by a direct application of \eqref{eq:slope-K0-K1}. The claim follows.
\end{proof}

\subsubsection{Existence of an Admissible Call-Price Vector}
\label{appendix:existence_call_price}
\begin{proof}[Proof of \Cref{prop:existence-admissible-C}]
The proof proceeds in six steps.
\paragraph{Step 1: Encoding the constraints as linear inequalities.}
We rewrite the conditions of Definition~\ref{def:admissible-call-vector} in a linear inequality form suitable for Motzkin’s theorem. This yields a system of mixed strict and non-strict inequalities:
\begin{equation}
    \label{eq:motzkin-primal-system}
    A_{\mathrm{mkt}}\,C \le b_{\mathrm{mkt}}, \qquad
A_{\mathrm{struct}}\,C < b_{\mathrm{struct}}, \qquad
C>0,
\end{equation}
where \(A_{\mathrm{mkt}}\) and \(b_{\mathrm{mkt}}\) capture the market bid--ask bounds, and \(A_{\mathrm{struct}}\) and \(b_{\mathrm{struct}}\) capture the structural shape constraints.

\begin{itemize}
\item \textbf{Structural constraints.}
\begin{itemize}
    \item \emph{Strict monotonicity.} For each \(i=1,\dots,N\), \(C_i<C_{i-1}\), equivalently \(C_{i}-C_{i-1}<0\). Recall that the difference matrix \(D^{(1)}\in\R^{N\times(N+1)}\) (\ref{eq:D1}) is given by
    \begin{equation}
        (D^{(1)}C)_i := C_{i}-C_{i-1},\qquad i=1,\dots,N,
    \end{equation}
    so that \(D^{(1)}C<0\) exactly encodes \eqref{eq:monotonicity}.

    \item \emph{Strict convexity.} For each \(i=1,\dots,N-1\),
    \[
        \frac{C_{i-1}-C_{i}}{K_{i}-K_{i-1}}
        \;>\;
        \frac{C_{i}-C_{i+1}}{K_{i+1}-K_{i}},
    \]
    which rearranges to
    \[
        -\Big(\alpha_{i-1,i}(C_{i-1}-C_{i})-\alpha_{i,i+1}(C_{i}-C_{i+1})\Big)<0,
    \]
    using \(\alpha_{a,b}\) from Definition~\ref{def:vs-bf}. Define \(D^{(2)}\in\R^{(N-1)\times(N+1)}\) by
    \[
        (D^{(2)}C)_i
        :=
        -\,\alpha_{i-1,i}\,C_{i-1}
        +(\alpha_{i-1,i}+\alpha_{i,i+1})\,C_i
        -\alpha_{i,i+1}\,C_{i+1},
        \qquad i=1,\dots,N-1,
    \]
    so that \(D^{(2)}C<0\) exactly encodes \eqref{eq:discrete-convexity-local}.

    \item \emph{Slope constraint on \([K_0,K_1]\).} The lower bound \eqref{eq:slope-K0-K1} can be written \((C_0-C_1)-K_1e^{-rT}<0\). Let \(s^\top:=(1,-1,0,\dots,0)\) and \(b_{\mathrm{SP1}}:=K_1e^{-rT}\). Then \(s^\top C<b_{\mathrm{SP1}}\) represents the slope constraint.
\end{itemize}

Collecting these, set
\begin{equation}
A_{\mathrm{struct}} :=
\begin{pmatrix}
D^{(1)}\\
D^{(2)}\\
s^\top
\end{pmatrix},
\qquad
b_{\mathrm{struct}} :=
\begin{pmatrix}
0\\
0\\
b_{\mathrm{SP1}}
\end{pmatrix}.
\label{eq:struct-block}
\end{equation}
so that \(A_{\mathrm{struct}}C<b_{\mathrm{struct}}\) compactly represents all strict inequalities \eqref{eq:monotonicity}--\eqref{eq:slope-K0-K1}.

\item \textbf{Market (bid--ask) constraints.}
Introduce vectors for the bid and ask quotes:
\[
C^{\mathrm{ask}}:=(C_0^{\mathrm{ask}},\dots,C_N^{\mathrm{ask}})^\top,
\qquad
C^{\mathrm{bid}}:=(C_0^{\mathrm{bid}},\dots,C_N^{\mathrm{bid}})^\top,
\]
with \(C_0^{\mathrm{ask}}=C_0^{\mathrm{bid}}=S_0e^{-qT}\) under assumption (Hyp-Spot). The bid--ask bounds \eqref{eq:bid-ask-bounds} can be written as two systems of weak inequalities: \(C\le C^{\mathrm{ask}}\) and \(-C\le -C^{\mathrm{bid}}\). Thus set
\begin{equation}
A_{\mathrm{mkt}} :=
\begin{pmatrix}
I_{N+1}\\
-I_{N+1}
\end{pmatrix},
\qquad
b_{\mathrm{mkt}} :=
\begin{pmatrix}
C^{\mathrm{ask}}\\
-\,C^{\mathrm{bid}}
\end{pmatrix}.
\label{eq:mkt-block}
\end{equation}
so that \(A_{\mathrm{mkt}}C\le b_{\mathrm{mkt}}\) corresponds exactly to the bid--ask constraints.

\item \textbf{Positivity.} We aim to find \(C>0\) as in the setup of \Cref{cor:motzkin-positive}.
\end{itemize}

In summary, the feasibility of an admissible call price vector \(C\) is equivalent to the linear system \eqref{eq:motzkin-primal-system}.

\paragraph{Step 2: Decomposing Motzkin certificate.}
Assume for contradiction that the system \eqref{eq:motzkin-primal-system} admits \emph{no} solution and apply \Cref{cor:motzkin-positive} with
\[
A:=A_{\mathrm{mkt}},\quad b:=b_{\mathrm{mkt}},
\qquad
B:=A_{\mathrm{struct}},\quad c:=b_{\mathrm{struct}}.
\]
It guarantees the existence of nonnegative multipliers $y\in\R^{2(N+1)}_+$ and $z\in\R^{(2N)}_+$ satisfying one of the three conditions listed in \Cref{cor:motzkin-positive}.
Introduce the aggregated dual vector
\begin{equation}
w:=A_{\mathrm{mkt}}^\top y + A_{\mathrm{struct}}^\top z \in \R^{N+1}.
\label{eq:def-w-motzkin}
\end{equation}
\noindent
\smallskip
\noindent\textbf{Block decompositions.}
Decompose \(y\) according to the bid--ask block \eqref{eq:mkt-block} as
\[
y=\bigl(y^{\mathrm{ask}},y^{\mathrm{bid}}\bigr),
\qquad
y^{\mathrm{ask}},y^{\mathrm{bid}}\in\R^{N+1}_+,
\]
so that, using \eqref{eq:mkt-block},
\begin{equation}
A_{\mathrm{mkt}}^\top y
=
y^{\mathrm{ask}}-y^{\mathrm{bid}}
=:w_{\mathrm{mkt}}
\in\R^{N+1}.
\label{eq:w-mkt-motzkin}
\end{equation}
Likewise, decompose \(z\) according to the structural block \eqref{eq:struct-block} as
\[
z=\bigl(z^{\mathrm{VS}},z^{\mathrm{BF}},z^{\mathrm{SP1}}\bigr),
\quad
z^{\mathrm{VS}}\in\R^N_+,\ 
z^{\mathrm{BF}}\in\R^{N-1}_+,\ 
z^{\mathrm{SP1}}\in\R_+,
\]
and define
\begin{equation}
w_{\mathrm{struct}}
:=
A_{\mathrm{struct}}^\top z
=
(D^{(1)})^\top z^{\mathrm{VS}} + (D^{(2)})^\top z^{\mathrm{BF}} + s\,z^{\mathrm{SP1}}.
\label{eq:w-struct-motzkin}
\end{equation}
Combining \eqref{eq:def-w-motzkin}--\eqref{eq:w-struct-motzkin} gives the key identity
\begin{equation}
w = w_{\mathrm{mkt}} + w_{\mathrm{struct}},
\qquad\text{or equivalently}\qquad
w_{\mathrm{mkt}} = w - w_{\mathrm{struct}}.
\label{eq:key-identity-motzkin}
\end{equation}

In the remainder of the proof, \eqref{eq:key-identity-motzkin} will be used in two complementary ways.
First, the left-hand side identifies the {executed} net quantities in the underlying and the calls and thus pins down the time--0 trading cost before accounting for the cash component.
The associated cash investment is $z^{\mathrm{SP1}}\,K_1e^{-rT}$ which delivers the cash flow \(z^{\mathrm{SP1}}K_1\) at maturity.
Note that \(w_{\mathrm{mkt}}\) may have mixed signs.
Second, the right-hand side decomposes the same executed position into a componentwise nonnegative exposure \(w\) (long-only in the underlying and calls) and a structural term \(-w_{\mathrm{struct}}\) which will be shown in the next step to be a nonnegative combination of elementary spreads (vertical and butterfly spreads) and the synthetic-put at strike \(K_1\).

\paragraph{Step 3: Computing the coordinates of \(w_{\mathrm{mkt}}\).}
We now expand the key identity \eqref{eq:key-identity-motzkin} componentwise. For each \(i=0,\dots,N\), the \(K_i\)-coordinate of \eqref{eq:key-identity-motzkin} is
\begin{equation}
(w_{\mathrm{mkt}})_{K_i} = w_{K_i} - (w_{\mathrm{struct}})_{K_i}.
\label{eq:wmkt-coordinate-identity}
\end{equation}
On the other hand, by \eqref{eq:w-mkt-motzkin} we have 
\begin{equation}
(w_{\mathrm{mkt}})_{K_i}=y^{\mathrm{ask}}_{K_i}-y^{\mathrm{bid}}_{K_i}.
\label{eq:wmkt-y-coordinate}
\end{equation}

\smallskip
For notational convenience, write \(z^{\mathrm{VS}}_{K_{i-1},K_i}:=(z^{\mathrm{VS}})_i\) for the quantity of \(\mathrm{VS}_{i-1,i}\), and
\(z^{\mathrm{BF}}_{K_{i-1},K_i,K_{i+1}}:=(z^{\mathrm{BF}})_i\) for the quantity of \(\mathrm{BF}_{i-1,i,i+1}\).
Moreover, \(z^{\mathrm{SP1}}\) is the quantity of the synthetic put \footnote{A synthetic put at strike \(K\) is the standard static replication obtained by combining one call at strike \(K\) with a short position in the underlying. Adding a cash position that pays \(K\) at maturity, its terminal payoff is \((K-S_T)_+\).} position at strike \(K_1\).

\bigskip
\noindent\textbf{Explicit coordinates.}
Using \eqref{eq:w-struct-motzkin} together with the explicit forms of \(D^{(1)}\), \(D^{(2)}\) and \(s\), we obtain the following identities for \(y^{\mathrm{ask}}-y^{\mathrm{bid}}\).

\medskip
\noindent\textbf{(a) Strike $K_0$.} At the lower boundary, only the first vertical-spread, the first butterfly, and the slope constraints contribute:
\begin{align}
(w_{\mathrm{mkt}})_{K_0}
&= w_{K_0} - (w_{\mathrm{struct}})_{K_0}\nonumber\\
&= w_{K_0}
+ z^{\mathrm{VS}}_{K_0,K_1}
+ \alpha_{0,1} z^{\mathrm{BF}}_{K_0,K_1,K_2}
- z^{\mathrm{SP1}}
= y^{\mathrm{ask}}_{K_0}-y^{\mathrm{bid}}_{K_0}.
\label{eq:wmkt-K0-motzkin}
\end{align}

\medskip
\noindent\textbf{(b) Strike $K_1$.} At the first interior node, only adjacent vertical-spread, the first two butterflies, and the slope constraints contribute:
\begin{align}
(w_{\mathrm{mkt}})_{K_1}
&= w_{K_1} - (w_{\mathrm{struct}})_{K_1}\nonumber\\
&= w_{K_1}
-\big(z^{\mathrm{VS}}_{K_0,K_1}-z^{\mathrm{VS}}_{K_1,K_2}\big)
-\big((\alpha_{0,1}+\alpha_{1,2})z^{\mathrm{BF}}_{K_0,K_1,K_2}-\alpha_{1,2}z^{\mathrm{BF}}_{K_1,K_2,K_3}\big)
+ z^{\mathrm{SP1}}\nonumber\\
&= y^{\mathrm{ask}}_{K_1}-y^{\mathrm{bid}}_{K_1}.
\label{eq:wmkt-K1-motzkin}
\end{align}

\medskip
\noindent\textbf{(c) Strike $K_i$ for $2\le i\le N-2$.} In the interior, only adjacent vertical-spread constraints and the three nearest butterfly constraints contribute:
\begin{align}
(w_{\mathrm{mkt}})_{K_i}
&= w_{K_i} - (w_{\mathrm{struct}})_{K_i}\nonumber\\
&= w_{K_i}
-\big(z^{\mathrm{VS}}_{K_{i-1},K_i}-z^{\mathrm{VS}}_{K_i,K_{i+1}}\big)\nonumber\\
&\quad
-\Big(
-\alpha_{i,i+1} z^{\mathrm{BF}}_{K_i,K_{i+1},K_{i+2}}
+(\alpha_{i-1,i}+\alpha_{i,i+1})z^{\mathrm{BF}}_{K_{i-1},K_i,K_{i+1}}
-\alpha_{i-1,i}z^{\mathrm{BF}}_{K_{i-2},K_{i-1},K_i}
\Big)\nonumber\\
&= y^{\mathrm{ask}}_{K_i}-y^{\mathrm{bid}}_{K_i}.
\label{eq:wmkt-Ki-motzkin}
\end{align}

\medskip
\noindent\textbf{(d) Strike $K_{N-1}$.} Near the upper boundary, only adjacent vertical-spread constraints and the last two butterfly constraints contribute:
\begin{align}
(w_{\mathrm{mkt}})_{K_{N-1}}
&= w_{K_{N-1}} - (w_{\mathrm{struct}})_{K_{N-1}}\nonumber\\
&= w_{K_{N-1}}
-\big(z^{\mathrm{VS}}_{K_{N-2},K_{N-1}}-z^{\mathrm{VS}}_{K_{N-1},K_N}\big)\nonumber\\
&\quad
-\Big(
(\alpha_{N-2,N-1}+\alpha_{N-1,N})\,z^{\mathrm{BF}}_{K_{N-2},K_{N-1},K_N}
-\alpha_{N-2,N-1}\,z^{\mathrm{BF}}_{K_{N-3},K_{N-2},K_{N-1}}
\Big)\nonumber\\
&= y^{\mathrm{ask}}_{K_{N-1}}-y^{\mathrm{bid}}_{K_{N-1}}.
\label{eq:wmkt-KNm1-motzkin}
\end{align}

\medskip
\noindent\textbf{(e) Strike $K_N$.} At the upper boundary, only the last vertical-spread constraint and the last butterfly constraint contribute:
\begin{align}
(w_{\mathrm{mkt}})_{K_N}
&= w_{K_N} - (w_{\mathrm{struct}})_{K_N}\nonumber\\
&= w_{K_N}
- z^{\mathrm{VS}}_{K_{N-1},K_N}
+ \alpha_{N-1,N}z^{\mathrm{BF}}_{K_{N-2},K_{N-1},K_N}
= y^{\mathrm{ask}}_{K_N}-y^{\mathrm{bid}}_{K_N}.
\label{eq:wmkt-KN-motzkin}
\end{align}

\paragraph{Step 4: Payoff analysis of the structural portfolio.}
We now make explicit the instrument decomposition of \(-w_{\mathrm{struct}}\) and derive the corresponding payoff inequality. Let \((e_1,\dots,e_{N+1})\) denote the canonical basis of \(\R^{N+1}\).

\smallskip
\noindent\textbf{Elementary spreads.}
For each adjacent pair \((K_{i-1},K_i)\), recall the vertical spread exposure and corresponding payoff
\[
\mathrm{VS}_{i-1,i}=e_{i}-e_{i+1},
\qquad
V_T(\mathrm{VS}_{i-1,i})=(S_T-K_{i-1})_+-(S_T-K_i)_+\ge 0.
\]
For each triple \((K_{i-1},K_i,K_{i+1})\), recall the butterfly exposure
\[
\mathrm{BF}_{i-1,i,i+1}
=\alpha_{i-1,i}e_{i}-(\alpha_{i-1,i}+\alpha_{i,i+1})e_{i+1}+\alpha_{i,i+1}e_{i+2},
\]
with terminal payoff
\[
V_T(\mathrm{BF}_{i-1,i,i+1})
=\alpha_{i-1,i}(S_T-K_{i-1})_+
-(\alpha_{i-1,i}+\alpha_{i,i+1})(S_T-K_i)_+
+\alpha_{i,i+1}(S_T-K_{i+1})_+
\ge 0.
\]
Moreover, \(-s=e_2-e_1\) is the position in the call at \(K_1\) and the underlying used in the standard synthetic-put replication. Adding a cash amount \(K_1\) at maturity yields the put payoff :
\[
V_T(e_2-e_1)+K_1=(K_1-S_T)_+\ge 0.
\]

\smallskip
\noindent\textbf{Identification with the structural matrices.}
By construction of \(D^{(1)}\), \(D^{(2)}\), and \(s\), we have the rowwise identities
\[
-\bigl(D^{(1)}_{i,\cdot}\bigr)^\top=\mathrm{VS}_{i-1,i},
\qquad
-\bigl(D^{(2)}_{i,\cdot}\bigr)^\top=\mathrm{BF}_{i-1,i,i+1},
\qquad
-s=e_2-e_1.
\]
Therefore, recalling \eqref{eq:w-struct-motzkin},
\begin{align}
-w_{\mathrm{struct}}
&=
\sum_{i=1}^{N} z^{\mathrm{VS}}_{K_{i-1},K_i}\,\mathrm{VS}_{i-1,i}
\;+\;
\sum_{i=1}^{N-1} z^{\mathrm{BF}}_{K_{i-1},K_i,K_{i+1}}\,\mathrm{BF}_{i-1,i,i+1}
\;+\;
z^{\mathrm{SP1}}(e_2-e_1).
\label{eq:wstruct-decomp-motzkin}
\end{align}
Since all coefficients \(z^{\mathrm{VS}},z^{\mathrm{BF}},z^{\mathrm{SP1}}\) are nonnegative and each term on the right-hand side has a nonnegative payoff at time \(T\) (after adding the cash \(K_1\) to the last term), we obtain the pointwise inequality
\begin{align}
V_T(-w_{\mathrm{struct}}) + z^{\mathrm{SP1}}K_1
=&
\sum_{i=1}^{N} z^{\mathrm{VS}}_{K_{i-1},K_i}\,V_T(\mathrm{VS}_{i-1,i})
+
\sum_{i=1}^{N-1} z^{\mathrm{BF}}_{K_{i-1},K_i,K_{i+1}}\,V_T(\mathrm{BF}_{i-1,i,i+1})\nonumber \\
&+
z^{\mathrm{SP1}}(K_1-S_T)_+
\ge 0,
\qquad \text{for all } S_T\ge 0.
\label{eq:struct-payoff-nonneg-motzkin}
\end{align}

\paragraph{Step 5: Combined payoff of the executed strategy.}
By \Cref{cor:motzkin-positive}, the aggregated vector \(w\) satisfies
\begin{equation}
w = A_{\mathrm{mkt}}^\top y + A_{\mathrm{struct}}^\top z \ge 0
\quad\text{(componentwise)}.
\label{eq:w-nonneg}
\end{equation}
Recall that we interpret \(w=(w_{K_0},\dots,w_{K_N})\) as a static portfolio holding \(w_{K_i}\) units of the call with strike \(K_i\); its terminal payoff is
\begin{equation}
V_T(w)=\sum_{i=0}^N w_{K_i}(S_T-K_i)_+.
\label{eq:w_motzkin_payoff}
\end{equation}
Since \(w\ge 0\) and \((S_T-K_i)_+\ge 0\) for all \(i\), we have
\begin{equation}
V_T(w)\ge 0
\qquad\text{for all } S_T\ge 0.
\label{eq:VT-w-nonneg}
\end{equation}
Using \eqref{eq:key-identity-motzkin} and the linearity of \(V_T\) yields
\[
V_T(w_{\mathrm{mkt}})=V_T(w-w_{\mathrm{struct}})
=V_T(w)+V_T(-w_{\mathrm{struct}}).
\]
Combining \eqref{eq:struct-payoff-nonneg-motzkin} with \eqref{eq:VT-w-nonneg} finally gives
\begin{equation}
V_T(w_{\mathrm{mkt}})+z^{\mathrm{SP1}}K_1 \ge 0
\qquad\text{for all } S_T\ge 0.
\label{eq:final-payoff-motzkin}
\end{equation}

\paragraph{Step 6: Inception cost analysis and conclusion.}
We turn the Motzkin certificate into an \emph{implementable} static strategy and conclude by comparing its time--$0$
\emph{value} $V_0(\cdot)$ (cf.\ \Cref{def:V0}) to its terminal payoff.

\medskip
\noindent\textbf{Step 6.1 (Scaling to satisfy depth constraints).}
Recall from Step~2 that we have a certificate $(y,z)$ with $y\ge 0$ and $z\ge 0$.
To enforce market depths at strikes $K_1,\dots,K_N$, we need
\begin{equation}
0\le y_i^{\mathrm{ask}}\le Q_i^{\mathrm{ask}},
\qquad
0\le y_i^{\mathrm{bid}}\le Q_i^{\mathrm{bid}},
\qquad i=1,\dots,N.
\label{eq:step6-depth-constraints}
\end{equation}
By positive homogeneity of \Cref{cor:motzkin-positive}, for any $\lambda>0$, the pair $(\lambda y,\lambda z)$ is again a valid certificate.
Choose
\begin{equation}
\lambda
:=
\min\Bigl(
1,\
\min_{1\le i\le N:\ y^{\mathrm{ask}}_{K_i}>0}\frac{Q_i^{\mathrm{ask}}}{y^{\mathrm{ask}}_{K_i}},\
\min_{1\le i\le N:\ y^{\mathrm{bid}}_{K_i}>0}\frac{Q_i^{\mathrm{bid}}}{y^{\mathrm{bid}}_{K_i}}
\Bigr)\in(0,1],
\label{eq:step6-lambda}
\end{equation}
and define the scaled certificate
\[
\tilde y:=\lambda y,\qquad \tilde z:=\lambda z.
\]
Then $\tilde y$ satisfies \eqref{eq:step6-depth-constraints}.

\medskip
\noindent\textbf{Step 6.2 (Executed trade at quotes, cash leg).}
Define the execution strategy $(q^{\mathrm{ask}},q^{\mathrm{bid}},u)$ by
\begin{equation}
q_i^{\mathrm{ask}}:=\tilde y^{\mathrm{ask}}_{K_i},\qquad
q_i^{\mathrm{bid}}:=\tilde y^{\mathrm{bid}}_{K_i},\qquad i=1,\dots,N,
\qquad
u:=\tilde y^{\mathrm{ask}}_{K_0}-\tilde y^{\mathrm{bid}}_{K_0}.
\label{eq:step6-execution-triple}
\end{equation}

\noindent
Define the scaled market and structural portfolios by
\begin{equation}
\tilde w_{\mathrm{mkt}}
:=A_{\mathrm{mkt}}^\top \tilde y
=\lambda w_{\mathrm{mkt}},
\qquad
\tilde w_{\mathrm{struct}}
:=A_{\mathrm{struct}}^\top \tilde z
=\lambda w_{\mathrm{struct}}.
\label{eq:step6-def-wmkt-wstruct-tilde}
\end{equation}
Moreover, define the scaled aggregated exposure
\begin{equation}
\tilde w
:=A_{\mathrm{mkt}}^\top \tilde y + A_{\mathrm{struct}}^\top \tilde z
=\lambda w.
\label{eq:step6-def-w-tilde}
\end{equation}
With this notation, the key identity \eqref{eq:key-identity-motzkin} becomes
$\tilde w_{\mathrm{mkt}}=\tilde w-\tilde w_{\mathrm{struct}},$
and $\tilde w_{\mathrm{mkt}}$ is the new induced net weights (\Cref{rem:w-from-q}) :
$
\tilde w_{\mathrm{mkt}}
=
\tilde y^{\mathrm{ask}}-\tilde y^{\mathrm{bid}}.
$
By \Cref{rem:V0-execution}, for this net position we have
\begin{equation}
V_0(\tilde w_{\mathrm{mkt}})\le \pi_0(q^{\mathrm{ask}},q^{\mathrm{bid}},u)=b_{\mathrm{mkt}}^\top \tilde y.
\label{eq:step6-V0-le-bmkt}
\end{equation}
\smallskip
Furthermore, by construction of $b_{\mathrm{struct}}$ \eqref{eq:struct-block}, its only nonzero entry corresponds to the slope constraint and equals $K_1e^{-rT}$. Hence
\begin{equation}
b_{\mathrm{struct}}^\top \tilde z
=
\tilde z^{\mathrm{SP1}}K_1e^{-rT}.
\label{eq:step6-bstruct-tilde}    
\end{equation}
This term is the associated cash deposit at time~$0$, which accrues at the risk-free rate and delivers
$\tilde z^{\mathrm{SP1}}K_1$ at maturity. Adding \eqref{eq:step6-bstruct-tilde} to \eqref{eq:step6-V0-le-bmkt} finally yields
\begin{equation}
V_0(\tilde w_{\mathrm{mkt}})+\tilde z^{\mathrm{SP1}}K_1e^{-rT}
\le
b_{\mathrm{mkt}}^\top \tilde y+b_{\mathrm{struct}}^\top \tilde z.
\label{eq:step6-V0pluscash-le-dual}
\end{equation}

\medskip
\noindent\textbf{Step 6.3 (Terminal payoff).}
Step~5 yields $V_T(w_{\mathrm{mkt}})+z^{\mathrm{SP1}}K_1\ge 0$ \eqref{eq:final-payoff-motzkin}.
Scaling by $\lambda$ and using linearity of $V_T$ gives
\begin{equation}
V_T(\tilde w_{\mathrm{mkt}})+\tilde z^{\mathrm{SP1}}K_1
=
\lambda\bigl(V_T(w_{\mathrm{mkt}})+z^{\mathrm{SP1}}K_1\bigr)
\ge 0
\qquad\text{for all }S_T\ge 0.
\label{eq:step6-terminal-nonneg}
\end{equation}

\medskip
\noindent\textbf{Step 6.4 (Conclusion : strong or weak arbitrage).}
Since the primal system \eqref{eq:motzkin-primal-system} is infeasible, the Motzkin certificate satisfies
$b_{\mathrm{mkt}}^\top y+b_{\mathrm{struct}}^\top z\le 0$, with strict inequality in the first alternative of item~2 in
\Cref{cor:motzkin-positive}. Multiplying by $\lambda>0$ yields
\begin{equation}
b_{\mathrm{mkt}}^\top \tilde y+b_{\mathrm{struct}}^\top \tilde z\le 0.
\label{eq:step6-dual-nonpos}
\end{equation}
Combining \eqref{eq:step6-V0pluscash-le-dual} and \eqref{eq:step6-dual-nonpos} gives
\begin{equation}
V_0(\tilde w_{\mathrm{mkt}})+\tilde z^{\mathrm{SP1}}K_1e^{-rT}\le 0.
\label{eq:step6-V0pluscash-nonpos}
\end{equation}

\smallskip
\noindent\emph{Case 1: $b_{\mathrm{mkt}}^\top \tilde y+b_{\mathrm{struct}}^\top \tilde z<0$.}
Then \eqref{eq:step6-V0pluscash-le-dual} implies
\[
V_0(\tilde w_{\mathrm{mkt}})+\tilde z^{\mathrm{SP1}}K_1e^{-rT}<0.
\]
Define
\begin{equation}
c:=-\Bigl(V_0(\tilde w_{\mathrm{mkt}})+\tilde z^{\mathrm{SP1}}K_1e^{-rT}\Bigr)>0.
\label{eq:step6-c-def}
\end{equation}
Consider the static strategy consisting of the option/underlying position $\tilde w_{\mathrm{mkt}}$ together with
the cash position $\tilde c:=c+\tilde z^{\mathrm{SP1}}K_1e^{-rT}$. By construction, its time--$0$ value is exactly zero.
At maturity, combining \eqref{eq:step6-terminal-nonneg} with \eqref{eq:step6-c-def} yields
\[
V_T(\tilde w_{\mathrm{mkt}})+\tilde z^{\mathrm{SP1}}K_1+c e^{rT}
\ge c e^{rT}>0
\qquad\text{for all }S_T\ge 0.
\]
Hence $(\tilde w_{\mathrm{mkt}},\tilde c)$ is a strong arbitrage, precluding the no-arbitrage condition of \Cref{def:arbitrage}.

\smallskip
\noindent\emph{Case 2: $b_{\mathrm{mkt}}^\top \tilde y+b_{\mathrm{struct}}^\top \tilde z=0$.}
Then \eqref{eq:step6-V0pluscash-nonpos} gives $V_0(\tilde w_{\mathrm{mkt}})\le 0$.
If $V_0(\tilde w_{\mathrm{mkt}})<0$, then we fall back to Case~1.
Thus we may assume
\begin{equation}
V_0(\tilde w_{\mathrm{mkt}})=0.
\label{eq:step6-V0-zero}
\end{equation}
Plugging \eqref{eq:step6-V0-zero} into \eqref{eq:step6-V0pluscash-nonpos} yields $\tilde z^{\mathrm{SP1}}K_1e^{-rT}\le 0$.
Since $\tilde z^{\mathrm{SP1}} \ge 0$, this implies $\tilde z^{\mathrm{SP1}}=0.$
Therefore \eqref{eq:step6-terminal-nonneg} reduces to $V_T(\tilde w_{\mathrm{mkt}})\ge 0 \label{eq:step6-terminal-nonneg-no-cash}$.\\

We now prove that $V_T(\tilde w_{\mathrm{mkt}})$ is not identically zero.
By \Cref{cor:motzkin-positive}, the infeasibility certificate cannot be trivial, i.e.
$
(\tilde z,\tilde w)\neq (0,0),
$
and in any case Step~5 yields $\tilde w\ge 0$.\\

\smallskip
If $\tilde w\neq 0$, then there exists $j$ such that $\tilde w_{K_j}>0$. Hence
\[
V_T(\tilde w)(S_T)
=\sum_{i=0}^N \tilde w_{K_i}(S_T-K_i)_+
\ge \tilde w_{K_j}(S_T-K_j)_+,
\]
so $V_T(\tilde w)(S_T)>0$ for all $S_T>K_j$. Using \eqref{eq:step6-def-wmkt-wstruct-tilde} and the linearity of $V_T$,
\[
V_T(-\tilde w_{\mathrm{struct}})
=
\lambda V_T(-w_{\mathrm{struct}})
\ge 0,
\]
where the inequality follows from \eqref{eq:struct-payoff-nonneg-motzkin}. Thus, we conclude that 
$V_T(\tilde w_{\mathrm{mkt}})=V_T(\tilde w)+V_T(-\tilde w_{\mathrm{struct}})>0$ is strictly positive on the interval $I:=(K_j,\infty)$.\\

\smallskip
If instead $\tilde w=0$ but $\tilde z\neq 0$, then $\tilde w_{\mathrm{mkt}}=-\tilde w_{\mathrm{struct}}$.
In addition, Step~4 (with $\tilde z^{\mathrm{SP1}}=0$) yields the following decomposition
\begin{equation}
-\,\tilde w_{\mathrm{struct}}
=
\sum_{i=1}^{N} \tilde z^{\mathrm{VS}}_{K_{i-1},K_i}\,\mathrm{VS}_{i-1,i}
+\sum_{i=1}^{N-1} \tilde z^{\mathrm{BF}}_{K_{i-1},K_i,K_{i+1}}\,\mathrm{BF}_{i-1,i,i+1}.
\label{eq:step6-wstruct-decomp-no-cash}
\end{equation}
where the coefficients are nonnegative and not all zero. For each elementary payoff $g\in\{V_T(\mathrm{VS}_{i-1,i}),\,V_T(\mathrm{BF}_{i-1,i,i+1})\}$ we have $g(S_T)\ge 0$ for all $S_T\ge 0$
and $g\not\equiv 0$. More precisely, pick an index with a strictly positive coefficient in \eqref{eq:step6-wstruct-decomp-no-cash}. For a vertical spread $\mathrm{VS}_{i-1,i}$, we have
$V_T(\mathrm{VS}_{i-1,i})(S_T)=K_i-K_{i-1}>0$ for all $S_T\ge K_i$. For a butterfly $\mathrm{BF}_{i-1,i,i+1}$, we have $V_T(\mathrm{BF}_{i-1,i,i+1})(S_T)>0$ for all $S_T\in(K_{i-1},K_{i+1})$.\\
In either case, there exists a nonempty open interval $I\subset \R_+$ on which the corresponding payoff is strictly positive.
Since all coefficients in \eqref{eq:step6-wstruct-decomp-no-cash} are nonnegative, it follows that
\begin{equation}
V_T(\tilde w_{\mathrm{mkt}})(S_T)>0
\qquad\text{for all }S_T\in I.
\label{eq:step6-positive-on-interval}
\end{equation}

\smallskip
So in both cases,
\Cref{eq:step6-positive-on-interval}
holds with $I$  a nonempty open interval of $R_+$.
Let $\mathbb{P}$ be any probability measure on $\R_+$ with a density (with respect to Lebesgue measure) that is strictly positive on $I$. Then we obtain
\[
V_T(\tilde w_{\mathrm{mkt}})\ge 0 \quad \mathbb{P}\text{-a.s.},
\qquad\text{and}\qquad
\mathbb{P}\bigl(V_T(\tilde w_{\mathrm{mkt}})>0\bigr)\ge \mathbb{P}(I)>0.
\]
Together with \eqref{eq:step6-V0-zero}, this shows that $(\tilde w_{\mathrm{mkt}},0)$ is a weak arbitrage with respect to such an absolutely continuous $\mathbb{P}$, contradicting item~\ref{item:non_arb}.

\medskip
\noindent\textbf{Conclusion.}
Both cases contradict no-arbitrage. Hence \eqref{eq:motzkin-primal-system} is feasible, and an admissible call price vector exists.

\end{proof}
\subsubsection{Replication of Call-Price Vector and Uniqueness Given the Support}
\label{appendix:proof_boundary_shift}
\begin{proof}[Proof of Proposition \ref{prop:boundary-shift}]
The proof is carried out in four steps. 
\paragraph{Step 1 (reproduction of call prices at \(K_1,\dots,K_N\)).}
Fix \(i\in\{1,\dots,N\}\).
Since \(K'<K_1\le K_i\), we have \((K'-K_i)_+=(K_1-K_i)_+=\cdots=(K_i-K_i)_+=0\), hence
\begin{equation}
\E^\nu[(S_T-K_i)_+]
=
\sum_{j=i+1}^{N+1} (K_j-K_i)\,q'_j.
\label{eq:nu-call-sum}
\end{equation}
Using \eqref{eq:qprime-upper} and \(q'_{N+1}=Q_N\), we obtain
\begin{align}
\E^\nu[(S_T-K_i)_+]
&= \sum_{j=i+1}^{N} (K_j-K_i)(Q_{j-1}-Q_j) + (K_{N+1}-K_i)Q_N \nonumber\\
&= \sum_{j=i+1}^{N} (K_j-K_i)Q_{j-1}
   -\sum_{j=i+1}^{N} (K_j-K_i)Q_j
   + (K_{N+1}-K_i)Q_N.
\end{align}
Re-indexing the first sum yields
\begin{align}
\E^\nu[(S_T-K_i)_+]
&= \sum_{j=i}^{N-1} (K_{j+1}-K_i)Q_j
   -\sum_{j=i+1}^{N} (K_j-K_i)Q_j
   + (K_{N+1}-K_i)Q_N \nonumber\\
&= (K_{i+1}-K_i)Q_i
   + \sum_{j=i+1}^{N-1} (K_{j+1}-K_j)Q_j
   + (K_{N+1}-K_N)Q_N.
\label{eq:nu-call-telescope}
\end{align}
By definition \eqref{eq:Qi-def}, for \(j=0,\dots,N-1\),
\(
(K_{j+1}-K_j)Q_j=e^{rT}(C_j-C_{j+1})
\),
and by \eqref{eq:CNplus1-def},
\(
(K_{N+1}-K_N)Q_N=e^{rT}C_N
\).
Plugging these identities into \eqref{eq:nu-call-telescope} yields
\[
\E^\nu[(S_T-K_i)_+]
=
e^{rT}\Big[(C_i-C_{i+1})+\sum_{j=i+1}^{N-1}(C_j-C_{j+1})+C_N\Big]
=
e^{rT}C_i,
\]
hence \(e^{-rT}\E^\nu[(S_T-K_i)_+]=C_i\) for all \(i=1,\dots,N\).

\paragraph{Step 2 (reproduction at \(K_0\)).}
Recall that \(K_0=0\) and \(C_0=S_0e^{-qT}\) in \Cref{def:admissible-call-vector}. Then \((S_T-K_0)_+=S_T\), so
\begin{equation}
\E^\nu[(S_T-K_0)_+]
=
\E^\nu[S_T]
=
q'_0K' + \sum_{j=1}^{N+1} K_j q'_j.
\label{eq:nu-forward-start}
\end{equation}
By definition of \((q'_0,q'_1)\) in \eqref{eq:qprime-lower}, we have
\begin{equation}
q'_0+q'_1=m,
\qquad
q'_0K' + q'_1K_1 = m\bar K.
\label{eq:leftblock-mass-moment}
\end{equation}
Moreover, multiplying \eqref{eq:Kbary} by \(m=1-Q_1\) yields the identity
\begin{equation}
m\bar K=(1-Q_0)K_0+(Q_0-Q_1)K_1.
\label{eq:mKbary-expand}
\end{equation}
Combining \eqref{eq:nu-forward-start}--\eqref{eq:mKbary-expand} with \eqref{eq:qprime-upper} gives
\begin{align}
\E^\nu[S_T]
&=
(1-Q_0)K_0
+\sum_{j=1}^{N} K_j (Q_{j-1}-Q_j)
+K_{N+1}Q_N.
\label{eq:nu-forward-canonical-form}
\end{align}
A straightforward telescoping computation rewrites the right-hand side as
\begin{align}
(1-Q_0)K_0
+\sum_{j=1}^{N} K_j (Q_{j-1}-Q_j)
+K_{N+1}Q_N
&=
K_0+\sum_{i=0}^{N} (K_{i+1}-K_i)Q_i.
\label{eq:nu-forward-telescope}
\end{align}
Finally, by \eqref{eq:Qi-def} and \(C_{N+1}=0\),
\[
\sum_{i=0}^{N} (K_{i+1}-K_i)Q_i
=
e^{rT}\sum_{i=0}^{N} (C_i-C_{i+1})
=
e^{rT}C_0.
\]
Since \(K_0=0\), \eqref{eq:nu-forward-telescope} implies \(\E^\nu[S_T]=e^{rT}C_0\), i.e.
\(e^{-rT}\E^\nu[(S_T-K_0)_+]=C_0\), completing \eqref{eq:nu-matches-C}.

\paragraph{Step 3 (probability condition).}
We first recall that \Cref{def:admissible-call-vector} implies 
\begin{equation}
0<Q_N<\cdots<Q_1<Q_0<1.
\label{eq:Qi-in-01}
\end{equation}

\emph{Positivity.}
For \(j=2,\dots,N\), \eqref{eq:Qi-in-01} gives \(Q_{j-1}>Q_j\), hence \(q'_j=Q_{j-1}-Q_j>0\).
Moreover, \(q'_{N+1}=Q_N>0\).
Next, since \(Q_1<1\), we have \(m=1-Q_1>0\), and since \(\bar K\) is a convex combination of \(K_0\) and \(K_1\), it follows that \(K_1-\bar K>0\).
Thus, for any \(K'\in(K_0,K_1)\), we therefore have
\[
q'_0
=
m\,\frac{K_1-\bar K}{K_1-K'}
>0.
\]
Finally,
\[
q'_1
=
m\,\frac{\bar K-K'}{K_1-K'}
\ge 0
\quad\Longleftrightarrow\quad
K'\le \bar K,
\]
and the strict condition \(K'<\bar K\) yields \(q'_1>0\).

\emph{Unit mass.}
By telescoping,
\[
\sum_{j=2}^{N+1} q'_j
=
\sum_{j=2}^{N} (Q_{j-1}-Q_j) + Q_N
=
Q_1,
\]
and by construction \(q'_0+q'_1=m=1-Q_1\).
Hence \(\sum_{j=0}^{N+1}q'_j=1\).

\paragraph{Step 4 (Uniqueness given the support).}
Fix \(\mathcal K'=\{K',K_1,\dots,K_{N+1}\}\).
Consider the linear system in the unknown \(q'\in\R^{N+2}\) formed by:
(i) the probability constraint \(\sum_{j=0}^{N+1}q'_j=1\);
(ii) the \(N+1\) pricing constraints \eqref{eq:nu-matches-C} for \(i=0,\dots,N\).
This yields a square system \(Aq'=b\) of dimension \((N+2)\times(N+2)\).
A direct computation
shows that
\begin{equation}
\det(A)=(K_1-K')\prod_{i=1}^{N}(K_{i+1}-K_i)\neq 0,
\end{equation}
since the strikes are distinct.
Therefore the solution \(q'\) is unique, and so is the corresponding probability measure \(\nu\) supported on \(\mathcal K'\).
\end{proof}
\subsection{Motzkin's Theorem of the Alternative}
\label{appendix:motzkin}
\begin{theorem}[Motzkin's theorem of the alternative, \cite{motzkin1936beitrage, abhyankar2002encyclopaedia}]
\label{thm:motzkin}
Let $A\in\R^{m\times n}$, $B\in\R^{\ell\times n}$, $b\in\R^m$, and $c\in\R^\ell$.
The following two statements are equivalent:
\begin{enumerate}
\item[\textbf{(P)}] The system $Ax\le b$ and $Bx<c$ admits a solution $x\in\R^n$.
\item[\textbf{(D)}] For all $y\in\R^m_+$ and $z\in\R^\ell_+$,
\[
A^\top y + B^\top z = 0 \ \Longrightarrow\ b^\top y + c^\top z \ge 0,
\]
and
\[
A^\top y + B^\top z = 0,\ z\neq 0 \ \Longrightarrow\ b^\top y + c^\top z > 0.
\]
\end{enumerate}
\end{theorem}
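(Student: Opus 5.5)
The plan is to prove the two implications separately. The direction (P)$\Rightarrow$(D) is a direct weak-duality computation. The direction (D)$\Rightarrow$(P) goes by homogenizing the mixed strict/weak system into a bounded linear program and applying LP strong duality (equivalently, Farkas' lemma).

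\emph{(P)$\Rightarrow$(D).} Let $x$ solve $Ax\le b$, $Bx<c$. Take $y\in\R^m_+$ and $z\in\R^\ell_+$ with $A^\top y+B^\top z=0$. Then
\[
0=x^\top(A^\top y+B^\top z)=y^\top Ax+z^\top Bx\le b^\top y+c^\top z .
\]
The inequality uses $y\ge 0$ and $z\ge 0$. If moreover $z\neq 0$, some $z_k>0$ multiplies a strict inequality $(Bx)_k<c_k$, so $z^\top Bx<z^\top c$. The inequality above is then strict, which gives both parts of (D).

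\emph{(D)$\Rightarrow$(P).} Introduce auxiliary scalars $t,s\in\R$ and consider the linear program
\[
\max_{x,t,s}\ s \quad\text{s.t.}\quad Ax-bt\le 0,\qquad Bx-ct+s\mathbbm{1}\le 0,\qquad s-t\le 0,\qquad s\le 1 .
\]
The point $(x,t,s)=0$ is feasible, and the objective is bounded above by $1$. Hence the LP has a finite optimum $s^*\in[0,1]$, and strong duality applies.

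If $s^*>0$, the constraint $s-t\le 0$ gives $t\ge s^*>0$. Setting $\hat x:=x/t$ yields $A\hat x\le b$ and $B\hat x\le c-(s^*/t)\mathbbm{1}<c$, so (P) holds.

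It therefore suffices to show that $s^*=0$ contradicts (D). Attach multipliers $y\ge 0$, $z\ge 0$, $\mu\ge 0$, $\nu\ge 0$ to the four constraint blocks. Since $x,t,s$ are free, the dual problem is
\[
\min\ \nu\quad\text{s.t.}\quad A^\top y+B^\top z=0,\qquad b^\top y+c^\top z+\mu=0,\qquad \mathbbm{1}^\top z+\mu+\nu=1 .
\]
Strong duality gives dual optimum $\nu=0$. Hence $\mathbbm{1}^\top z+\mu=1$ and $b^\top y+c^\top z=-\mu$.

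If $\mu>0$, then $b^\top y+c^\top z<0$ with $A^\top y+B^\top z=0$, contradicting the first implication in (D). If $\mu=0$, then $\mathbbm{1}^\top z=1$, so $z\neq 0$, while $b^\top y+c^\top z=0$; this contradicts the second implication in (D). So $s^*>0$, and the proof is complete.

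The main obstacle is the bookkeeping in the dual of the homogenized program: the sign conventions for the free variables $t,s$, and the normalization $s\le 1$. That normalization is what makes the LP bounded, so strong duality is legitimate and not merely weak duality. Beyond that, the case split on $\mu$ is what matches the two clauses of (D), and it is the step I would check most carefully.
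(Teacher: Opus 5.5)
Your proof is correct. The weak-duality computation gives (P)$\Rightarrow$(D). In the homogenized program $x,t,s$ are free, so their dual coefficients $-A^\top y-B^\top z$, $b^\top y+c^\top z+\mu$ and $1-\mathbf{1}^\top z-\mu-\nu$ must vanish, and your dual has exactly these constraints. The split on $\mu$ matches the two clauses of (D), and degenerate cases such as $\ell=0$ are covered, since $\mu=1$ is then forced.

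The paper does not prove this statement; it quotes it from Motzkin (1936) and the Encyclopaedia and uses it as a black box. So your argument is not a different route to a paper proof but a self-contained replacement for the citation, resting only on LP strong duality. The normalizations $s\le t$ and $s\le 1$ give boundedness and guarantee $t>0$ for dehomogenization.

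It also connects directly to how the paper uses the result. Replace $B$ by the stacked matrix $[B;-I_n]$ and $c$ by $(c,0)$ to encode $x>0$. The extra multiplier $\zeta\ge 0$ then equals $A^\top y+B^\top z$, and the condition $(z,\zeta)\neq 0$ splits into $z\neq 0$ or $A^\top y+B^\top z\neq 0$. This reproduces the three alternatives of \Cref{cor:motzkin-positive}. Your dichotomy $\mu>0$ versus $\mu=0$ is the same as the strong-versus-weak arbitrage split (Cases 1 and 2 of Step 6.4) in the proof of \Cref{prop:existence-admissible-C}.
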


\subsection{Continuous Framework}
\label{subsec:cont_framework}
\subsubsection{Existence and Uniqueness of the Minimizer}
\label{subsec:exist-unique-cont-disc}
The following result ensures that $\mathcal{A}$ is non-empty, even when the bid--ask spread is null.
\begin{lemma}[Smooth $H^{1}$ density matching the constraints]
\label{lemma:H1-density-matching-calls}
Let $\nu$ be the discrete risk-neutral measure in \eqref{eq:nu} supported on
$\mathcal K'=\{K',K_1,\dots,K_{N+1}\}$, reproducing the call prices $C_i$ for
$i=1,\dots,N$ and preserving the forward price $F_0^T$.
Then there exists a function $f\in C_c^\infty(\mathcal{X})\subset H^{1}(\mathcal{X})$
such that
\[
f\ge 0\ ,\qquad
\int_{\mathcal{X}} f(s)\,ds = 1,\qquad
\int_{\mathcal{X}} s\,f(s)\,ds = F_0^T,
\]
and
\[
e^{-rT}\int_{\mathcal{X}} (s-K_i)_+\,f(s)\,ds = C_i,
\qquad i=1,\dots,N.\]
\end{lemma}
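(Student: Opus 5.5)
The plan is to smooth the atoms of $\nu$ with a small symmetric mollifier and then remove the resulting $O(h)$ pricing errors with a small correction supported where the smoothed density is already positive. Since $0<K'<K_1<\dots<K_{N+1}<b$, there is $h_0>0$ such that the intervals $[x-2h_0,x+2h_0]$, $x\in\mathcal K'$, are pairwise disjoint and contained in $\mathcal X$.

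\emph{Step 1 (smoothing).} Fix an even $\rho\in C_c^\infty((-1,1))$ with $\rho\ge 0$, $\int\rho=1$, and $\rho>0$ on $(-1,1)$. For $h<h_0$ set $\rho_h(z)=h^{-1}\rho(z/h)$ and
\[
f_h:=q'_0\,\rho_h(\cdot-K')+\sum_{j=1}^{N+1}q'_j\,\rho_h(\cdot-K_j).
\]
This $f_h$ is nonnegative, lies in $C_c^\infty(\mathcal X)$, and has unit mass. Because $\rho$ is even, $f_h$ has mean $F_0^T$: by \Cref{prop:boundary-shift}, $\nu$ reproduces $C_0=e^{-rT}F_0^T$.

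Fix $i\in\{1,\dots,N\}$. The payoff $(s-K_i)_+$ is affine on the support of every bump except the one centred at $K_i$, so those bumps give exactly their discrete contribution. The bump at $K_i$ contributes $q'_i\,h\,c_\rho$ instead of $0$, where $c_\rho:=\int z_+\rho(z)\,dz>0$. Hence the error at strike $K_i$ is exactly
\[
e^{-rT}\int_{\mathcal X}(s-K_i)_+\,f_h(s)\,ds-C_i=e^{-rT}q'_i\,c_\rho\,h,
\]
which is explicit and of order $h$.

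\emph{Step 2 (correction).} I will look for
\[
f:=f_h+\sum_{k=0}^{N+1}a_k\,\psi_{k,h},
\]
where each $\psi_{k,h}$ is a smooth signed function supported inside the bump around the $k$-th point of $\mathcal K'$. The coefficients must satisfy $N+2$ linear equations: zero total mass, zero first moment, and call corrections $-e^{-rT}q'_i c_\rho h$ for $i=1,\dots,N$. The natural candidates are
\[
\psi_{k,h}=h^{-1}\varphi\bigl((\cdot-x_k)/h\bigr),
\]
with $\varphi$ a fixed profile supported strictly to the right of $0$ inside $(-1,1)$, e.g.\ $\varphi(z)=\rho(2z-1)$ minus a small even correction. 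The hope is that the moment and call functionals evaluated on the $\psi_{k,h}$ form, to leading order, a matrix of the same triangular type as the matrix $A$ in Step~4 of the proof of \Cref{prop:boundary-shift}. That matrix has determinant $(K_1-K')\prod(K_{i+1}-K_i)\neq 0$, so invertibility should follow, with the right-hand side of order $h$.

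A cleaner fallback, which I would try first, avoids perturbation theory altogether. Apply Step~1 to a perturbed discrete measure $\nu^\eta$ on $\mathcal K'$ that reproduces the shifted prices $C_i-e^{-rT}q'_i c_\rho h$ for $i\ge 1$ and leaves $C_0$ unchanged. By the invertibility in \Cref{prop:boundary-shift}, $\nu^\eta$ exists and depends linearly on the shift. Its weights are within $O(h)$ of $q'$, hence remain strictly positive (and $K'<\bar K$ persists) for $h$ small. The error computed in Step~1 for $\nu^\eta$ is $q'^{\eta}_i c_\rho h$ rather than $q'_i c_\rho h$, so I would close the loop with a fixed-point or implicit-function argument in $h$. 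The map involved is affine with contraction constant $O(h)$.

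\emph{Main obstacle.} The delicate step is keeping $f\ge 0$ while matching all $N+2$ constraints exactly. With the fixed-point version this reduces to showing that the $O(h)$ perturbation of the weights keeps every $q'^{\eta}_j>0$. That holds because all $q'_j>0$ under the strict inequalities of \Cref{def:admissible-call-vector} and $K'<\bar K$, as in Step~3 of the proof of \Cref{prop:boundary-shift}. The resulting $f=f_h^{\eta}$ is a nonnegative finite combination of mollifiers, so $f\in C_c^\infty(\mathcal X)\subset H^1(\mathcal X)$, which completes the construction.
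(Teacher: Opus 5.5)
Your fallback argument is correct and is essentially the paper's proof. The paper also places mollified bumps at the points of $\mathcal K'$ with free weights $a$. It notes that the resulting $(N+2)\times(N+2)$ mass/forward/call matrix $A_\varepsilon$ converges to the invertible triangular matrix $A_0$ of \Cref{prop:boundary-shift}, and concludes that $a(\varepsilon)=A_\varepsilon^{-1}b\to q'$ remains strictly positive for small $\varepsilon$. Your even mollifier just makes the perturbation explicit: $A_h=A_0+h\,c_\rho E$, with exact mass and forward rows, and $E$ having unit entries only where the call at $K_i$ meets the bump at $K_i$. Your contraction step is therefore the same as the paper's appeal to continuity of matrix inversion, and the speculative $\psi_{k,h}$ correction route is not needed.
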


\begin{proof}
We establish the result through the following four steps.
\paragraph{Step 1: localized smooth atoms.}
Fix $\varphi\in C_c^\infty(-1,1)$ with $\varphi\ge 0$ and $\int_{\mathbb{R}}\varphi=1$,
and define $\varphi_\varepsilon(u):=\varepsilon^{-1}\varphi(u/\varepsilon)$.
Let $(x_0,\dots,x_{N+1}):=(K',K_1,\dots,K_{N+1})$ and set
\[
d:=\min\Bigl\{x_0,\ b-x_{N+1},\ \min_{0\le j\le N}(x_{j+1}-x_j)\Bigr\}>0,
\qquad
\bar\varepsilon:=\frac{d}{4}.
\]
For $\varepsilon\in(0,\bar\varepsilon)$ and $j=0,\dots,N+1$, define
\[
g_{j,\varepsilon}(s):=\varphi_\varepsilon(s-x_j).
\]
Then $g_{j,\varepsilon}\in C_c^\infty(\mathcal X)$, $g_{j,\varepsilon}\ge 0$,
$\int_{\mathcal X} g_{j,\varepsilon}=1$, and the supports of
$(g_{j,\varepsilon})$
are pairwise disjoint and contained in $\mathcal X$.

\paragraph{Step 2: finite-dimensional parameterization.}
For $a=(a_0,\dots,a_{N+1})^\top\in\mathbb{R}^{N+2}$, define
\[
f_{\varepsilon,a}(s):=\sum_{j=0}^{N+1} a_j\, g_{j,\varepsilon}(s)\in C_c^\infty(\mathcal X).
\]
Clearly $f_{\varepsilon,a}\ge 0$ a.e.\ whenever $a_j\ge 0$ for all $j$.
Introduce the moments
\[
L^{(\varepsilon)}_j:=\int_{\mathcal{X}} s\,g_{j,\varepsilon}(s)\,ds,
\qquad
M^{(\varepsilon)}_{i,j}:=\int_{\mathcal{X}} (s-K_i)_+\,g_{j,\varepsilon}(s)\,ds,
\]
for $j=0,\dots,N+1$ and $i=1,\dots,N$. The constraints
\[
\int_{\mathcal{X}} f_{\varepsilon,a}=1,\qquad
\int_{\mathcal{X}} s f_{\varepsilon,a}(s)\,ds = F_0^T,\qquad
\int_{\mathcal{X}} (s-K_i)_+ f_{\varepsilon,a}(s)\,ds = e^{rT}C_i
\ (i=1,\dots,N)
\]
are equivalent to the linear system
\[
A_\varepsilon\, a=b,
\]
with
\[
A_\varepsilon=
\begin{pmatrix}
1 & \cdots & 1\\
L^{(\varepsilon)}_0 & \cdots & L^{(\varepsilon)}_{N+1}\\
M^{(\varepsilon)}_{1,0} & \cdots & M^{(\varepsilon)}_{1,N+1}\\
\vdots & & \vdots\\
M^{(\varepsilon)}_{N,0} & \cdots & M^{(\varepsilon)}_{N,N+1}
\end{pmatrix},\qquad
b=(1,F_0^T,e^{rT}C_1,\dots,e^{rT}C_N)^\top.
\]

\paragraph{Step 3: identification of the limit matrix and invertibility.}
Because $g_{j,\varepsilon}$ is a mollifier centered at $x_j$,
we have for each fixed $(i,j)$,
\[
L^{(\varepsilon)}_j \to x_j,
\qquad
M^{(\varepsilon)}_{i,j}\to (x_j-K_i)_+.
\]
Hence $A_\varepsilon\to A_0$ entrywise (thus also in any matrix norm), where
\[
A_{0}:=
\begin{pmatrix}
1 & 1 & 1 & \cdots & 1\\
x_{0} & x_{1} & x_{2} & \cdots & x_{N+1}\\
0 & 0 & x_{2}-K_{1} & \cdots & x_{N+1}-K_{1}\\
0 & 0 & 0 & \ddots & \vdots\\
\vdots & \vdots & \vdots & \ddots & x_{N+1}-K_{N}
\end{pmatrix}.
\]
The matrix $A_0$ is block upper-triangular, with a $2\times 2$ leading block
$\begin{psmallmatrix}1&1\\ x_0&x_1\end{psmallmatrix}$ and an upper-triangular
$(N\times N)$ trailing block whose diagonal entries are $(x_{i+1}-K_i)_{i=1}^N$.
A direct Laplace expansion yields
\[
\det(A_0)=\prod_{j=0}^{N}(x_{j+1}-x_j)>0,
\]
since $x_0<x_1<\cdots<x_{N+1}$ and $x_i=K_i$ for $i=1,\dots,N+1$. Therefore $A_0$ is invertible.

By construction of $\nu$, its weight vector
$q'=(q'_0,\dots,q'_{N+1})^\top$ satisfies the same constraints on $\mathcal K'$,
hence
\[
A_0 q' = b.
\]

\paragraph{Step 4: stability of the inverse and positivity.}
Since $A_\varepsilon\to A_0$ and $\det(A_0)\neq 0$, continuity of the determinant
implies $\det(A_\varepsilon)\neq 0$ for all $\varepsilon$ small enough.
Thus $A_\varepsilon$ is invertible and the unique solution is
\[
a(\varepsilon):=A_\varepsilon^{-1}b.
\]
Moreover, the inverse map is continuous on $\mathrm{GL}({N+2},\mathbb{R})$, so
\[
a(\varepsilon)\to A_0^{-1}b = q'.
\]
Since $q'_j>0$ for all $j$, for $\varepsilon$ small
enough one has $a_j(\varepsilon)>0$ for all $j$.

\paragraph{Conclusion.}
Take $\varepsilon>0$ sufficiently small so that $A_\varepsilon$ is invertible and the unique solution
$a_\varepsilon:=A_\varepsilon^{-1}b$ is strictly positive componentwise.
Defining $f:=f_{\varepsilon,a_\varepsilon}=\sum_{j=0}^{N+1}(a_\varepsilon)_j\,g_{j,\varepsilon}$, we obtain a function
$f\in C_c^\infty(\mathcal X)\subset H^1(\mathcal X)$ with $f\ge 0$.
Finally, since $A_\varepsilon a_\varepsilon=b$ encodes exactly the mass constraint, the forward constraint, and the $N$
call constraints, $f$ satisfies all required equalities, which concludes the proof.
\end{proof}

We now show that $H$ is well-posed on $\mathcal A$ and admits a unique minimizer.
\begin{theorem}
\label{thm:exist-unique-continuous-minimizer}
The functional $H$ admits a unique minimizer $f^\ast$ over $\mathcal{A}$.
\end{theorem}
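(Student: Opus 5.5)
The proof is the direct method of the calculus of variations on $\mathcal A$, followed by strict convexity for uniqueness.

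\emph{Non-emptiness and a finite infimum.} By \Cref{lemma:H1-density-matching-calls}, applied to an admissible call price vector from \Cref{prop:existence-admissible-C} and the measure $\nu$ of \Cref{def:discrete-rn-measure}, there is a smooth compactly supported $f_0$ whose call prices lie in the bid--ask intervals. Hence $f_0\in\mathcal A$ and $H(f_0)<\infty$.

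To bound $H$ from below, note that $\mathcal X=(0,b)$ is bounded and $f$ is a probability density. Jensen's inequality then gives $\int f\ln f\ge -\ln b$. Therefore $H(f)\ge \lambda_1\|f'\|_{L^2}^2-\lambda_2\ln b$, and $m:=\inf_{\mathcal A}H$ is finite.

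\emph{A bounded, compact minimizing sequence.} Take a minimizing sequence $(f_n)\subset\mathcal A$. The lower bound shows $\|f_n'\|_{L^2}^2\le (H(f_n)+\lambda_2\ln b)/\lambda_1$, which is bounded. To control $\|f_n\|_{L^2}$, I would use Poincaré--Wirtinger: $\|f_n-\bar f_n\|_{L^2}\le C\|f_n'\|_{L^2}$ with mean $\bar f_n=1/b$. So $(f_n)$ is bounded in $H^1(\mathcal X)$.

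By Rellich--Kondrachov in one dimension, $H^1(0,b)$ embeds compactly into $C([0,b])$. A subsequence therefore converges weakly in $H^1$ and uniformly to some $f^\ast$.

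\emph{Admissibility of the limit.} Uniform convergence on a bounded interval preserves the constraints $f\ge0$, $\int f=1$ and $\int xf=F_0^T$. It also preserves each closed condition $e^{-rT}\int(x-K_i)_+f\in[C_i^{\mathrm{bid}},C_i^{\mathrm{ask}}]$, since the test functions are bounded on $\mathcal X$. Hence $f^\ast\in\mathcal A$.

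\emph{Lower semicontinuity.} The term $\|f'\|_{L^2}^2$ is weakly lower semicontinuous because it is a norm squared. The map $t\mapsto t\ln t$ is continuous on $[0,\sup_n\|f_n\|_\infty]$, which is a bounded range by uniform convergence. So $\int f_n\ln f_n\to\int f^\ast\ln f^\ast$ by dominated convergence. Together these give $H(f^\ast)\le\liminf H(f_n)=m$, so $f^\ast$ is a minimizer.

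\emph{Uniqueness.} This is the continuous analogue of \Cref{prop:HM-strict-convex}. The set $\mathcal A$ is convex, being an intersection of a cone with affine and closed half-space constraints. The entropy term $f\mapsto\int f\ln f$ is strictly convex: $t\mapsto t\ln t$ is strictly convex on $[0,\infty)$, so if $f\ne g$ on a set of positive measure the inequality is strict there. The gradient term is convex via the identity \eqref{eq:scalar-strict-convex}.

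Thus if $f_1\neq f_2$ were two minimizers, the point $(f_1+f_2)/2\in\mathcal A$ would give $H<m$, a contradiction.

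\emph{Main obstacle.} The delicate step is passing to the limit in the entropy term and in the constraints. Uniform convergence from the one-dimensional Sobolev embedding is what makes this routine, so I would secure the $H^1$ bound (the Poincaré step) carefully. As a fallback, if only $L^2$ convergence were available, I would use Fatou's lemma together with the lower bound $t\ln t\ge -1/e$.
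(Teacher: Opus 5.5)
Your proposal is correct and follows the same route as the paper. Existence comes from a minimizing sequence bounded in $H^1(\mathcal X)$, compactness giving weak $H^1$ and uniform convergence, closedness of the constraints, and lower semicontinuity of $H$; uniqueness comes from strict convexity on the convex set $\mathcal A$. The differences are only in the tools: Jensen against the uniform density on $(0,b)$ instead of the exponential maximum-entropy bound, Poincar\'e--Wirtinger instead of Gagliardo--Nirenberg, and dominated convergence instead of Fatou. You also make the non-emptiness of $\mathcal A$ explicit, which the paper leaves to \Cref{lemma:H1-density-matching-calls}.
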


\begin{proof}
We proceed in four steps.
\noindent
\paragraph{Step 1: existence of a minimizing sequence and bounds in $H^1(\mathcal{X})$.}
Let $(f_k)_{k\geq 1}$ be a minimizing sequence in $\mathcal{A}$, i.e.
\begin{equation}
  H(f_k) \longrightarrow \inf_{f\in \mathcal{A}} H(f)
  \quad \text{as } k\to\infty.
  \label{eq:minimizing-sequence}
\end{equation}
Since $(f_k)$ is minimizing, there exists $C>0$ such that
\begin{equation}
  H(f_k) \leq C, \qquad \forall k\geq 1.
  \label{eq:H-bounded}
\end{equation}

We first derive a uniform upper bound on the entropy term $S(f_k)$.
By construction, each $f_k\in \mathcal{A}$ is a probability density on $\mathcal{X}$ such
that
\begin{equation}
  \int_\mathcal{X} x f_k(x)\,dx = F_0^T.
  \label{eq:forward-constraint}
\end{equation}
It is a classical result in information theory that the (Shannon) entropy
of a probability density on $\R_+$ with prescribed mean $F_0^T$ is
maximized by the exponential density with parameter $1/F_0^T$, and the
maximal entropy is equal to $1 + \ln(F_0^T)$. Since our densities are supported in $(0,b)\subset
\R_+$ and satisfy the same mean constraint, we obtain the uniform bound
\begin{equation}
  S(f_k) \leq 1 + \ln(F_0^T), \qquad \forall k\geq 1. \footnote{Recall that the Kullback--Leibler divergence is always non-negative : for any densities $f$ and $g$,
$\int_{0}^{+\infty} f(x) \ln\!\left( \frac{f(x)}{g(x)} \right)\,dx \ge 0$. 
Choose $g$ to be the density of an exponential distribution $\mathcal{E}(1/F_0^{T})$, which implies
$S(f_k) \le -\int_{0}^{+\infty} f_k(x)\ln g(x)\,dx$. 
Now, on $\mathcal{X}$ we have
\begin{align*}
-\int_{0}^{b} f_k(x)\ln g(x)\,dx
&= - \int_{0}^{b} f_k(x)
      \left(
        -\ln(F_0^{T}) - \frac{x}{F_0^{T}}
      \right)
      dx \\
&= \left( \int_{0}^{b} f_k(x)\,dx \right)\ln(F_0^{T})
   + \frac{1}{F_0^{T}} \int_{0}^{b} x f_k(x)\,dx .
\end{align*}

Since each $f_k$ is a probability density on $\mathcal{X}$ with
expectation $F_0^{T}$, we obtain
\[
S(f_k) \le 1 + \ln(F_0^{T}).
\]}
  \label{eq:entropy-upper-bound}
\end{equation}

From the definition \eqref{eq:def-H} we have
\begin{equation}
  H(f_k)
  = \lambda_1 \lVert f_k'\rVert_{L^2(\mathcal{X})}^2 - \lambda_2 S(f_k),
  \qquad \forall k\geq 1,
  \label{eq:H-explicit}
\end{equation}
so that
\begin{equation}
  \lVert f_k'\rVert_{L^2(\mathcal{X})}^2
  = \frac{1}{\lambda_1}\bigl( H(f_k) + \lambda_2 S(f_k)\bigr)
  \leq \frac{1}{\lambda_1}
       \bigl( C + \lambda_2(1 + \ln(F_0^T))\bigr),
  \label{eq:bound-derivative}
\end{equation}
for all $k\geq 1$, where we used \eqref{eq:H-bounded} and
\eqref{eq:entropy-upper-bound}. Hence $(f_k')$ is bounded in
$L^2(\mathcal{X})$.

Next we use the Gagliardo--Nirenberg inequality in bounded domains 
(\cite{Nirenberg1959}) to obtain an $L^2$ bound for $(f_k)$.
In dimension~1, there exists a constant $\widehat{C}>0$ such that for all
$u\in H^1(\mathcal{X})$,
\begin{equation}
  \lVert u\rVert_{L^2(\mathcal{X})}
  \leq \widehat{C}
    \lVert u'\rVert_{L^2(\mathcal{X})}^{1/3}
    \lVert u\rVert_{L^1(\mathcal{X})}^{2/3}
  + \widehat{C}\lVert u\rVert_{L^1(\mathcal{X})}.
  \label{eq:GN}
\end{equation}
Applying \eqref{eq:GN} to $u = f_k$ and using that each $f_k$ is a
probability density, i.e. $\lVert f_k\rVert_{L^1(\mathcal{X})} = 1$, we obtain
\begin{equation}
  \lVert f_k\rVert_{L^2(\mathcal{X})}
  \leq \widehat{C}
    \left(
      \sqrt{\frac{1}{\lambda_1}
        \bigl( C + \lambda_2(1 + \ln(F_0^T))}
      + 1
    \right),
  \label{eq:L2-bound}
\end{equation}
for all $k\geq 1$, where we used the uniform bound
\eqref{eq:bound-derivative}. Thus $(f_k)$ is bounded in $L^2(\mathcal{X})$.
Combining \eqref{eq:bound-derivative} and \eqref{eq:L2-bound}, we deduce
that $(f_k)$ is bounded in $H^1(\mathcal{X})$.

We now show that $(f_k)$ is equicontinuous and uniformly bounded. For any
$x,y\in \mathcal{X}$ and any $k\geq 1$, the Cauchy--Schwarz inequality yields
\begin{equation}
  |f_k(x) - f_k(y)|
  = \left|\int_x^y f_k'(s)\,ds\right|
  \leq \lVert f_k'\rVert_{L^2(\mathcal{X})}\,|x-y|^{1/2},
  \label{eq:equicontinuity}
\end{equation}
so $(f_k)$ is equicontinuous on $\mathcal{X}$, since $(f_k')$ is bounded in
$L^2(\mathcal{X})$.

Moreover, for any $x\in \mathcal{X}$ and any $k\geq 1$,
\begin{align}
  |f_k(x)|
  &= \left|
        f_k(x) - \frac{1}{b}\int_0^b f_k(y)\,dy
        + \frac{1}{b}\int_0^b f_k(y)\,dy
     \right|
     \nonumber\\
  &\leq \frac{1}{b} \int_0^b |f_k(x) - f_k(y)|\,dy + \frac{1}{b}
     \nonumber\\
  &\leq \frac{1}{b} \lVert f_k'\rVert_{L^2(\mathcal{X})}
                 \int_0^b |x-y|^{1/2}\,dy + \frac{1}{b}
     \nonumber\\
  &\leq \frac{2}{3} b^{1/2}\lVert f_k'\rVert_{L^2(\mathcal{X})}
       + \frac{1}{b},
  \label{eq:uniform-bound}
\end{align}
where we used \eqref{eq:equicontinuity}. Since $(f_k')$ is bounded in $L^2(\mathcal{X})$, this shows that
$(f_k)$ is uniformly bounded in $L^\infty(\mathcal{X})$.

\medskip
\noindent
\paragraph{Step 2: compactness and admissibility of the limit.}
By the Banach--Alaoglu theorem (and reflexivity of $H^1(\mathcal{X})$), the
boundedness of $(f_k)$ in $H^1(\mathcal{X})$ implies the existence of a
subsequence, still denoted by $(f_k)$, and a function $f^\ast\in
H^1(\mathcal{X})$ such that
\begin{equation}
  f_k \rightharpoonup f^\ast
  \quad \text{weakly in } H^1(\mathcal{X}),
  \label{eq:H1-weak}
\end{equation}
and in particular
\begin{equation}
  f_k' \rightharpoonup (f^\ast)'
  \quad \text{weakly in } L^2(\mathcal{X}).
  \label{eq:derivative-weak}
\end{equation}

On the other hand, the uniform boundedness
\eqref{eq:uniform-bound} and the equicontinuity
\eqref{eq:equicontinuity} allow us to apply the
Arzelà--Ascoli theorem, which yields (up to extraction of a further
subsequence) the uniform convergence of $(f_k)$ towards $f^\ast$ on
$\overline{\mathcal{X}}$:
\begin{equation}
  f_k \longrightarrow f^\ast
  \quad \text{uniformly on } \overline{\mathcal{X}}.
  \label{eq:uniform-convergence}
\end{equation}
In particular,
\begin{equation}
  f_k(x) \longrightarrow f^\ast(x)
  \quad \text{for almost every } x\in \mathcal{X},
  \label{eq:ae-convergence}
\end{equation}
and since $\mathcal{X}$ has finite measure, \eqref{eq:uniform-convergence}
implies
\begin{equation}
  f_k \longrightarrow f^\ast
  \quad \text{in } L^p(\mathcal{X}), \quad \forall\,p\geq 1.
  \label{eq:Lp-convergence}
\end{equation}

We now verify that $f^\ast\in \mathcal{A}$. First,
\eqref{eq:ae-convergence} and the non-negativity of each $f_k$ yield
$f^\ast\geq 0$ almost everywhere on $\mathcal{X}$. Second, taking $p=1$ in
\eqref{eq:Lp-convergence} and using that each $f_k$ is a probability
density, we obtain
\begin{equation}
  \int_\mathcal{X} f^\ast(x)\,dx
  = \lim_{k\to\infty} \int_\mathcal{X} f_k(x)\,dx
  = 1,
  \label{eq:limit-density}
\end{equation}
so $f^\ast$ is a probability density on $\mathcal{X}$.

Moreover, for each strike $K_i$ we know that the function
$x\mapsto (x-K_i)_+$ belongs to $L^2(\mathcal{X})$. Then, we obtain convergence of the corresponding expectations (and therefore the call prices) under $f_k$ towards those under $f^\ast$.
Finally, $f^\ast$ satisfies all constraints defining $\mathcal{A}$.

\medskip
\noindent
\paragraph{Step 3: lower semicontinuity of the objective and existence of a minimizer.}
We now show that $H$ is weakly lower semicontinuous on $\mathcal{A}$.

First, the map $f\mapsto \lVert f'\rVert_{L^2(\mathcal{X})}^2$ is convex and
continuous on $H^1(\mathcal{X})$, hence weakly lower semicontinuous.

Next, we address the entropy term. Since the function
$t\mapsto t\ln t$ is continuous on $\R_+$ and reaches its minimum at
$t = e^{-1}$ with the value $-e^{-1}$, the shifted function
\[
  t \longmapsto t\ln t + e^{-1}
\]
is non-negative and continuous on $\R_+$. Combining this with the
pointwise convergence \eqref{eq:ae-convergence}, we obtain
\begin{equation}
  f_k(x)\ln f_k(x) + e^{-1}
  \longrightarrow f^\ast(x)\ln f^\ast(x) + e^{-1}
  \quad \text{for a.e.\ } x\in \mathcal{X}.
  \label{eq:pointwise-entropy}
\end{equation}
By Fatou's lemma,
\begin{equation}
  \int_\mathcal{X} \bigl( f^\ast\ln f^\ast + e^{-1}\bigr)
  \leq \liminf_{k\to\infty}
       \int_\mathcal{X} \bigl( f_k\ln f_k + e^{-1}\bigr),
  \label{eq:fatou}
\end{equation}
which implies
\begin{equation}
  \int_\mathcal{X} f^\ast\ln f^\ast
  \leq \liminf_{k\to\infty}
       \int_\mathcal{X} f_k\ln f_k.
  \label{eq:entropy-lsc}
\end{equation}
More general versions of this result can be found in
\cite{poljak1969,yoffe1977}; \citet[Theorem 5.14]{fonseca_leoni_2007}.

Combining the weak lower semicontinuity of the quadratic term and
\eqref{eq:entropy-lsc}, we obtain
\begin{align}
  H(f^\ast)
  &= \lambda_1 \lVert (f^\ast)'\rVert_{L^2(\mathcal{X})}^2
     + \lambda_2 \int_\mathcal{X} f^\ast\ln f^\ast
     \nonumber\\
  &\leq \lambda_1
       \liminf_{k\to\infty} \lVert f_k'\rVert_{L^2(\mathcal{X})}^2
     + \lambda_2
       \liminf_{k\to\infty} \int_\mathcal{X} f_k\ln f_k
     \nonumber\\
  &\leq \liminf_{k\to\infty}
       \left(
         \lambda_1 \lVert f_k'\rVert_{L^2(\mathcal{X})}^2
         + \lambda_2 \int_\mathcal{X} f_k\ln f_k
       \right)
     \nonumber\\
  &= \liminf_{k\to\infty} H(f_k)
   = \inf_{f\in \mathcal{A}} H(f),
  \label{eq:H-lsc}
\end{align}
where we used \eqref{eq:minimizing-sequence} in the last equality.
Thus $f^\ast$ is a minimizer of $H$ over $\mathcal{A}$.

\medskip
\noindent
\paragraph{Step 4: strict convexity and uniqueness.}
We now prove that $H$ is strictly convex on $\mathcal{A}$, which yields uniqueness
of the minimizer.

First, recall that the map $u\mapsto \lVert u\rVert_{L^2(\mathcal{X})}^2$ is
strictly convex on $L^2(\mathcal{X})$: for any $a,b\in\R$ and
$\theta\in (0,1)$,
\begin{equation}
  |\theta a + (1-\theta)b|^2
  = \theta |a|^2 + (1-\theta)|b|^2
    - \theta(1-\theta)|a-b|^2.
  \label{eq:scalar-strict-convex}
\end{equation}
Integrating \eqref{eq:scalar-strict-convex} over $\mathcal{X}$ gives, for
$f,g\in L^2(\mathcal{X})$,
\begin{equation}
  \lVert \theta f + (1-\theta)g\rVert_{L^2(\mathcal{X})}^2
  = \theta \lVert f\rVert_{L^2(\mathcal{X})}^2
    + (1-\theta)\lVert g\rVert_{L^2(\mathcal{X})}^2
    - \theta(1-\theta)\lVert f-g\rVert_{L^2(\mathcal{X})}^2.
  \label{eq:L2-strict-convex}
\end{equation}
In particular, if $f\neq g$ on a set of positive measure, then
\begin{equation}
  \lVert \theta f + (1-\theta)g\rVert_{L^2(\mathcal{X})}^2
  < \theta \lVert f\rVert_{L^2(\mathcal{X})}^2
    + (1-\theta)\lVert g\rVert_{L^2(\mathcal{X})}^2.
  \label{eq:L2-strict-convex-ineq}
\end{equation}

However, in our functional $H$ the quadratic term involves $f'$ rather than
$f$ itself. In general, the map $f\mapsto \lVert f'\rVert_{L^2(\mathcal{X})}^2$
is not strictly convex on $H^1(\mathcal{X})$, since two functions with the same
derivative may differ by an additive constant. On the admissible set $\mathcal{A}$,
this is excluded: if $f,g\in \mathcal{A}$ and $f' = g'$, then $f-g$ is
constant, but both $f$ and $g$ integrate to $1$ on $\mathcal{X}$, so that
\[
  \int_\mathcal{X} (f-g) = 0,
\]
which forces $f=g$. Thus, restricted to $\mathcal{A}$, the map
\begin{equation}
  f \longmapsto \lVert f'\rVert_{L^2(\mathcal{X})}^2
  \label{eq:derivative-quadratic}
\end{equation}
is strictly convex.

On the other hand, the function $t\mapsto t\ln t$ is strictly convex on
$\R_+$, and the passage to the integral preserves strict convexity whenever
the functions differ on a set of positive measure. Hence $-S(f)$ is strictly convex on $\mathcal{A}$.

Therefore $H$ is a strictly convex functional on the convex set $\mathcal{A}$,
so it admits at most one minimizer. Combining the existence established in Step 3 
with the uniqueness shown here, we conclude that $H$ admits a unique minimizer $f^\ast$, 
which completes the proof.
\end{proof}

\begin{remark}[Unbounded State Space Case]
\label{appendix:unbounded_domain}
For the sake of clarity, the existence result (Theorem~\ref{thm:exist-unique-continuous-minimizer}) has been established on a bounded state space. In this setting, the weak convergence of densities implies the convergence of moments.\\
On an unbounded state space, for instance $\mathcal{X} = \R_{+}$, an additional control of the tails is required to conclude that the limiting density $f^{\ast}$ is admissible : uniform integrability of the minimizing sequence $(f_{n})$ is necessary to pass to the limit in the forward and call price constraints.\\
A convenient way to guarantee such uniform integrability is to impose a uniform $(1+\alpha)$-moment bound for some $\alpha>0$. This can be achieved in at least two equivalent ways in our framework:

\begin{enumerate}
  \item \emph{Moment penalization in the objective.}  
  Modify the objective functional by adding a coercive term of the form
  \[
    f \mapsto \lambda \int_{0}^{\infty} x^{1+\alpha} f(x)\,dx,
    \qquad \lambda>0,\ \alpha>0.
  \]
  \item \emph{Moment constraint on the admissible set.}  
  Restrict the admissible set $\mathcal{A}$ by adding the moment condition
  \[
    \int_{0}^{\infty} x^{1+\alpha} f(x)\,dx \leq C,
  \]
  for some constant $C>0$ independent of $f$.
\end{enumerate}

In both cases, the uniform $(1+\alpha)$-moment bound provides uniform integrability via a standard de la Vallée-Poussin argument, ensuring that the limiting density $f^{\ast}$ still matches the constraints on $\R_{+}$.

\end{remark}
\subsubsection{Convergence of the SEDEx Discretization}
\label{subsec:disc-to-cont}

Having established existence and uniqueness in both the continuous and discrete
settings, we now prove that the discrete minimizer converges to its continuous
counterpart as the mesh is refined. Throughout this section we set
$\epsilon_M:=b/M$, which coincides with the mesh size $\Delta s$ but makes the
$M$-dependence explicit. The cell-average projection of an element of $\mathcal A$
shifts the moments by $O(\epsilon_M)$, so we consider the relaxed admissible set
\begin{equation}
\mathcal{A}_M^{\epsilon_M}
:=
\left\{
p\in\Sigma_M \;\middle|\;
\begin{array}{l}
\displaystyle
\left|\sum_{i=1}^M s_i p_i - F_0^T\right|\le \epsilon_M, \\[1.2em]
\displaystyle
e^{-rT}\sum_{i=1}^M (s_i-K_j)_+\,p_i
\in
\left[
C_j^{\mathrm{bid}}-\epsilon_M,\,
C_j^{\mathrm{ask}}+\epsilon_M
\right],
\qquad j=1,\dots,N
\end{array}
\right\}.
\label{eq:AMeps}
\end{equation}
The set $\mathcal A_M^{\epsilon_M}$ relaxes the admissible set
$\mathcal A_M$ of Definition~\ref{def:discrete-admissible-set} by a uniform
tolerance $\epsilon_M$ on the pricing constraints. This tolerance vanishes
as $M\to\infty$ and, at any practically relevant resolution, lies well below
the precision of standard solvers. Since $\mathcal A_M\subset
\mathcal A_M^{\epsilon_M}$, Propositions~\ref{prop:AM-compact-convex}
and~\ref{prop:HM-strict-convex} apply unchanged to $\mathcal A_M^{\epsilon_M}$,
and $H_M$ admits a unique minimizer over it, denoted $p^{\ast,M}$.

To pass to the continuous limit we associate to any $p\in(\R_+)^M$ two
reconstructions on $\overline{\mathcal X}=[0,b]$. Setting $s_0:=0$ and
$I_i:=[s_{i-1},s_i)$, define:
\begin{itemize}
  \item the \emph{piecewise constant interpolation}
  $\tilde f_p(x):=p_i/\epsilon_M$ for $x\in I_i$;
  \item the \emph{piecewise affine interpolation} $\hat f_p\in C(\overline{\mathcal X})$,
  affine on each $[s_i,s_{i+1}]$ with $\hat f_p(s_i)=p_i/\epsilon_M$ for
  $i=1,\dots,M$, and extended by $p_1/\epsilon_M$ on $[0,s_1]$.
\end{itemize}
A direct computation yields the identities
\begin{equation}
\|\hat f_p'\|_{L^2(\mathcal X)}^2 = \frac{1}{\epsilon_M^3}\,\|D^{(1)}p\|_2^2,
\qquad
\int_{\mathcal X}\tilde f_p\ln\tilde f_p\,dx
=\sum_{i=1}^M p_i\ln p_i\;-\;\ln\epsilon_M,
\label{eq:lift-identities}
\end{equation}
so that, plugging both into~\eqref{eq:def-HM},
\begin{equation}
H_M(p)
=\lambda_1\,\|\hat f_p'\|_{L^2(\mathcal X)}^2
+\lambda_2\int_{\mathcal X}\tilde f_p\ln\tilde f_p\,dx
+\lambda_2\,\ln\epsilon_M.
\label{eq:HM-lifted}
\end{equation}
The shift $\lambda_2\ln\epsilon_M$ does not depend on $p$ and is irrelevant to
the minimization, but it must be tracked when comparing $H_M$ with $H$.

\begin{theorem}[Discrete-to-continuous convergence]
\label{thm:disc-to-cont}
Let $f^\ast\in\mathcal A$ be the unique minimizer of $H$ over $\mathcal A$
(\Cref{thm:exist-unique-continuous-minimizer}).
We also consider 
a sequence of grids as in \Cref{def:grid-decision-variable}; the $\ell$-th grid has $M_\ell$ points which form a strictly increasing sequence $\Mcal_{grids}=\{M_\ell, \ell \ge 1\}$.
For a generic $M \in \Mcal_{grids}$, let
$p^{\ast,M}$ be the unique minimizer of $H_M$ over $\mathcal A_M^{\epsilon_M}$ 
and
$\hat f_M:=\hat f_{p^{\ast,M}}$. Then for  $M \in \Mcal_{grids}$, $ M\to\infty$,
\begin{equation}
\hat f_M \longrightarrow f^\ast \quad\text{uniformly on }\mathcal X,
\qquad
\hat f_M \rightharpoonup f^\ast \quad\text{weakly in }H^1(\mathcal X).
\end{equation}
\end{theorem}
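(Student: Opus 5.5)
The approach is a $\Gamma$-convergence type argument: a recovery sequence gives an upper bound, compactness of the discrete minimizers gives a limit point, and the bound plus lower semicontinuity forces that limit to be $f^\ast$.

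\emph{Step 1 (recovery sequence and upper bound).} Following the remark preceding the definition of $\mathcal A_M^{\epsilon_M}$, project $f^\ast$ onto the grid by cell averages, $p^M_i:=\int_{I_i}f^\ast$. Since $f^\ast\in H^1$ is $1/2$-Hölder by \eqref{eq:equicontinuity}, and the payoffs $x$ and $(x-K_j)_+$ are $1$-Lipschitz, the discrete moments differ from the continuous ones by $O(\epsilon_M)$. The call constraints pick up an extra factor $e^{-rT}\le1$, so $p^M\in\mathcal A_M^{\epsilon_M}$, possibly after adjusting the constant or shifting cells by one index. By Jensen on each cell, $(p^M_{i+1}-p^M_i)^2/\epsilon_M^3\le \epsilon_M^{-1}\int_{I_i\cup I_{i+1}}|f^{\ast\prime}|^2$ up to overlap, so $\|\hat f_{p^M}'\|_2^2\le\|f^{\ast\prime}\|_2^2$ in the limit. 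Moreover $\tilde f_{p^M}\to f^\ast$ uniformly, so the entropy integrals converge by dominated convergence. Using \eqref{eq:HM-lifted}, this gives
\[
\limsup_M\bigl(H_M(p^{\ast,M})-\lambda_2\ln\epsilon_M\bigr)\le\limsup_M\bigl(H_M(p^M)-\lambda_2\ln\epsilon_M\bigr)\le H(f^\ast).
\]

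\emph{Step 2 (compactness).} From the bound of Step 1, and with the entropy term bounded below by $-e^{-1}b$ via $t\ln t\ge -e^{-1}$, the quantity $\|\hat f_M'\|_2$ is uniformly bounded. The mass of $\hat f_M$ is $1+O(\epsilon_M^{1/2})$, since $\hat f_M-\tilde f_M$ is controlled by $\epsilon_M^{1/2}\|\hat f_M'\|_2$. Reproducing \eqref{eq:uniform-bound}, we get a uniform $L^\infty$ bound and equicontinuity. By Arzelà--Ascoli together with reflexivity of $H^1$, a subsequence satisfies $\hat f_M\to g$ uniformly and weakly in $H^1$. We also have $\|\tilde f_M-\hat f_M\|_\infty\le C\epsilon_M^{1/2}$, so $\tilde f_M\to g$ uniformly as well.

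\emph{Step 3 (admissibility and liminf).} Passing to the limit in the relaxed constraints is legitimate because the tolerance $\epsilon_M\to0$ and the Riemann sums $\sum_i\phi(s_i)p_i$ converge to $\int\phi g$ for continuous $\phi$ under uniform convergence. Hence $g\ge0$ and $g$ satisfies the mass, forward and bid--ask constraints, so $g\in\mathcal A$. Weak lower semicontinuity of $\|\cdot'\|_2^2$, combined with uniform convergence of $\tilde f_M\ln\tilde f_M$, gives
\[
H(g)\le\liminf_M\bigl(H_M(p^{\ast,M})-\lambda_2\ln\epsilon_M\bigr)\le H(f^\ast).
\]
Uniqueness in \Cref{thm:exist-unique-continuous-minimizer} then yields $g=f^\ast$. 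Since every subsequence has a further subsequence converging to the same limit, the whole sequence converges, both uniformly and weakly in $H^1$.

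\emph{Main obstacle.} The delicate step is Step 1. The cell-average projection has to land exactly in the relaxed set, which requires tracking the $O(\epsilon_M)$ constants and the boundary cell $[0,s_1]$. Also, the discrete Dirichlet energy of the projection must be bounded by $\|f^{\ast\prime}\|_2^2+o(1)$. I would try first a Cauchy--Schwarz bound on $p_{i+1}-p_i=\int_{I_i}\bigl(f^\ast(x+\epsilon_M)-f^\ast(x)\bigr)dx$, which gives $|p_{i+1}-p_i|^2\le\epsilon_M^3\,\epsilon_M^{-1}\int_{I_i}\int_0^{\epsilon_M}|f^{\ast\prime}(x+t)|^2\,dt\,dx$. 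Summing over $i$ then yields exactly $\|f^{\ast\prime}\|_2^2$, since each point is covered with total weight one.
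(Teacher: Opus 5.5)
Your proposal is correct and follows the paper's proof essentially step for step. The shared steps are: the cell-average projection $\Pi_M f^\ast$ into $\mathcal A_M^{\epsilon_M}$ with the Cauchy--Schwarz/Fubini bound $\|\hat f_{\Pi_M f^\ast}'\|_{L^2}^2\le\|f^{\ast\prime}\|_{L^2}^2$; a uniform seminorm bound together with $\|\hat f_M-\tilde f_M\|_\infty\to 0$ for compactness; and identification of the limit through lower semicontinuity and uniqueness. The differences are minor: you handle the entropy via uniform convergence of cell averages and $t\ln t\ge -e^{-1}$, where the paper uses Jensen (giving the exact inequality $H_M(\Pi_M f)\le H(f)+\lambda_2\ln\epsilon_M$), the maximal discrete entropy $\ln M$, and Fatou. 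Also, no constant adjustment or index shift is needed in your Step 1: $1$-Lipschitz payoffs, $\int f^\ast=1$, $|s_i-x|\le\epsilon_M$ on $I_i$ and $e^{-rT}\le 1$ already give tolerance exactly $\epsilon_M$.
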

\begin{proof}
We proceed in three steps.
\paragraph{Step 1: Comparing the discrete and continuous objectives.}
Define the projection $\Pi_M:\mathcal A\to\R^M$ by
$(\Pi_M f)_i:=\int_{I_i}f(x)\,dx$. For any $f\in\mathcal A$, non-negativity and
$\int f=1$ give $\Pi_M f\in\Sigma_M$. The functions $\varphi_0(x):=x$ and
$\varphi_j(x):=(x-K_j)_+$ are $1$-Lipschitz on $\overline{\mathcal X}$, so
$|\varphi_\ell(s_i)-\varphi_\ell(x)|\le \epsilon_M$ for every $x\in I_i$ and
every $\ell\in\{0,\dots,N\}$. Since $\int_{\mathcal X}f=1$,
\begin{equation}
\Bigl|\sum_{i=1}^M\varphi_\ell(s_i)(\Pi_M f)_i
-\int_{\mathcal X}\varphi_\ell(x)\,f(x)\,dx\Bigr|
=\Bigl|\sum_{i=1}^M\int_{I_i}\bigl(\varphi_\ell(s_i)-\varphi_\ell(x)\bigr)f(x)\,dx\Bigr|
\le \epsilon_M,
\label{eq:lip-bound}
\end{equation}
for $\ell=0,1,\dots,N$. Together with $e^{-rT}\le 1$ and the constraints
defining $\mathcal A$, this yields $\Pi_M f\in \mathcal A_M^{\epsilon_M}$.

Set $\bar f_i:=(\Pi_M f)_i/\epsilon_M$. A change of variable and
Cauchy--Schwarz give
\begin{equation}
(\bar f_{i+1}-\bar f_i)^2
=\Bigl(\tfrac{1}{\epsilon_M}\int_{I_i}\int_0^{\epsilon_M}f'(x+t)\,dt\,dx\Bigr)^2
\le \int_{I_i}\int_0^{\epsilon_M}|f'(x+t)|^2\,dt\,dx,
\label{eq:cell-diff}
\end{equation}
and summing over $i=1,\dots,M-1$ and applying Fubini,
\begin{equation}
\sum_{i=1}^{M-1}(\bar f_{i+1}-\bar f_i)^2
\le \int_0^{\epsilon_M}\!\int_0^{b-\epsilon_M}|f'(x+t)|^2\,dx\,dt
\le \epsilon_M\,\|f'\|_{L^2(\mathcal X)}^2.
\label{eq:proj-deriv}
\end{equation}
Combined with the first identity in~\eqref{eq:lift-identities}, this gives
\begin{equation}
\|\hat f_{\Pi_M f}'\|_{L^2(\mathcal X)}^2
=\frac{1}{\epsilon_M}\sum_{i=1}^{M-1}(\bar f_{i+1}-\bar f_i)^2
\le \|f'\|_{L^2(\mathcal X)}^2.
\label{eq:proj-deriv-final}
\end{equation}
For the entropy, Jensen's inequality applied to $t\mapsto t\ln t$ with the
probability measure $dx/\epsilon_M$ on $I_i$ gives
$\bar f_i\ln\bar f_i\le \tfrac{1}{\epsilon_M}\int_{I_i}f\ln f\,dx$, which
rewrites as $(\Pi_M f)_i\ln(\Pi_M f)_i-(\Pi_M f)_i\ln\epsilon_M\le
\int_{I_i}f\ln f\,dx$. Summing over $i$ and using
$\sum_i(\Pi_M f)_i=1$,
\begin{equation}
\sum_{i=1}^M(\Pi_M f)_i\ln(\Pi_M f)_i
\le \int_{\mathcal X}f\ln f\,dx+\ln\epsilon_M.
\label{eq:proj-entropy}
\end{equation}
Combining~\eqref{eq:proj-deriv-final}--\eqref{eq:proj-entropy}
with~\eqref{eq:def-HM} gives
\begin{equation}
H_M(\Pi_M f)\le H(f)+\lambda_2\ln\epsilon_M.
\label{eq:HM-Pi-bound}
\end{equation}
Choosing $f=f^\ast\in\mathcal A$ and using the optimality of $p^{\ast,M}$ over
$\mathcal A_M^{\epsilon_M}\ni\Pi_M f^\ast$,
\begin{equation}
H_M(p^{\ast,M})\le H(f^\ast)+\lambda_2\ln\epsilon_M.
\label{eq:HM-upper}
\end{equation}
Substituting~\eqref{eq:HM-lifted} for $p=p^{\ast,M}$ into~\eqref{eq:HM-upper}
and cancelling $\lambda_2\ln\epsilon_M$ yields the central inequality
\begin{equation}
\lambda_1\,\|\hat f_M'\|_{L^2(\mathcal X)}^2
+\lambda_2\int_{\mathcal X}\tilde f_M\ln\tilde f_M\,dx
\le H(f^\ast).
\label{eq:central}
\end{equation}

\paragraph{Step 2: Uniform $H^1$ bound and compactness.}
The discrete entropy attains its maximum $\ln M$ on $\Sigma_M$ at the uniform
distribution; hence $-S_M(p^{\ast,M})\ge-\ln M$, and the second identity
in~\eqref{eq:lift-identities} combined with $M\epsilon_M=b$ gives
\begin{equation}
\int_{\mathcal X}\tilde f_M\ln\tilde f_M\,dx
=-S_M(p^{\ast,M})-\ln\epsilon_M
\ge -\ln(M\epsilon_M)=-\ln b.
\label{eq:entropy-lower}
\end{equation}
Plugging~\eqref{eq:entropy-lower} into~\eqref{eq:central} produces the uniform
$H^1$-seminorm bound
\begin{equation}
\|\hat f_M'\|_{L^2(\mathcal X)}^2
\le C_1:=\frac{1}{\lambda_1}\bigl(H(f^\ast)+\lambda_2\ln b\bigr).
\label{eq:H1-seminorm-bd}
\end{equation}

We first establish that $\hat f_M$ and $\tilde f_M$ asymptotically coincide.
Since $\hat f_M-\tilde f_M$ is affine on each $[s_i,s_{i+1}]$ and vanishes at
$s_{i+1}$, its $L^\infty$ norm on that interval equals
$|\bar p_{i+1}^M-\bar p_i^M|$. The affine slope of $\hat f_M$ on
$[s_i,s_{i+1}]$ being $(\bar p_{i+1}^M-\bar p_i^M)/\epsilon_M$, we have
$\sum_{i=1}^{M-1}(\bar p_{i+1}^M-\bar p_i^M)^2/\epsilon_M
=\|\hat f_M'\|_{L^2(\mathcal X)}^2$. Since each nonnegative term is bounded by
the whole sum, using~\eqref{eq:H1-seminorm-bd},
\begin{equation}
\|\hat f_M-\tilde f_M\|_{L^\infty(\mathcal X)}
\le \max_{1\le i\le M-1}|\bar p_{i+1}^M-\bar p_i^M|
\le \sqrt{\epsilon_M\,C_1}
\longrightarrow 0 \quad\text{as } M\to\infty.
\label{eq:tilde-hat-close}
\end{equation}

The triangle inequality, $\|\tilde f_M\|_{L^1(\mathcal X)}=1$
and~\eqref{eq:tilde-hat-close} yield the uniform $L^1$ bound
\begin{equation}
\|\hat f_M\|_{L^1(\mathcal X)}
\le \|\tilde f_M\|_{L^1(\mathcal X)}
   +b\,\|\hat f_M-\tilde f_M\|_{L^\infty(\mathcal X)}
\le 1+b\sqrt{b\,C_1}.
\label{eq:hat-L1-bd}
\end{equation}
Applying the Gagliardo--Nirenberg inequality~\eqref{eq:GN} to $\hat f_M$ in
place of $f_k$, together with~\eqref{eq:H1-seminorm-bd}
and~\eqref{eq:hat-L1-bd}, shows that $(\hat f_M)$ is uniformly bounded in
$H^1(\mathcal X)$.

The same substitution in~\eqref{eq:uniform-bound}, with $\|f_k\|_{L^1}=1$
replaced by the right-hand side of~\eqref{eq:hat-L1-bd}, yields the uniform
$L^\infty$ bound
\begin{equation}
\|\hat f_M\|_{L^\infty(\mathcal X)}
\le \tfrac{1+b\sqrt{b\,C_1}}{b}+\tfrac{2}{3}\,b^{1/2}\sqrt{C_1}.
\label{eq:hat-Linfty-bd}
\end{equation}
Similarly, applying the Cauchy--Schwarz estimate~\eqref{eq:equicontinuity} to
$\hat f_M$ in place of $f_k$ gives the uniform equicontinuity
\begin{equation}
|\hat f_M(x)-\hat f_M(y)|\le \sqrt{C_1}\,|x-y|^{1/2},
\qquad x,y\in\mathcal X,\ M\ge 1.
\label{eq:hat-equicont}
\end{equation}

By the Banach--Alaoglu theorem applied to $H^1(\mathcal X)$, there exist a
subsequence $(M_{\ell_k})_{k\ge 1}\subset\mathcal{M}_{grids}$
(still indexed by $M$, not relabeled) and $\bar f\in H^1(\mathcal X)$ such that
$\hat f_M\rightharpoonup \bar f$ in $H^1(\mathcal X)$, and in particular
$\hat f_M'\rightharpoonup \bar f'$ in $L^2(\mathcal X)$. By the
Arzelà--Ascoli theorem, the $L^\infty$ bound~\eqref{eq:hat-Linfty-bd} and the
equicontinuity~\eqref{eq:hat-equicont} yield, upon further extraction,
$\hat f_M\to\bar f$ uniformly on $\mathcal X$. Combined
with~\eqref{eq:tilde-hat-close}, the triangle inequality gives
$\tilde f_M\to \bar f$ uniformly on $\mathcal X$ as well.

\paragraph{Step 3: Admissibility of $\bar f$ and identification with $f^\ast$.}
Pointwise non\-negativity of $\tilde f_M$ and uniform convergence give
$\bar f\ge 0$ on $\mathcal X$. Since $\mathcal X$ has finite measure, uniform
convergence yields $L^1$-convergence, and
$\int_{\mathcal X}\bar f\,dx=\lim_M\int_{\mathcal X}\tilde f_M\,dx=1$.

For the linear constraints, the Lipschitz estimate~\eqref{eq:lip-bound}
applied to $f=\tilde f_M$ (whose cell averages are $\bar p_i^M$) together with
the uniform convergence $\tilde f_M\to\bar f$ yields, for every
$\ell\in\{0,\dots,N\}$,
\begin{equation}
\sum_{i=1}^M \varphi_\ell(s_i)\,p_i^{\ast,M}
\longrightarrow
\int_{\mathcal X}\varphi_\ell(x)\,\bar f(x)\,dx
\quad\text{as } M\to\infty.
\label{eq:limit-moments}
\end{equation}
Combining~\eqref{eq:limit-moments} with the relaxed constraints
defining~\eqref{eq:AMeps} and $\epsilon_M\to 0$ gives
$\int_{\mathcal X}x\,\bar f\,dx=F_0^T$ and
$e^{-rT}\int_{\mathcal X}(x-K_j)_+\,\bar f\,dx\in
[C_j^{\mathrm{bid}},C_j^{\mathrm{ask}}]$ for every $j=1,\dots,N$. Hence
$\bar f\in\mathcal A$.

It remains to pass to the limit in~\eqref{eq:central}. The weak lower
semicontinuity of $u\mapsto\|u\|_{L^2(\mathcal X)}^2$ and the Fatou-based
lower semicontinuity of $f\mapsto\int_{\mathcal X}f\ln f\,dx$ under pointwise
convergence, both invoked in Step~3 of the proof of
Theorem~\ref{thm:exist-unique-continuous-minimizer} and applied here to the
weak convergence $\hat f_M'\rightharpoonup\bar f'$ and the pointwise
convergence $\tilde f_M\to\bar f$ a.e.\ on $\mathcal X$, give
\begin{equation}
\|\bar f'\|_{L^2(\mathcal X)}^2
\le \liminf_{M\to\infty}\|\hat f_M'\|_{L^2(\mathcal X)}^2,
\qquad
\int_{\mathcal X}\bar f\ln\bar f\,dx
\le \liminf_{M\to\infty}\int_{\mathcal X}\tilde f_M\ln\tilde f_M\,dx.
\label{eq:wlsc-both}
\end{equation}
Combining~\eqref{eq:wlsc-both} with the central inequality~\eqref{eq:central},
\begin{equation}
\begin{split}
H(\bar f)
&=\lambda_1\|\bar f'\|_{L^2(\mathcal X)}^2
+\lambda_2\int_{\mathcal X}\bar f\ln\bar f\,dx \\
&\le \liminf_{M\to\infty}\Bigl[\lambda_1\|\hat f_M'\|_{L^2(\mathcal X)}^2
+\lambda_2\int_{\mathcal X}\tilde f_M\ln\tilde f_M\,dx\Bigr] \\
&\le H(f^\ast).
\end{split}
\label{eq:H-limsup}
\end{equation}
Since $\bar f\in\mathcal A$ and $f^\ast$ is the unique minimizer of $H$ over
$\mathcal A$ (Theorem~\ref{thm:exist-unique-continuous-minimizer}), this forces
$\bar f=f^\ast$. The limit being independent of the extracted subsequence, the whole sequence $(\hat f_{M_\ell})_{\ell\ge 1}$ converges to $f^\ast$ uniformly
on $\mathcal X$ and weakly in $H^1(\mathcal X)$.
\end{proof}

\begin{remark}[Weak convergence of the discrete risk-neutral measure]
\label{rem:measure-convergence}
The argument used to establish~\eqref{eq:limit-moments} extends, by uniform
continuity, to any $\varphi\in C(\overline{\mathcal X})$ in place of
$\varphi_\ell$. Consequently, the atomic discrete probability measure
$\nu_M:=\sum_{i=1}^M p_i^{\ast,M}\delta_{s_i}$ converges weakly, as $M\to\infty$,
to the continuous probability measure $\mu^\ast(dx):=f^\ast(x)\,dx$.
In particular, the discrete call prices
$e^{-rT}\sum_{i=1}^M(s_i-K_j)_+ p_i^{\ast,M}$ converge to their continuous
counterparts $e^{-rT}\int_{\mathcal X}(x-K_j)_+ f^\ast(x)\,dx$ for every
$j=1,\dots,N$.
\end{remark}

\section{Additional numerical results}
\label{appendix:7dte_market_slice}

For completeness, this appendix reports a representative 7DTE slice observed on 2023-04-14. The chain contains 232 quoted strikes. As in the main text, the implied-volatility smile obtained from repriced options provides a satisfactory fit to market quotes.

\begin{figure}[H]
    \centering
    \begin{subfigure}[b]{0.48\textwidth}
        \centering
        \includegraphics[width=\textwidth]{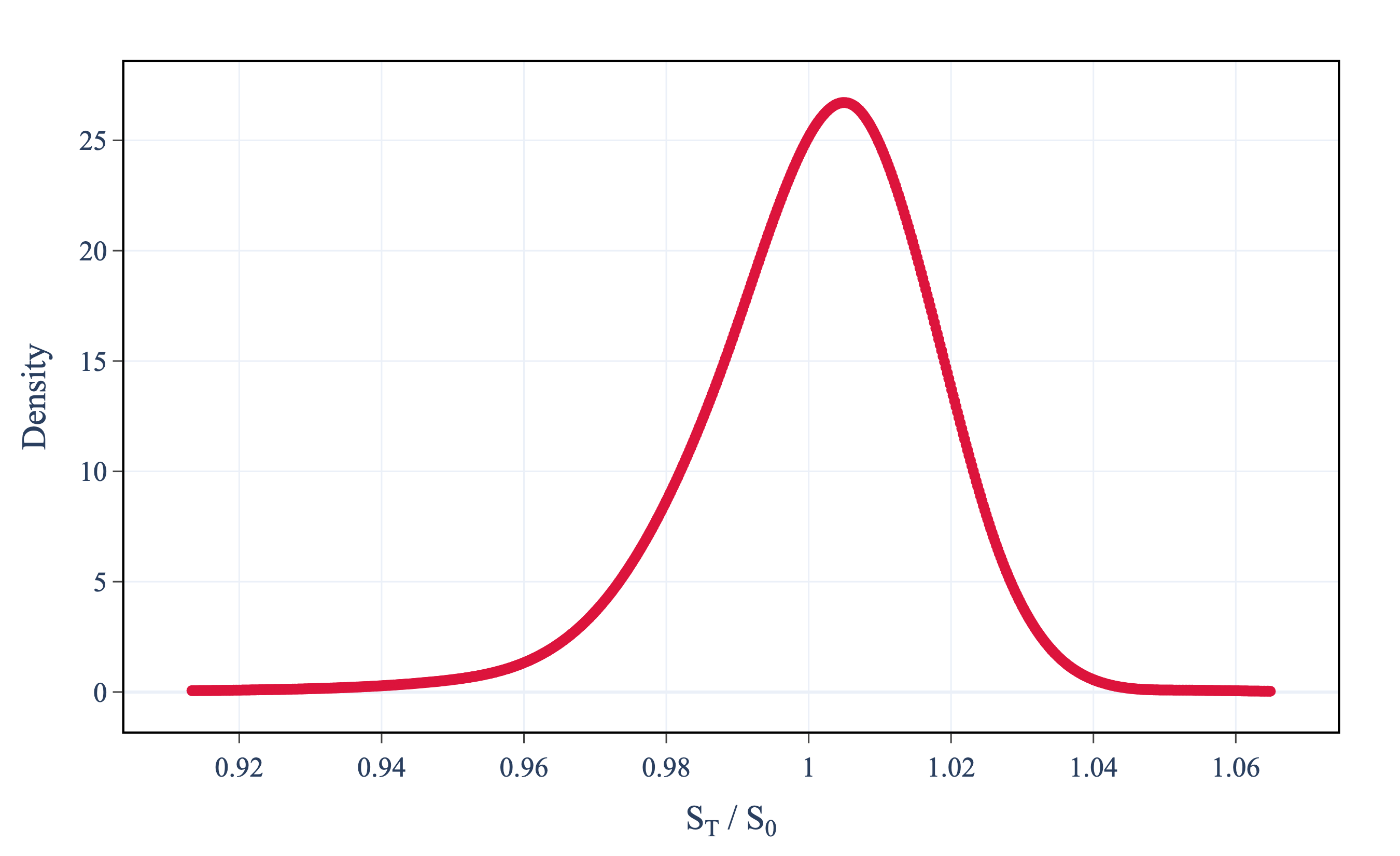}
        \caption*{(a)}
    \end{subfigure}
    \hfill
    \begin{subfigure}[b]{0.48\textwidth}
        \centering
        \includegraphics[width=\textwidth]{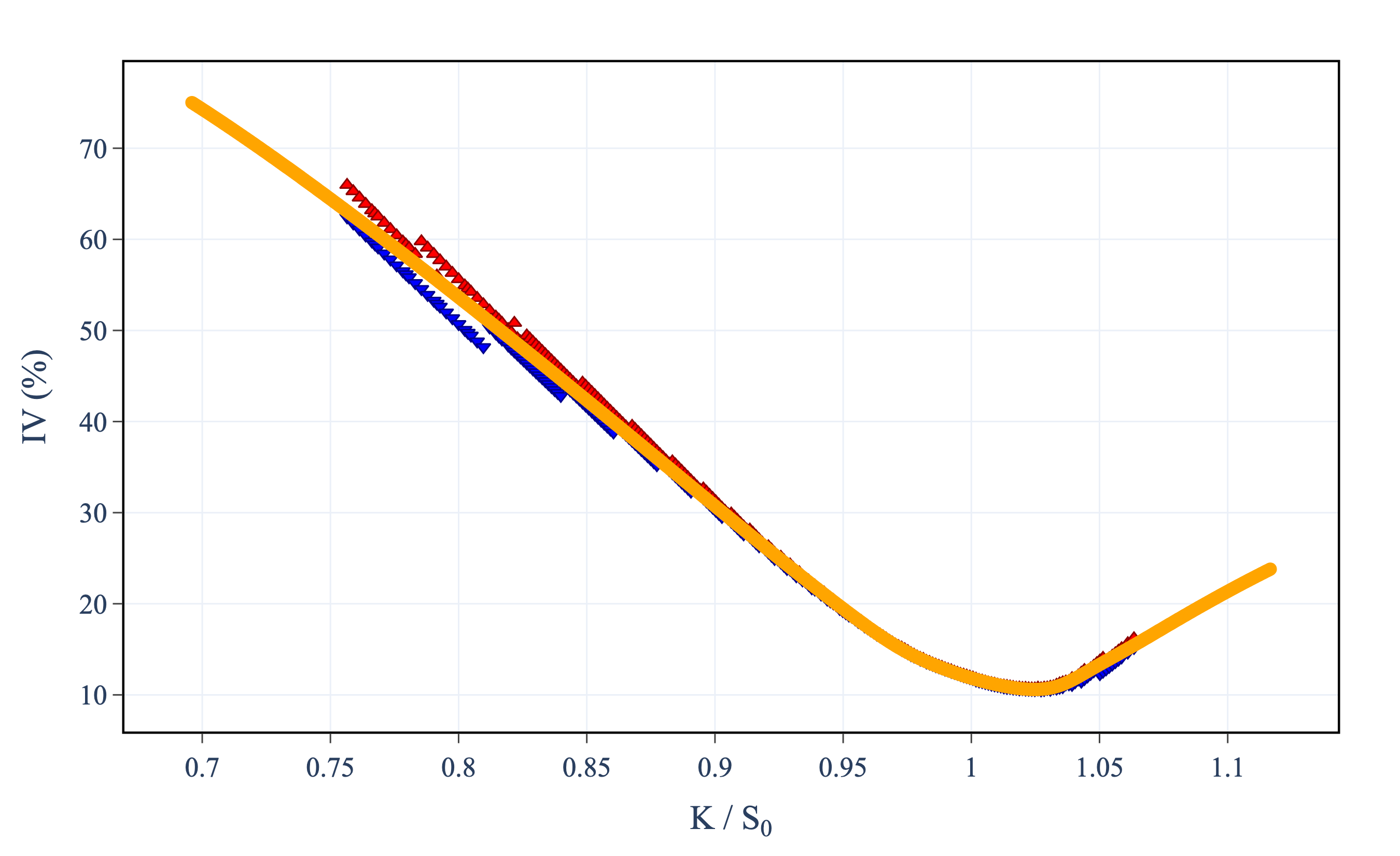}
        \caption*{(b)}
    \end{subfigure}
    \caption{Representative 7DTE market slice on 2023-04-14. 
    \textit{(a)} Risk-neutral density extracted by SEDEx (red dots). \textit{(b)} 
    Market ask (red triangles-up), market bid (blue triangles-down), and SEDEx 
    implied volatilities (orange circles). The bid--ask constraints are satisfied 
    up to a numerical tolerance of $10^{-6}$ vol points.}
    \label{fig:market_7dte_appendix}
\end{figure}

\section{Algorithms}
\subsection{Arbitrage Removal Iterative Executable Strategy (ARIES)}
\begin{algorithm}[htbp]
\caption{Iterative Filtering of Static Arbitrage}
\label{alg:arb_filter}
\begin{algorithmic}[1]
\Require Strikes $(K_i)_{i=1}^N$, quotes $(C_i^{\mathrm{ask}}, C_i^{\mathrm{bid}})$, depth bounds $(Q_i^{\mathrm{ask}}, Q_i^{\mathrm{bid}})$.
\Ensure An arbitrage-free subset of option indices $\mathcal{I}^*$.
\State Initialize the active set $\mathcal{I} \gets \{1,\dots,N\}$.
\Repeat
    \State Solve the linear program \eqref{eq:P3prime} restricted to the active set $i \in \mathcal{I}$.
    \State Extract a \textbf{vertex} optimal solution $x^* = (q^{\mathrm{ask}*}, q^{\mathrm{bid}*}, u^*, \alpha^*)$.
    \If{$x^* \neq 0$} \Comment{Equivalently, $\tilde{\Pi}^*$ is not identically zero}
        \If{$\tilde{\pi}_0(x^*) = 0$} \Comment{Weak-arbitrage case}
            \State Compute $\bar{\lambda}(x^*)$ via \eqref{eq:lambda_bar} and update $x^* \gets \bar{\lambda}(x^*) x^*$.
        \EndIf
        \State Identify the set of binding depth constraints:
        \Statex \qquad $\mathcal{B} \gets \Bigl\{i \in \mathcal{I} : q_i^{\mathrm{ask}*} = Q_i^{\mathrm{ask}} \text{ or } q_i^{\mathrm{bid}*} = Q_i^{\mathrm{bid}} \Bigr\}$.
        \State \Comment{By Proposition~\ref{prop:characterization_proof_version}, $\mathcal{B} \neq \emptyset$.}
        \If{$|\mathcal{B}| > 1$}
            \State For each $i \in \mathcal{B}$, define the saturated depth:
            \Statex \qquad $Q_i^{\mathrm{bind}} = \begin{cases} 
                Q_i^{\mathrm{ask}} & \text{if } q_i^{\mathrm{ask}*} = Q_i^{\mathrm{ask}} \\
                Q_i^{\mathrm{bid}} & \text{if } q_i^{\mathrm{bid}*} = Q_i^{\mathrm{bid}}
            \end{cases}$
            \State Remove the option with the smallest saturated size:
            \Statex \qquad $i^{\dagger} \gets \arg\min_{i \in \mathcal{B}} Q_i^{\mathrm{bind}}$.
        \Else
            \State Let $i^{\dagger}$ be the unique element in $\mathcal{B}$.
        \EndIf
        \State Update the active set: $\mathcal{I} \gets \mathcal{I} \setminus \{i^{\dagger}\}$.
    \EndIf
\Until{$x^* = 0$} \Comment{No static arbitrage remains}
\State \Return $\mathcal{I}^* \gets \mathcal{I}$.
\end{algorithmic}
\end{algorithm}

\begin{paragraph}{Implementation Details}
In our numerical implementation, problem \eqref{eq:P3prime} is solved with the \texttt{HiGHS} optimization suite. \texttt{HiGHS} is an open-source solver for LP/MIP/QP models written in C++, with interfaces in C, R, and Python. We use its interior-point LP solver (\texttt{HiGHS-IPM}) together with the default crossover procedure. This configuration delivered the best practical performance in our experiments. While the IPM typically returns an optimal point on the optimal face (not necessarily a vertex of $\mathcal{C}$), the crossover step computes an optimal basic feasible solution.  
\end{paragraph}
\subsection{Smooth Entropic Density EXtraction (SEDEx)}
\label{subsub:alg_SEDEx}

\begin{algorithm}[H]
\caption{Hybrid Risk-Neutral Density Extraction under Bid-Ask Constraints}
\label{alg:rnd_extraction}
\begin{algorithmic}[1]
\Require Arbitrage-free option subset $\mathcal{I}^*$ (from Algorithm~\ref{alg:arb_filter}) with normalized strikes $K_i/S_0$ and quotes $(C_i^{\mathrm{ask}}/S_0, C_i^{\mathrm{bid}}/S_0)$. Time to expiry $T$. Rates $r, q$ recovered via \eqref{eq:rq_recovery}.
\Ensure The discrete risk-neutral density $(s, p^\ast)$ minimizing $H_M$ on $\mathcal{A}_M$.

\State \textbf{1. Grid Initialization}
\State Set normalized spot $S_0 = 1$ and forward $F = e^{(r-q)T}$.
\State Extract strike bounds $K_{\inf}, K_{\sup}$ via \eqref{eq:strike-bounds}.
\State Compute target mesh size $\Delta s_{\mathrm{tar}} \gets \sigma_{\mathrm{ATM}}\sqrt{2\pi T}\,\delta$ with $\delta=0.5\%$. \Comment{Section~\ref{subsec:mesh_size}}
\State Choose the compatible mesh size $\Delta s$ as in \eqref{eq:compatible-mesh-size}.
\State Compute the compatible upper endpoint $b$ and the number of grid points $M$ via \eqref{eq:upper-bound-support}.
\State Construct the uniform grid $s = (s_1, \dots, s_M)^\top$ where $s_j = j\Delta s$. \Comment{Equation~\eqref{eq:grid_def}}

\State \textbf{2. Matrix Construction}
\State Initialize the payoff matrix $\mathbf{M} \in \mathbb{R}^{|\mathcal{I}^*| \times M}$ with entries:
\Statex \qquad $\mathbf{M}_{ij} \gets (s_j-K_i)_+$
\State Form the combined inequality system for bid-ask constraints \eqref{eq:discrete-call}:
\Statex \qquad $\mathbf{M}^{\mathrm{comb}} \gets e^{-rT} \begin{pmatrix} \mathbf{M} \\ -\mathbf{M} \end{pmatrix}, \qquad C^{\mathrm{comb}} \gets \begin{pmatrix} (C_i^{\mathrm{ask}})_{i\in\mathcal{I}^*} \\ -(C_i^{\mathrm{bid}})_{i\in\mathcal{I}^*} \end{pmatrix}$

\State \textbf{3. Optimization Formulation}
\State Define the decision variable $p \in \mathbb{R}^M$. \Comment{Definition~\ref{def:grid-decision-variable}}
\State Retrieve regularization weights $\lambda_1, \lambda_2$ satisfying ratio \eqref{eq:lambda-ratio-scaling}. \Comment{Section~\ref{subsec:reg_weight}}
\State Define the discrete functional $H_M(p)$ via \eqref{eq:def-HM}:
\Statex \qquad $H_M(p) \gets \frac{\lambda_1}{(\Delta s)^3} \sum_{j=1}^{M-1}(p_{j+1}-p_{j})^2 \;+\; \lambda_2 \sum_{j=1}^{M} p_{j} \ln(p_{j})$

\State \textbf{4. Convex Programming}
\State Solve the constrained minimization problem on the admissible set $\mathcal{A}_M$ (Definition~\ref{def:discrete-admissible-set}):
\Statex \qquad $p^\ast \gets \arg\min_{p \ge 0} H_M(p)$
\Statex \qquad \text{subject to:}
\Statex \qquad $\sum_{j=1}^{M} p_{j} = 1$ \Comment{Simplex constraint \eqref{eq:simplex}}
\Statex \qquad $\sum_{j=1}^{M} s_j p_{j} = F$ \Comment{Forward constraint \eqref{eq:discrete-forward}}
\Statex \qquad $\mathbf{M}^{\mathrm{comb}} p \leq C^{\mathrm{comb}}$ \Comment{No-arbitrage call prices \eqref{eq:discrete-call}}

\State \Return The optimal grid-probability pair $(s, p^\ast)$.
\end{algorithmic}
\end{algorithm}
\paragraph{Implementation Details}
In our numerical implementation, the convex program defined in Algorithm~\ref{alg:rnd_extraction} is solved with \texttt{Clarabel}, an open-source interior-point solver for convex conic optimization. Among the solvers we tested, it delivered the best practical performance in our experiments. One of its main advantages is that it handles convex optimization problems with quadratic objectives directly. It also supports exponential-cone constraints, which makes it possible to represent the entropy component in \eqref{eq:def-HM} in a numerically stable conic form.

\section{On the Selection of Parameters}
\label{subsec:selection-regularization-weights}
\subsection{Detailed Estimation of the Risk-Free Rate and Dividend Yield}
\label{appendix:procedure_rq}
\paragraph{Put--Call parity as an affine restriction in strike.}
Consider European options with maturity $T$ and spot price $S_0$, the absence of arbitrage opportunities implies the standard put--call parity relation:
\begin{equation}
    C(K) - P(K) = S_0 e^{-qT} - K e^{-rT},
\end{equation}
where $C(K)$ and $P(K)$ denote the prices of call and put options with strike $K$. This relationship can be reformulated as an affine function of the strike price. Denoting $Y(K) = C(K) - P(K)$ the value of the synthetic forward position, we get
\begin{equation}
    Y(K) = \beta_0 + \beta_1 K,
\end{equation}
where the coefficients $\beta:=(\beta_0, \beta_1)$ identify the financial parameters of interest:
\begin{equation}
    \beta_0 = S_0 e^{-qT} \quad \text{and} \quad \beta_1 = -e^{-rT}.
\end{equation}

\paragraph{Bid--ask bounds for the synthetic forward.}
Denote by $P_i^{\mathrm{bid}},P_i^{\mathrm{ask}}$ the best bid and ask quotes for a European put option with maturity $T$ and strike $K_i$. The no-arbitrage interval for the synthetic forward position is defined by the executable bounds:
\begin{equation}
\label{eq:synthetic_bounds}
B_i \;:=\; C_{i}^{\mathrm{bid}} - P_{i}^{\mathrm{ask}},
\qquad
A_i \;:=\; C_{i}^{\mathrm{ask}} - P_{i}^{\mathrm{bid}}.
\end{equation}
Let $\mathcal{S}_i \;:=\; A_i - B_i \;\ge 0 $ be the related spread and $Y_i^{\mathrm{mid}}$ its midpoint
\begin{equation}
\label{eq:synthetic_mid}
Y_i^{\mathrm{mid}} \;:=\; \frac{A_i+B_i}{2}=\frac{C_{i}^{\mathrm{ask}} + C_{i}^{\mathrm{bid}} }{2}-\frac{P_{i}^{\mathrm{ask}}+P_{i}^{\mathrm{bid}}}{2}.
\end{equation}
Theoretical consistency requires that the model value $y_i(\beta):=\beta_0+\beta_1 K_i$ lies within $[B_i,A_i]$ for all quoted strikes. 

\paragraph{Weighted Least Squares Estimation.}
We estimate $\beta$ by minimizing the distance to the executable bounds.
Consider the symmetric \textit{distance-to-bounds} loss
\(
\ell_i(\beta):=(y_i(\beta)-B_i)^2 + (y_i(\beta)-A_i)^2.
\)
A direct computation yields the identity
\begin{equation}
\label{eq:mid_equivalence}
\ell_i(\beta)
\;=\;
2\bigl(y_i(\beta)-Y_i^{\mathrm{mid}}\bigr)^2 + \frac{\mathcal{S}_i^2}{2}.
\end{equation}
Since the spread term $\mathcal{S}_i^2/2$ is independent of $\beta$, minimizing $\sum_i \omega_i \ell_i(\beta)$ is equivalent to minimizing the weighted squared error to the midpoints
\begin{equation}
\label{eq:wls_objective}
(\beta_0^*,\beta_1^*)
\;\in\;
\arg\min_{\beta_0, \beta_1}
\sum_{i\in\mathcal K_T}
\omega_i\,
\bigl(\beta_0+\beta_1 K_i - Y_i^{\mathrm{mid}}\bigr)^2,
\end{equation}
where $\mathcal K_T$ is the set of strikes simultaneously available for call and put options and $\omega_i$ are strike-specific weights.
Setting $\omega_i = 1$ for all $i\in\mathcal K_T$ yields the OLS case studied in \cite{van2022risk}.

\paragraph{Weighting Scheme.}
Put--call parity couples a call and a put at the same strike: thus, when the call at $K_i$ is out-of-the-money (OTM), the corresponding put is in-the-money (ITM), and conversely. As a result, parity-based estimates are constrained by the liquidity of the ITM leg, which typically exhibits low open interest, small quoted sizes, and wide bid--ask spreads compared to at-the-money (ATM) and OTM options. For very short-dated options, this liquidity drops off sharply for ITM options, their mid-quotes are therefore noisy proxies for an equilibrium price. By contrast, ATM and near OTM options concentrate trading activity and price discovery. In turn, an unweighted parity regression would overreact to low-information observations and yield unstable rates.

To anchor the regression on the most informative quotes while retaining sufficient strikes for estimating the parameters reliably, we use weights that combine (i) an inverse-spread component and (ii) an at-the-money proximity component:
\begin{equation}
\label{eq:weights_components}
\eta_{\mathrm{spr},i} \;:=\; \frac{1}{\mathcal{S}_i},
\qquad
\eta_{\mathrm{atm},i} \;:=\; \frac{1}{|{K_i}/{S_0}-1|+\upsilon},
\end{equation}
where $\upsilon>0$ is a small scale parameter chosen as the ratio between the lowest strike increment (typically 5 points for S\&P 500) and the spot price. We then normalize each component and combine them through a convex combination:
\begin{equation}
\label{eq:weights}
\omega_i
\;:=\;
(1-\alpha)\,\frac{\eta_{\mathrm{spr},i}}{\max_{j\in\mathcal K_T}\eta_{\mathrm{spr},j}}
\;+\;
\alpha\,\frac{\eta_{\mathrm{atm},i}}{\max_{j\in\mathcal K_T}\eta_{\mathrm{atm},j}},
\qquad \alpha\in[0,1].
\end{equation}
This construction\footnote{In the numerical experiments, we set $\alpha=0.5$.} yields the intended ranking: (i) deep ITM strikes are double-penalized (large $|K_i-S_0|$ and typically large $\mathcal{S}_i$), (ii) far OTM strikes are penalized primarily through moneyness, and (iii) near-the-money strikes receive the highest weight. 

\paragraph{Parameter Recovery and Constraints.}
The optimization \eqref{eq:wls_objective} is constrained by $0<\beta_0\leq S_0$ and $-1 \le \beta_1 < 0$ to guarantee non-negative interest rate and yield. The parameters are recovered as:
\begin{equation}
\label{eq:rq_recovery}
    {r}^* = -\frac{1}{T} \ln(-\hat{\beta}_1^*) \quad \text{and} \quad {q}^* = -\frac{1}{T} \ln\left(\frac{\hat{\beta}_0^*}{S_0}\right).
\end{equation}

Finally, we address the specific case of ultra-short-dated options. As the time to expiration approaches zero, the mapping \eqref{eq:rq_recovery} becomes ill-conditioned: a tiny estimation error in $\beta_1$ can translate into a large annualized $r$ due to the factor $1/T$ (and similarly for $\beta_0$ and $q$). In this regime, discounting effects are negligible relative to bid--ask spreads. Therefore, for maturities shorter than one trading day, we set $r=q=0$ to preserve numerical stability.\\

The hybrid regularization procedure introduced in Section \ref{subsub:methodo_rnd} depends on three key numerical parameters: the regularization weights $\lambda_{1}$ and $\lambda_{2}$, which respectively control quadratic smoothness and entropy-based regularization, and the mesh size $\Delta s$ used for the discretization of the state space. In the next sections, we use a simple \citet{BlackScholes1973} toy model to derive explicit orders of magnitude for these quantities. This provides practical guidelines for the choice of $(\lambda_{1},\lambda_{2})$ and $\Delta s$ in the short-maturity regime. We work under the risk-neutral measure, and assume that the underlying price process $(S_{t})_{t\in[0,T]}$ follows the Black--Scholes dynamics. Without loss of generality we set $S_{0}=1$, and we denote by $\sigma$ the (constant) volatility. Then
\[
S_{T} \sim \text{Lognormal}\bigl(\mu,\tilde{\sigma}^{2}\bigr),
\qquad
\mu = \bigl(r - q - \tfrac{1}{2}\sigma^{2}\bigr)T,
\qquad
\tilde{\sigma}^{2} = \sigma^{2} T,
\]
where $\tilde{\sigma}^{2}$ is the total variance of $\log S_{T}$.

\subsection{Regularization Weights}
\label{subsec:reg_weight}
The density $g_{S_{T}}$ of $S_{T}$ is given by
\begin{equation}
\label{eq:density-lognormal}
g_{S_{T}}(y)
= \frac{1}{y \tilde{\sigma} \sqrt{2\pi}}
\exp\!\left(
  - \frac{(\ln y - \mu)^{2}}{2\tilde{\sigma}^{2}}
\right).
\end{equation}
Differentiating \eqref{eq:density-lognormal} gives
\begin{equation}
\label{eq:derivative-density-lognormal}
g_{S_{T}}'(y)
= - \frac{g_{S_{T}}(y)}{y}
\left(
  1 + \frac{\ln y - \mu}{\tilde{\sigma}^{2}}
\right),
\qquad y>0.
\end{equation}

\paragraph{Order of Magnitude of the Quadratic Term.}
We first compute the squared $L^{2}$-norm of $g_{S_{T}}'$,
\[
\| g_{S_{T}}' \|_{2}^{2}
= \int_{0}^{+\infty} \bigl(g_{S_{T}}'(y)\bigr)^{2} \, dy.
\]

\begin{proposition}
\label{prop:L2-norm-gprime-lognormal}
Let $S_{T}$ be lognormally distributed as in \eqref{eq:density-lognormal}. Then the squared $L^{2}$-norm of its density derivative satisfies
\begin{equation}
\label{eq:L2-gprime-exact}
\bigl\| g_{S_{T}}' \bigr\|_{2}^{2}
=
\frac{\exp\!\left(-3\mu + \tfrac{9}{4}\tilde{\sigma}^{2}\right)}{8\sqrt{\pi}}
\frac{\tilde{\sigma}^{2} + 2}{\tilde{\sigma}^{3}}.
\end{equation}
In particular, as $\tilde{\sigma}\to 0$,
\begin{equation}
\label{eq:L2-gprime-asympt}
\bigl\| g_{S_{T}}' \bigr\|_{2}^{2}
=
\frac{1}{4\sqrt{\pi}}\,\tilde{\sigma}^{-3} + O(\tilde{\sigma}^{-2}).
\end{equation}
\end{proposition}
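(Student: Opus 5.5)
The plan is a direct computation: pass to log-space, where the lognormal density becomes a Gaussian, and then complete the square. Starting from the expression \eqref{eq:derivative-density-lognormal} for $g_{S_T}'$, we write
\[
\bigl(g_{S_T}'(y)\bigr)^2=\frac{g_{S_T}(y)^2}{y^2}\Bigl(1+\tfrac{\ln y-\mu}{\tilde\sigma^2}\Bigr)^2 .
\]
We then substitute $y=e^{x}$, so that $dy=e^{x}dx$, and use $g_{S_T}(e^x)=\phi(x)e^{-x}$, where $\phi$ is the $\mathcal N(\mu,\tilde\sigma^2)$ density. This turns the integral into
\[
\int_{\R}\phi(x)^2e^{-3x}\Bigl(1+\tfrac{x-\mu}{\tilde\sigma^2}\Bigr)^2dx .
\]

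Next we standardize with $z=(x-\mu)/\tilde\sigma$. Since $\phi(x)^2=(2\pi\tilde\sigma^2)^{-1}e^{-z^2}$, $dx=\tilde\sigma\,dz$ and $e^{-3x}=e^{-3\mu}e^{-3\tilde\sigma z}$, the integral becomes
\[
\frac{e^{-3\mu}}{2\pi\tilde\sigma}\int_{\R}e^{-z^2-3\tilde\sigma z}\bigl(1+z/\tilde\sigma\bigr)^2dz .
\]
Completing the square gives $-z^2-3\tilde\sigma z=-(z+\tfrac32\tilde\sigma)^2+\tfrac94\tilde\sigma^2$. We shift $w=z+\tfrac32\tilde\sigma$, under which the polynomial factor becomes $\bigl(w/\tilde\sigma-\tfrac12\bigr)^2$. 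Expanding this square, the odd term integrates to zero against $e^{-w^2}$. The remaining Gaussian moments are $\int e^{-w^2}dw=\sqrt\pi$ and $\int w^2e^{-w^2}dw=\sqrt\pi/2$, so the integral equals
\[
\sqrt\pi\Bigl(\frac{1}{2\tilde\sigma^2}+\frac14\Bigr)=\sqrt\pi\,\frac{\tilde\sigma^2+2}{4\tilde\sigma^2}.
\]
Multiplying by the prefactor $e^{-3\mu+\frac94\tilde\sigma^2}/(2\pi\tilde\sigma)$ yields exactly \eqref{eq:L2-gprime-exact}.

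For the asymptotic statement \eqref{eq:L2-gprime-asympt}, we note that in the short-maturity regime $\mu=(r-q-\tfrac12\sigma^2)T$ with $T=\tilde\sigma^2/\sigma^2$. Hence $\mu=O(\tilde\sigma^2)$, and the exponential factor is $1+O(\tilde\sigma^2)$. Expanding $(\tilde\sigma^2+2)/\tilde\sigma^3=2\tilde\sigma^{-3}+\tilde\sigma^{-1}$ gives the leading term $\tfrac{1}{4\sqrt\pi}\tilde\sigma^{-3}$ with a remainder of order $O(\tilde\sigma^{-1})$, which is in particular $O(\tilde\sigma^{-2})$.

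No step is genuinely hard. The only places needing care are tracking the Jacobian factors, to get $e^{-3x}$ and not $e^{-4x}$, and stating explicitly that $\mu$ scales like $\tilde\sigma^2$, with $\sigma$, $r$, $q$ held fixed, so that the exponential prefactor does not alter the leading order.
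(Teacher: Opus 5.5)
Your proposal is correct and follows essentially the same route as the paper: a log substitution that produces the $e^{-3t}$ weight, completing the square with the shift by $\tfrac32\tilde\sigma^{2}$ (your $\tfrac32\tilde\sigma$ after standardizing), and evaluating the resulting Gaussian second moment, which the paper phrases as an expectation under $\mathcal{N}(m,\tilde\sigma^{2}/2)$. You also state explicitly that $\mu=O(\tilde\sigma^{2})$ is needed so the exponential prefactor does not change the leading order; the paper leaves this implicit when it says the asymptotics ``follow directly''.
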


\begin{proof}
By \eqref{eq:derivative-density-lognormal} we have
\begin{equation}
\label{eq:L2-gprime-integral}
\bigl\| g_{S_{T}}' \bigr\|_{2}^{2}
=
\int_{0}^{+\infty}
\frac{g_{S_{T}}(y)^{2}}{y^{2}}
\left(
  1 + \frac{\ln y - \mu}{\tilde{\sigma}^{2}}
\right)^{2} dy.
\end{equation}
Setting $t = \ln y$, we get
\begin{align}
\bigl\| g_{S_{T}}' \bigr\|_{2}^{2}
&=
\int_{\mathbb{R}}
\frac{g_{S_{T}}(e^{t})^{2}}{e^{t}}
\left(
  1 + \frac{t - \mu}{\tilde{\sigma}^{2}}
\right)^{2} dt
\nonumber\\
&=
\frac{1}{2\pi\tilde{\sigma}^{2}}
\int_{\mathbb{R}}
\exp\!\left(
  -3t - \frac{(t-\mu)^{2}}{\tilde{\sigma}^{2}}
\right)
\left(
  1 + \frac{t - \mu}{\tilde{\sigma}^{2}}
\right)^{2}
dt.
\label{eq:L2-gprime-intermediate}
\end{align}
Defining
\[
m = \mu - \frac{3}{2}\tilde{\sigma}^{2},
\]
a direct computation shows that
\begin{equation}
\label{eq:completion-square}
\frac{(t-\mu)^{2}}{\tilde{\sigma}^{2}} + 3t
=
\frac{(t-m)^{2}}{\tilde{\sigma}^{2}}
+ 3\mu - \frac{9}{4}\tilde{\sigma}^{2},
\end{equation}
so that \eqref{eq:L2-gprime-intermediate} becomes
\begin{equation}
\label{eq:L2-gprime-m}
\bigl\| g_{S_{T}}' \bigr\|_{2}^{2}
=
\exp\!\left(-3\mu + \frac{9}{4}\tilde{\sigma}^{2}\right)
\frac{1}{2\pi\tilde{\sigma}^{2}}
\int_{\mathbb{R}}
\exp\!\left(
  - \frac{(t-m)^{2}}{\tilde{\sigma}^{2}}
\right)
\left(
  1 + \frac{t - m - \tfrac{3}{2}\tilde{\sigma}^{2}}{\tilde{\sigma}^{2}}
\right)^{2}
dt.
\end{equation}
We now rewrite the integral in terms of a Gaussian variable $X \sim \mathcal{N}\!\left(m,\tfrac{\tilde{\sigma}^{2}}{2}\right)$.
From \eqref{eq:L2-gprime-m}, we deduce that
\begin{equation}
\label{eq:L2-gprime-expectation}
\bigl\| g_{S_{T}}' \bigr\|_{2}^{2}
=
\exp\!\left(-3\mu + \frac{9}{4}\tilde{\sigma}^{2}\right)
\frac{1}{2\tilde{\sigma}\sqrt{\pi}}
\,
\mathbb{E}\left[
\left(
  -\frac{1}{2} + \frac{X - m}{\tilde{\sigma}^{2}}
\right)^{2}
\right].
\end{equation}
Using the moments of a Gaussian distribution and simplifying the resulting expression, one obtains
\[
\mathbb{E}\left[
\left(
  -\frac{1}{2} + \frac{X - m}{\tilde{\sigma}^{2}}
\right)^{2}
\right]
=
\frac{\tilde{\sigma}^{2} + 2}{4\tilde{\sigma}^{2}},
\]
so that \eqref{eq:L2-gprime-exact} follows from \eqref{eq:L2-gprime-expectation}. The asymptotic behavior in \eqref{eq:L2-gprime-asympt} as $\tilde{\sigma}\to 0$ follows directly.
\end{proof}

\paragraph{Order of Magnitude of the Entropy Term.}
We use the differential entropy $S(\cdot)$ introduced in \eqref{eq:def-H}. For the lognormal density $g_{S_{T}}$ in \eqref{eq:density-lognormal}, a standard computation yields
\begin{equation}
\label{eq:entropy-lognormal}
S\bigl(g_{S_{T}}\bigr)
=
\frac{1}{2}
+ \frac{1}{2}\ln\bigl(2\pi\tilde{\sigma}^{2}\bigr)
+ \mu.
\end{equation}
In the regime $\tilde{\sigma}\to 0$ we therefore have
\begin{equation}
\label{eq:entropy-asympt}
- S\bigl(g_{S_{T}}\bigr)
\sim -\ln(\tilde{\sigma}),
\qquad
\tilde{\sigma}\to 0.
\end{equation}

\begin{remark}
\label{rem:total-variance-small}
The asymptotic regime $\tilde{\sigma} \to 0$ should be interpreted as a regime of small total variance. This is consistent with the parameter range relevant for short-dated options: typical volatilities are in the range $[15\%,40\%]$ and maturities span from one day to one week, so that $\tilde{\sigma}^{2} = \sigma^{2}T$ is small. In this regime, the approximations \eqref{eq:L2-gprime-asympt} and \eqref{eq:entropy-asympt} provide a reasonable description of the relative contributions of the quadratic and entropy terms.
\end{remark}

\paragraph{Balancing the Quadratic and Entropy Contributions}
In order for the quadratic and entropy parts to contribute at the same order of magnitude in our hybrid procedure, we require
\[
\lambda_{1} \bigl\| g_{S_{T}}' \bigr\|_{2}^{2}
\asymp
-\lambda_{2}S\bigl(g_{S_{T}}\bigr)
\qquad\text{as}\quad \tilde{\sigma}\to 0.
\]
Combining \eqref{eq:L2-gprime-asympt} and \eqref{eq:entropy-asympt}, we obtain the asymptotic relation
\begin{equation}
\label{eq:lambda-ratio-scaling}
\frac{\lambda_{1}}{\lambda_{2}}
\sim
-4\sqrt{\pi}\,\tilde{\sigma}^{3}\,\ln(\tilde{\sigma}),
\qquad
\tilde{\sigma}\to 0.
\end{equation}
This relation provides a practical guideline for choosing $(\lambda_{1},\lambda_{2})$ when the total variance $\tilde{\sigma}^{2}$ is small.

\subsection{Mesh Size}
\label{subsec:mesh_size}
We now turn to the discretized version of the problem, as defined in \eqref{eq:def-HM}. We recall that the continuous domain $\mathcal{X}$ is approximated by a uniform grid with mesh size $\Delta s = b/M$. 
Within this toy model, we relate the choice of $M$ to the  distribution of $S_{T}$. Let $F_{S_{T}}$ denote the distribution function of $S_{T}$ and $F_{S_{T}}^{-1}$ its quantile function:
\begin{equation}
\label{eq:lognormal-quantile}
F_{S_{T}}^{-1}(p)
=
\exp\!\bigl(
  \mu + \tilde{\sigma}\ \Phi^{-1}(p)
\bigr),
\qquad p\in(0,1),
\end{equation}
where $\Phi^{-1}$ is the quantile function of a standard normal distribution.

Fix a small probability increment $\delta \in (0,1)$, and define, for $p\in(\tfrac{\delta}{2}, 1-\tfrac{\delta}{2})$, the relative variation
\begin{equation}
\label{eq:relative-variation-def}
\mathcal{R}(p)
=
\frac{
  F_{S_{T}}^{-1}\bigl(p + \tfrac{\delta}{2}\bigr)
  - F_{S_{T}}^{-1}\bigl(p - \tfrac{\delta}{2}\bigr)
}{
  F_{S_{T}}^{-1}\bigl(p - \tfrac{\delta}{2}\bigr)
}.
\end{equation}
This quantity corresponds to the relative change in the underlying level when the cumulative probability increases from $p-\delta/2$ to $p+\delta/2$.

\begin{lemma}
\label{lem:relative-variation-minimizer}
The function $\mathcal{R}$ defined in \eqref{eq:relative-variation-def} admits a unique global minimizer at
\[
p^{*} = \frac{1}{2}.
\]
In other words, the relative variation is minimal when the probability interval is centered at the median of the lognormal distribution.
\end{lemma}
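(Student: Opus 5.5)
The plan is to substitute the lognormal quantile \eqref{eq:lognormal-quantile} directly into \eqref{eq:relative-variation-def}. The factor $e^{\mu}$ cancels, which gives
\[
\mathcal{R}(p)=\exp\!\Bigl(\tilde\sigma\bigl[\Phi^{-1}(p+\tfrac{\delta}{2})-\Phi^{-1}(p-\tfrac{\delta}{2})\bigr]\Bigr)-1 .
\]
Since $x\mapsto e^{\tilde\sigma x}-1$ is strictly increasing, minimizing $\mathcal{R}$ over $(\tfrac{\delta}{2},1-\tfrac{\delta}{2})$ is equivalent to minimizing
\[
G(p):=\Phi^{-1}(p+\tfrac{\delta}{2})-\Phi^{-1}(p-\tfrac{\delta}{2}).
\]
Note that $\mu$ and $\tilde\sigma$ drop out entirely, so the lognormal skew plays no role in the location of the minimizer.

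For $G$, I would use $(\Phi^{-1})'(u)=1/\varphi(\Phi^{-1}(u))$, where $\varphi$ is the standard normal density. This shows $\Phi^{-1}$ is smooth and strictly convex on $[\tfrac12,1)$ and strictly concave on $(0,\tfrac12]$: its derivative $\sqrt{2\pi}\,e^{z^2/2}$ evaluated at $z=\Phi^{-1}(u)$ is strictly increasing in $|z|$. Differentiating gives
\[
G'(p)=h(p+\tfrac{\delta}{2})-h(p-\tfrac{\delta}{2}),\qquad h(u):=\frac{1}{\varphi(\Phi^{-1}(u))}.
\]
By symmetry of $\varphi$ we have $h(u)=h(1-u)$, and $h$ is strictly decreasing on $(0,\tfrac12]$ and strictly increasing on $[\tfrac12,1)$.

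The sign analysis of $G'$ is the only delicate point. For $p>\tfrac12$ the point $p+\tfrac{\delta}{2}$ lies farther from $\tfrac12$ than $p-\tfrac{\delta}{2}$: if $p-\tfrac{\delta}{2}\ge\tfrac12$ this is immediate, and otherwise $|p-\tfrac{\delta}{2}-\tfrac12|=\tfrac12-p+\tfrac{\delta}{2}<p+\tfrac{\delta}{2}-\tfrac12$. Since $h$ is a strictly increasing function of $|u-\tfrac12|$, it follows that $G'(p)>0$. By the symmetric argument, or directly from $G(p)=G(1-p)$, which follows from $\Phi^{-1}(1-u)=-\Phi^{-1}(u)$, we get $G'(p)<0$ for $p<\tfrac12$, and $G'(\tfrac12)=0$.

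So $G$ is strictly decreasing on $(\tfrac{\delta}{2},\tfrac12]$ and strictly increasing on $[\tfrac12,1-\tfrac{\delta}{2})$. Hence $p^*=\tfrac12$ is the unique global minimizer of $G$, and therefore of $\mathcal{R}$. Near the endpoints of the domain $G$ tends to $+\infty$, which is consistent with this but not needed, since strict monotonicity on each side already gives uniqueness.
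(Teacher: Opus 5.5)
Your proof is correct and follows the same route as the paper. You cancel $e^{\mu}$, reduce to minimizing $\Phi^{-1}(p+\tfrac{\delta}{2})-\Phi^{-1}(p-\tfrac{\delta}{2})$ by monotonicity of $x\mapsto e^{\tilde\sigma x}-1$, and differentiate using $(\Phi^{-1})'=1/\varphi\circ\Phi^{-1}$. Your sign analysis of $G'$, which compares the distances of $p\pm\tfrac{\delta}{2}$ to $\tfrac12$ and uses $G(p)=G(1-p)$, supplies the justification for the strict monotonicity on each side of $\tfrac12$ that the paper only asserts after stating the first-order condition.
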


\begin{proof}
Using the expression \eqref{eq:lognormal-quantile}, we may rewrite $\mathcal{R}(p)$ as
\begin{align}
\mathcal{R}(p)
&=
\frac{
  \exp\!\bigl(\mu + \tilde{\sigma}\,\Phi^{-1}(p + \tfrac{\delta}{2})\bigr)
  - \exp\!\bigl(\mu + \tilde{\sigma}\,\Phi^{-1}(p - \tfrac{\delta}{2})\bigr)
}{
  \exp\!\bigl(\mu + \tilde{\sigma}\,\Phi^{-1}(p - \tfrac{\delta}{2})\bigr)
}
\nonumber\\
&=
\exp\!\left(
  \tilde{\sigma}\bigl[
    \Phi^{-1}\!\left(p + \tfrac{\delta}{2}\right)
    - \Phi^{-1}\!\left(p - \tfrac{\delta}{2}\right)
  \bigr]
\right)
- 1.
\label{eq:R-as-exp}
\end{align}
Since the exponential function is strictly increasing and $\tilde{\sigma}>0$, minimizing $\mathcal{R}(p)$ is equivalent to minimizing
\begin{equation}
\label{eq:r-def}
r(p)
=
\Phi^{-1}\!\left(p + \tfrac{\delta}{2}\right)
-
\Phi^{-1}\!\left(p - \tfrac{\delta}{2}\right),
\qquad
p\in\bigl(\tfrac{\delta}{2}, 1-\tfrac{\delta}{2}\bigr).
\end{equation}
The function $r$ is differentiable on its domain. Using the identity
\[
\bigl(\Phi^{-1}\bigr)'(u)
=
\frac{1}{\varphi\bigl(\Phi^{-1}(u)\bigr)},
\]
where $\varphi$ is the standard normal density, we obtain
\begin{equation}
\label{eq:r-derivative}
r'(p)
=
\frac{1}{\varphi\bigl(\Phi^{-1}(p + \tfrac{\delta}{2})\bigr)}
-
\frac{1}{\varphi\bigl(\Phi^{-1}(p - \tfrac{\delta}{2})\bigr)}.
\end{equation}
The first-order optimality condition $r'(p)=0$ is therefore equivalent to
\begin{equation}
\label{eq:equal-densities}
\varphi\bigl(\Phi^{-1}(p + \tfrac{\delta}{2})\bigr)
=
\varphi\bigl(\Phi^{-1}(p - \tfrac{\delta}{2})\bigr).
\end{equation}
Finally, the function $r$ is strictly decreasing on $(\tfrac{\delta}{2}, 1/2]$ and strictly increasing on $[1/2, 1-\tfrac{\delta}{2})$, which shows that $p^{*}=1/2$ is the unique global minimizer of $r$, and therefore of $\mathcal{R}$.
\end{proof}

We now characterize the minimal relative variation in the regime where the increment $\delta$ is small.

\begin{proposition}
\label{prop:R-min-asympt}
In the setting of Lemma~\ref{lem:relative-variation-minimizer}, the minimal value of $\mathcal{R}$ satisfies
\begin{equation}
\label{eq:R-min-asympt}
\mathcal{R}(p^{*})
\sim
\tilde{\sigma}\,\sqrt{2\pi}\,\delta,
\qquad
\delta \to 0.
\end{equation}
\end{proposition}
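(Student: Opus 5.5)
By Lemma~\ref{lem:relative-variation-minimizer}, the minimum is attained at $p^*=1/2$. The plan is therefore to evaluate $\mathcal{R}(1/2)$ explicitly through \eqref{eq:R-as-exp} and then expand it as $\delta\to 0$ with $\tilde{\sigma}$ fixed. Substituting $p=1/2$ gives
\[
\mathcal{R}(p^*)=\exp\!\bigl(\tilde{\sigma}\,r(1/2)\bigr)-1,
\qquad
r(1/2)=\Phi^{-1}\!\bigl(\tfrac12+\tfrac{\delta}{2}\bigr)-\Phi^{-1}\!\bigl(\tfrac12-\tfrac{\delta}{2}\bigr)=2\,\Phi^{-1}\!\bigl(\tfrac12+\tfrac{\delta}{2}\bigr).
\]
The last equality uses the symmetry $\Phi^{-1}(1-u)=-\Phi^{-1}(u)$ of the standard normal quantile.

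Next I would expand $\Phi^{-1}$ near $1/2$. Since $\Phi^{-1}(1/2)=0$ and $(\Phi^{-1})'(u)=1/\varphi(\Phi^{-1}(u))$, we have $(\Phi^{-1})'(1/2)=1/\varphi(0)=\sqrt{2\pi}$. Because $\Phi^{-1}$ is $C^1$ near $1/2$, a first-order Taylor expansion gives
\[
\Phi^{-1}\!\bigl(\tfrac12+\tfrac{\delta}{2}\bigr)=\sqrt{2\pi}\,\tfrac{\delta}{2}+o(\delta).
\]
In fact $\Phi^{-1}$ is odd about $1/2$, so the error is $O(\delta^3)$, but $o(\delta)$ is all that is needed. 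Hence $r(1/2)=\sqrt{2\pi}\,\delta+o(\delta)\to 0$.

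Finally I would apply $e^x-1\sim x$ as $x\to 0$, with $x=\tilde{\sigma}\,r(1/2)$. This yields $\mathcal{R}(p^*)\sim\tilde{\sigma}\,r(1/2)\sim\tilde{\sigma}\sqrt{2\pi}\,\delta$, which is exactly \eqref{eq:R-min-asympt}.

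None of these steps presents a real obstacle. The only point needing care is that the equivalence holds for fixed $\tilde{\sigma}>0$ as $\delta\to 0$: the error term $\tfrac12\tilde{\sigma}^2 r(1/2)^2$ from the exponential is then negligible relative to $\tilde{\sigma}\,r(1/2)$. I would state this explicitly, since it is this statement, $\Delta s \approx \tilde{\sigma}\sqrt{2\pi}\,\delta$, that motivates the target mesh \eqref{eq:target-mesh-size}.
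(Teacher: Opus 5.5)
Your proposal is correct and follows essentially the same route as the paper's proof: you use the symmetry to get $r(1/2)=2\,\Phi^{-1}(\tfrac12+\tfrac{\delta}{2})$, then a first-order Taylor expansion of the quantile function at $1/2$, then $e^x-1\sim x$. You add a few details the paper leaves implicit, namely the explicit value $(\Phi^{-1})'(1/2)=\sqrt{2\pi}$ and the remark that the equivalence is taken with $\tilde{\sigma}$ fixed.
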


\begin{proof}
From \eqref{eq:R-as-exp} and \eqref{eq:r-def} we have
\[
\mathcal{R}(p^{*})
=
\exp\!\bigl(\tilde{\sigma}\,r(p^{*})\bigr) - 1,
\qquad
p^{*} = \frac{1}{2}.
\]
By symmetry,
\[
r(p^{*})
=
\Phi^{-1}\!\left(\tfrac{1}{2} + \tfrac{\delta}{2}\right)
-
\Phi^{-1}\!\left(\tfrac{1}{2} - \tfrac{\delta}{2}\right)
=
2\,\Phi^{-1}\!\left(\tfrac{1}{2} + \tfrac{\delta}{2}\right).
\]
For $\delta$ small we may use a first-order Taylor expansion of the quantile function around $1/2$:
\[
\Phi^{-1}\!\left(\tfrac{1}{2} + \tfrac{\delta}{2}\right)
\sim
\frac{\delta}{2}\sqrt{2\pi},
\qquad
\delta\to 0.
\]
Hence
\[
r(p^{*})
\sim
\sqrt{2\pi}\,\delta,
\qquad
\delta\to 0.
\]
Substituting into $\mathcal{R}(p^{*}) = \exp(\tilde{\sigma} r(p^{*})) - 1$ and using the expansion $\exp(x)-1\sim x$ as $x\to 0$ yields
\[
\mathcal{R}(p^{*})
\sim
\tilde{\sigma}\,\sqrt{2\pi}\,\delta,
\qquad
\delta\to 0,
\]
which is \eqref{eq:R-min-asympt}.
\end{proof}

\begin{remark}
\label{rem:numerical-example}
As an illustration, consider $\delta = 0.5\%$, $\sigma=15\%$ and a one-day maturity $T = 1/365$. Then the total volatility is
\[
\tilde{\sigma} = \sigma\sqrt{T}
\approx 0.0079,
\]
so that \eqref{eq:R-min-asympt} yields a minimal relative variation of order
\[
\mathcal{R}(p^{*})
\approx \tilde{\sigma}\,\sqrt{2\pi}\,\delta
\approx 10^{-4}.
\]
Choosing a uniform grid with step $\Delta s = \mathcal{R}(p^{*})$ leads to a number of grid points $M$ of order $10^{4}$ in \eqref{eq:def-HM}. For the same parameter set, the scaling \eqref{eq:lambda-ratio-scaling} implies that the ratio $\lambda_{1}/\lambda_{2}$ is of order $10^{-5}$. These orders of magnitude are consistent with the numerical experiments reported in \Cref{sec:numerical_results}.
\end{remark}
\end{document}